\newtheorem{remark}{\textbf{Remark}}
\newtheorem{theorem}{\textbf{Theorem}}
\newtheorem{lemma}{\textbf{Lemma}}
\g@addto@macro{\normalsize}{%
	\setlength{\abovedisplayskip}{3pt plus 2pt minus 2pt}
	\setlength{\abovedisplayshortskip}{3pt plus 2pt minus 2pt}
	\setlength{\belowdisplayskip}{3pt plus 2pt minus 2pt}
	\setlength{\belowdisplayshortskip}{3pt plus 2pt minus 2pt}
	\setlength{\textfloatsep}{10pt plus 2pt minus 2pt}
}
\begin{document}

\title{Design and Analysis of Initial Access in Millimeter Wave Cellular Networks}

\author{ Yingzhe Li, Jeffrey G. Andrews, Fran\c{c}ois Baccelli, Thomas D. Novlan, Jianzhong Charlie Zhang \thanks{Y. Li, J. G. Andrews and F. Baccelli are with the Wireless Networking and Communications Group (WNCG), The University of Texas at Austin (email: yzli@utexas.edu, jandrews@ece.utexas.edu, francois.baccelli@austin.utexas.edu). T. Novlan is with AT\&T Labs (email: tdnovlan@utexas.edu). J. Zhang is with Samsung Research America-Dallas (email: jianzhong.z@samsung.com). Part of this paper was presented at IEEE Globecom 2016, workshop on emerging technologies for 5G wireless cellular networks~\cite{li2016mmWIAGC}. Date revised: \today. }}

\maketitle

\begin{abstract}
	Initial access is the process which allows a mobile user to first connect to a cellular network.  It consists of two main steps: cell search (CS) on the downlink and random access (RA) on the uplink. Millimeter wave (mmWave) cellular systems typically must rely on directional beamforming (BF) in order to create a viable connection. The beamforming direction must therefore be learned -- as well as used -- in the initial access process for mmWave cellular networks.  This paper considers four simple but representative initial access protocols that use various combinations of directional beamforming and omnidirectional transmission and reception at the mobile and the BS, during the CS and RA phases. We provide a system-level analysis of the success probability for CS and RA for each one, as well as of the initial access delay and user-perceived downlink throughput (UPT).  For a baseline exhaustive search protocol, we find the optimal BS beamwidth and observe that in terms of initial access delay it is decreasing as blockage becomes more severe, but is relatively constant (about $\pi/12$) for UPT. Of the considered protocols, the best trade-off between initial access delay and UPT is achieved under a fast cell search protocol. 
\end{abstract}

\section{Introduction}
Initial access refers to the procedures that establish an initial connection between a mobile user and the cellular network, and is a critical prerequisite for any subsequent communication.  The design of initial access is a central challenge for mmWave cellular systems relative to existing cellular systems, for two main reasons.  First, mmWave links generally require high directionality (i.e. large antenna gain) to achieve a sufficient signal-to-noise ratio (SNR)~\cite{andrews2014will,pi2011introduction,rappaport2013millimeter,ghosh2014millimeter}. But the mobile and the BS have no idea what directions to use upon initial access. Thus, they must search over a large beamforming space to find each other, which is time consuming. Second, because mmWave links are vulnerable to blocking and falling out of beam alignment, initial access will need to be done much more frequently than in conventional systems. 
Because of the importance of initial access for mmWave cellular systems, significant effort is currently underway to design efficient protocols. In this paper, we develop a general analytical framework and detailed performance analysis for initial access in a mmWave cellular system, by leveraging stochastic geometry \cite{stochtutorial,baccelli2010stochastic,chiu2013stochastic,trac, mukherjee2012distribution}. We believe the analytical tools developed in this paper can be extended to a wide variety of initial access protocols, to provide a useful complement to simulation-based evaluations and to help optimize key parameters such as the number of beams.

\subsection{Prior Work}
Initial access, for mmWave specifically, has been investigated by a few standard organizations in recent years~\cite{ieee2012IEEEad,Nitsche2014IEEE,Verizon20165G2}. The IEEE 802.11ad standard adopted a two level initial beamforming training protocol for the 60 GHz band, where a coarse-grained sector level sweep phase is followed by an optional beam refinement phase~\cite{ieee2012IEEEad,Nitsche2014IEEE}. However, IEEE 802.11ad is mainly designed for indoor communications within an ad hoc type network. 
The Verizon 5G forum~\cite{Verizon20165G2} has created technical specifications for early mmWave cellular systems, where beam sweeping is applied by the BSs during cell search, and the beam reference signal (BRS) is transmitted to enable the users to determine appropriate BS beamforming directions.

Despite the standardization of initial access for mmWave cellular networks is still in its early stages, several recent research efforts have investigated this problem~\cite{Jeong2016random,Desai2014initial,Giordani2016comparative,Raghavan2016beamforming,barati2015directional,Shokri2015Millimeter}. 
An exhaustive procedure to sequentially search all the possible transmit-receive beam pairs has been proposed in~\cite{Jeong2016random}. A hierarchical search procedure is proposed in~\cite{Desai2014initial}, where the BS first performs an exhaustive search over wide beams, then refines to search narrow beams. The exhaustive and hierarchical strategies are compared in~\cite{Giordani2016comparative}, which shows that hierarchical search generally has smaller initial access delay, but exhaustive search gives better coverage to cell-edge users. 
By adapting limited feedback-type directional codebooks, a low-complexity beamforming approach for initial user discovery is proposed in~\cite{Raghavan2016beamforming}. Several initial access options with different modifications to LTE initial access procedures 
are proposed in~\cite{barati2015directional}, which has observed that the initial access delay can be reduced by omni-directional transmission from the BSs during cell search. 
A cell search procedure that leverages synchronization from the macro BSs, followed by sequential spatial search from the mmWave BSs, is shown to enhance the initial access efficiency~\cite{Shokri2015Millimeter}. 

All the aforementioned works are either a point-to-point analysis or only consider one user with a few nearby BSs and, 
a system-level analysis of initial access in mmWave networks
has yet to be offered. In recent years, stochastic geometry has been recognized as a powerful mathematical tool to analyze performance of large-scale mmWave cellular networks~\cite{akoum2012covrage,singh2015tractable,Jihong2016Tractable,bai2015coverage,alkhateeb2016initial,DiRenzo2015Stochastic}. By incorporating directional beamforming without capturing the blockage effects, \cite{akoum2012covrage} shows mmWave network can achieve comparable coverage and much higher data rate than conventional microwave networks. Similar performance gains of mmWave networks have been observed when statistical blockage models are used, such as a line-of-sight (LOS) ball blockage model~\cite{bai2015coverage,singh2015tractable,Jihong2016Tractable}, an exponential decreasing LOS probability function with respect to (w.r.t.) the link length~\cite{bai2015coverage,alkhateeb2016initial}, or a blockage model which also incorporates an outage state~\cite{DiRenzo2015Stochastic}. 
However,~\cite{akoum2012covrage,bai2015coverage,singh2015tractable,Jihong2016Tractable,alkhateeb2016initial,DiRenzo2015Stochastic} all assume the association between user and its serving BS has already been established, while in fact the initial access is a key challenge and performance limiting factor for mmWave networks. 

\subsection{Contributions}
In this work, four initial access protocols are investigated, including a baseline exhaustive search protocol wherein BS and user sweep through all transmit-receive beam pairs during cell search, and three protocols that require less overhead. Our main contributions are as follows:

\textbf{Accurate analytical framework for mmWave system-level performance under various initial access protocols.} Different from the link-level analysis in~\cite{Jeong2016random,Desai2014initial,Giordani2016comparative,Raghavan2016beamforming,barati2015directional}, we derive several system-level performance metrics in mmWave cellular network for the first time, including the expected initial access delay, and a new metric called average user-perceived downlink throughput which quantifies the effect of the initial access protocol on the user-perceived throughput performance. Our analytical results are validated against detailed system level simulations.

\textbf{Beam sweeping is shown to be essential for cell search.} We find that the mmWave system is subject to significant coverage issues if beam sweeping is not applied during cell search. By contrast, a reasonable cell search success probability can be achieved even with a small (e.g., 4 to 8) number of beamforming directions to search at the BS or user. 

\textbf{A detailed performance evaluation for the baseline exhaustive search protocol.} The baseline protocol is shown to have low random access preamble collision probability irrespective of the blockage conditions. An optimal BS beamwidth in terms of the expected initial access delay is found, which decreases as blockage becomes more severe. In addition, 
the optimal BS beamwidth in terms of the average user-perceived downlink throughput does not vary too much for different blockage conditions, and is typically within $[10^\circ, 18^\circ]$ in our evaluations.  

\textbf{Comparison of expected initial access delay and average user-perceived downlink throughput.} The baseline exhaustive search protocol gives the best initial access delay performance when blockage is severe, but it also has the worst user-perceived downlink throughput, due to its high initial access overhead. By contrast, the protocol wherein the BS (user) applies beam sweeping and the user (BS) receives omni-directionally during cell search (random access), generally gives the best user-perceived downlink throughput performance but has high initial access delay. Of the four considered sample protocols, the best trade-off between initial access delay and average user-perceived downlink throughput is achieved when the BS transmits using wide beams and the user applies beam sweeping during cell search. 



\section{System Model}
In this work, a time-division duplex (TDD) mmWave system in Fig.~\ref{fig:time_Structure} is considered, where the system time is divided into different initial access cycles with period $T$. Each cycle begins with a cell search phase, followed by the random access phase and the data transmission phase. The mmWave cellular system has carrier frequency $f_c$ and total system bandwidth $W$. The transmit power of BSs and users are $P_b$ and $P_u$ respectively, and the total thermal noise power is $\sigma^2$. In the rest of this section, we present the spatial location models, the propagation and blockage assumptions, and the antenna and beamforming models. 

\begin{figure}
	\centering
	\includegraphics[width=0.6\linewidth]{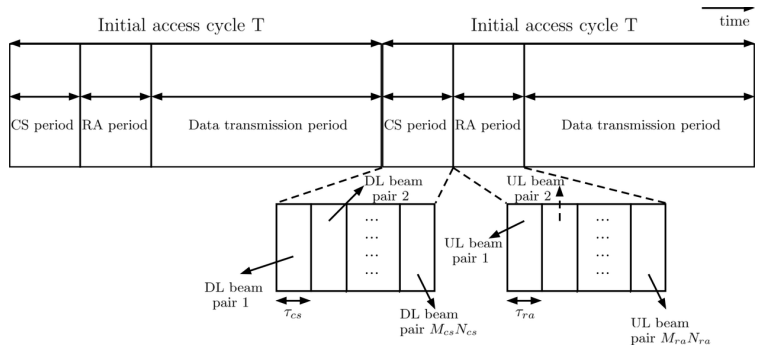}
	\caption{Illustration of two cycles for the timing structure.}
	\label{fig:time_Structure}
\end{figure}

\subsection{Spatial Locations}\label{SpatLocSubsec}
The locations for BSs and users are modeled as realizations of two independent homogeneous Poisson point processes (PPPs). Specifically, the BS process $\Phi = \{x_i\}_i$ has intensity $\lambda$, and the user process $\Phi_u = \{u_i\}_i$ has intensity $\lambda_u$. The PPP assumption for BS locations could lead to many tractable and insightful results. In fact, this assumption is also reasonable since~\cite{guo2014ADG} has proved that the SINR trend under the PPP assumption only has a constant SINR gap compared to any other stationary BS location model. 
In addition,~\cite{blaszczyszyn2015wireless} has proved that for any arbitrary spatial BS location pattern with sufficiently large shadowing variance, the statistics of the propagation losses of a user with respect to all BSs will converge to that of a Poisson network. As a result, the PPP assumption for BSs can also be treated by combining the shadowing effects and the BS locations. Thus we do not consider shadowing separately in our analysis, similar to~\cite{andrews2017modeling,bai2015coverage,alkhateeb2016initial,gupta2015feasibility,bai2014coveragemag,Jihong2016Tractable}.

Since the user locations form a realization of a PPP, we can analyze the performance of a typical user located at the origin. This is guaranteed by Slivnyak's theorem, which states that the property observed by the typical point of PPP $\Phi^{'}$ is the same as that observed by the point at origin in $\Phi^{'} \cup \{o\}$~\cite{chiu2013stochastic,baccelli2010stochastic}. 

\begin{table}[t]
\center\caption{Definitions and Values for System Parameters}\label{SysParaTable}
\resizebox{450pt}{!}{%
	\begin{tabular}{|c|p{155mm}|p{58mm}|}
		\hline 
		Symbol & Definition & Simulation Value \\ 
		\hline 
		$\Phi$, $\lambda$ & MmWave BS PPP and intensity & 100 BS/km$^2$\\ 
		\hline 
		$\Phi_u$, $\lambda_u$ & User PPP and intensity & 1000 user/km$^2$ \\ 
		\hline
		$\Phi_L$, $\Phi_N$ & LOS and NLOS BS tier to the typical user & \\ 
		\hline
		$f_c, W$ & Carrier frequency and system bandwidth & 28 GHz, 100 MHz \\
		\hline 
		$P_b$, $P_u$ & BS and user transmit power & 30 dBm, 23 dBm \\ 
		\hline
		$\sigma^2$ & Total thermal noise power& -94 dBm\\
		\hline
		$G(\theta),g(\theta)$ & Main lobe and side lobe gain at BS and user with beamwidth $\theta$, defined in~(\ref{BFGainModel})& $C_0 = 10$ dB for user antennas \\ 
		\hline
		$M, N$ & Number of antennas/BF directions at each BS and user & $M = 4, 8, ..., 48$, $N = 4$\\
		\hline 
		$M_{cs}$, $N_{cs}$, $K_{cs}$ & $M_{cs}$/$N_{cs}$: number of BF directions to search at BS/user in cell search; $K_{cs} = \min (M_{cs}, N_{cs})$ & \\ 
		\hline
		$m_{cs}$ & Number of wide beams to sweep during cell search for fast CS protocol & $4$\\
		\hline
		$M_{ra}$, $N_{ra}$  & Number of BF directions to search at BS and user during RA & \\ 
		\hline 
		$N_{pa}$ & Number of random access preamble sequences & 64 \\ 
		\hline 
		$\alpha_L, \alpha_N$ & Path loss exponent for LOS and NLOS links & 2, 4 \\ 
		\hline 
		$\beta$ & Path loss at close-in reference distance (i.e., 1m)  & 61.4 dB \\ 
		\hline 
		$\Gamma_{cs}, \Gamma_{ra}$ & SINR threshold to detect synchronization signal and RA preamble & -4 dB,  -4 dB  \\ 
		\hline
		$\tau_{cs}, \tau_{ra}$ & Duration for synchronization signal and RA preamble sequence & 14.3 $\mu$s, 14.3 $\mu$s\\
		\hline
		$T$ & Initial access cycle period & 20 ms \\
		\hline 
		$h(r)$ & Probability for a link with length $r$ to be LOS  & \\
		\hline
		$R_c$, $p$ & Radius and LOS probability for the LOS region in the LOS ball model & $R_c = 100$m, $p = 1, 0.75, 0.5, 0.25$ \\
		\hline
		$\mu$ & LOS region size for the exponential blockage model & $\mu = 100$m, $50$m, $25$m \\
		\hline
		$B(x,r)$ ($B^{o}(x,r)$) & Closed (open) ball with center $x$ and radius $r$ & \\
		\hline
		$S(u,\theta_1,\theta_2)$ & Infinite sector domain $\{ x \in \mathbb{R}^2, \text{ s.t., } \angle(x - u) \in [\theta_1, \theta_2)\}$ & \\
		\hline
		$S_j$ ($1 \leq j \leq K_{cs}$)& The $j$-th BS locatoin location sector with $S_j \triangleq S(o,\frac{2\pi (j-1)}{K_{cs}}, \frac{2\pi j}{K_{cs}})$ & \\
		\hline
		$V(z,T,\lambda) , U(z,T,\lambda) $ & Two special functions defined in~(\ref{SpecFnDefnEq}) & \\
		\hline
		$f_{Z_1}(z)$ & The PDF for the minimum path loss from the typical user to BSs inside the typical BS sector, which is given by~(\ref{MinPLPerSectorEq}) & \\
		\hline
		$P_{M_{cs},N_{cs}} (\Gamma_{cs})$ & Probability to detect the BS providing the smallest path loss inside the typical BS sector derived, which is given by~(\ref{CSProbPerSecEq}) &\\
		\hline
		$\tilde{P}_{M_{cs},N_{cs}}(z,\Gamma_{cs})$ & $\tilde{P}_{M_{cs},N_{cs}}(z,\Gamma_{cs})$: Conditional detection probability when the minimum path loss inside the typical BS sector is $z$, which is given by~(\ref{GFuncDefnEq}) &\\
		\hline
		$P_{M_{cs},N_{cs}}(z_0,\Gamma_{cs})$ & $P_{M_{cs},N_{cs}}(z_0,\Gamma_{cs}) = \int_{z_0}^{\infty} \tilde{P}_{M_{cs},N_{cs}}(z,\Gamma_{cs}) {\rm d}z$ &\\
		\hline
		$P_{co}$ & Probability of no RA preamble collision, given by~(\ref{Prob_No_PA_Collision_Eq}) &\\
		\hline
		$P_{ra} (Z_0,\Gamma_{ra})$ & Probability the RA preamble SINR at the tagged BS exceeds $\Gamma_{ra}$, given by~(\ref{RA_SINR_COP_Lemma_Eq}) & \\
		\hline
		$\eta_{IA}$ & Overall success probability of initial access, given by~(\ref{IA_SP_Eq}) & \\
		\hline
	\end{tabular} }
\end{table}

\subsection{Blockage and Propagation Models}
The link between a BS and a user is either line-of-sight (LOS) or non-line-of-sight (NLOS). We denote by $h(r)$ the probability for a link of distance $r$ to be LOS, which is only a function of $r$ and independent of other links.  
From the typical user's perspective, the BS process $\Phi$ is divided into two tiers: the LOS BS tier $\Phi_L$ and the NLOS BS tier $\Phi_N$. Since the LOS probability function $h$ only depends on the link length, $\Phi_L$ and $\Phi_N$ are two independent PPPs. For any $x \in \mathbb{R}^2$, the intensity function for $\Phi_L$ is $\lambda_L(x) = \lambda h(\|x\|)$, and the intensity function for $\Phi_N$ is $\lambda_N(x) = \lambda (1-h(\|x\|)).$ Incorporating the blockage model to differentiate the LOS and NLOS links is the most distinctive difference for analyzing the mmWave network performance, compared to the analysis in traditional sub-6 GHz networks~\cite{trac}.

Two examples of LOS probability functions $h(r)$ include: (1) the ``generalized LOS ball model"~\cite{bai2015coverage,singh2015tractable} with $h(r) = p \mathbbm{1}_{r \leq R_c}$, where $R_c$ represents the radius for the LOS region ($R_c > 1$m), and $p$ represents the LOS probability within the LOS region; (2) the ``exponential blockage model"~\cite{bai2015coverage} with $h(r) = \exp(-r / \mu)$, where $\mu$ represents the average LOS region length. Compared to the 3GPP blockage model which has accurate fit to the empirical LOS probability,~\cite{kulkarni2014coverage,andrews2017modeling} show that the LOS ball model and exponential blockage model better estimate the SINR and are simpler. 

The path loss for a link with distance $r$ in dB is given by: 
\allowdisplaybreaks
\begin{align}
\allowdisplaybreaks
&\mathit{l}(r)=
\left\{
\begin{array}{ll}
10\log(\beta) + 10 \alpha_L \log_{10}(r) \text{    dB}, & \text{if LOS},\\
10\log(\beta) + 10 \alpha_N \log_{10}(r) \text{    dB}, & \text{if NLOS},
\end{array}
\emph{ } \right.
\end{align}
where $\alpha_L$ and $\alpha_N$ represent the path loss exponent for LOS and NLOS links respectively, and $\beta$ is the path loss at a close-in reference distance (i.e., 1 meter). For the rest of the paper, the path loss function for LOS link and NLOS link are denoted by $\mathit{l}_L(r)$ and $\mathit{l}_N(r)$ respectively.

The small scale fading effect is assumed to be Rayleigh fading, where each link is subject to an i.i.d. exponentially distributed fading power with unit mean. Compared to more realistic small-scale fading models such as Nakagami-$m$ fading, Rayleigh fading leads to much more tractable results with very similar design insights~\cite{gupta2015feasibility, Jihong2016Tractable,andrews2017modeling}. 

\subsection{Antenna Model and Beamforming Gains}\label{AntModSubSec}
BSs and users are equipped with an antenna array of $M$ and $N$ antennas respectively to support directional communications, where $M/N \in \mathbb{N}^+$. Both mmWave BSs and users have 1 RF chain, such that only one analog beam can be transmitted or received at a time\footnote{Our analysis in the rest of the paper based on analog beamforming directly applies to the scenario where cell sectorization with frequency reuse across sectors within the same cell is used. Hybrid beamforming is left to future work.}. For analytical tractability, we assume the actual antenna pattern is approximated by a sectorized beam pattern~\cite{bai2015coverage,akoum2012covrage,singh2015tractable,kulkarni2014coverage,alkhateeb2016initial,gupta2015feasibility,DiRenzo2015Stochastic,andrews2017modeling,bai2014coveragemag,Shokri2015Millimeter}, where the antenna has constant main-lobe gain over its half power beamwidth, and also a constant side-lobe gain otherwise. We adopt the beamforming gain model for sectorized beam pattern as~\cite{Shokri2015Millimeter,alkhateeb2016initial}, whose accuracy has been validated in Fig. 8 of~\cite{alkhateeb2016initial}. Specifically, if we denote by $G_u(\theta_u)$ the beamforming gain at user with beamwidth $\theta_u$, then $G_u(\theta_u)$ is given by:
\begin{align}
\allowdisplaybreaks
&G_u(\theta_u) =
\left\{
\begin{array}{ll}
G(\theta_u) = \frac{2\pi}{\theta_u} \frac{\gamma}{\gamma+1} , & \text{in the main lobe},\\
g(\theta_u) = \frac{2\pi}{2\pi-\theta_u} \frac{1}{\gamma+1} , & \text{in the side lobe},
\end{array}
\emph{ } \right.
\label{BFGainModel}
\end{align}
where $\gamma$ mimics the front-back power ratio, which is given by $\gamma = \frac{2\pi }{C_0(2\pi-\theta_u)}$ for some constant $C_0$. 
A similar beamforming gain model is used at BS, but we assume $0$ side lobe gain for BS (i.e. $\gamma$ in~(\ref{BFGainModel}) is extremely large), and thus the main lobe gain for BS with beamwidth $\theta_b$ is $G(\theta_b) = \frac{2\pi}{\theta_b}$. 
This assumption is important to ensure the analytical tractability in Section~\ref{SPCSRASec}, which is also reasonable since mmWave BSs use large dimensional antenna array, and modern antenna design could enable a front-to-back ratio larger than 30 dB for mmWave BSs~\cite{waterhouse2002broadband}. 


Similar to~\cite{barati2015directional,alkhateeb2016initial}, we assume each BS has a codebook of $M$ possible beamforming vectors, which will correspond to $M$ sectorized beam patterns that have non-overlapping main lobes with beamwidth $\frac{2\pi}{M}$. The $m$-th BS beam ($1 \leq m \leq M$) covers a sector area centered at the BS, whose angle is within $ [2\pi \frac{m-1 }{M}, 2\pi\frac{m}{M})$. The spatial signature of any plane wave of the BS is given by the superposition of these $M$ non-overlapping beam directions~\cite{barati2015directional}. Similarly, each user has $N$ possible sectorized-pattern beamforming vectors that correspond to $N$ non-overlapping main lobes with beamwidth $\frac{2\pi}{N}$. Fig.~\ref{fig:BF_Structure} shows the first beam direction of the user and the fifth beam direction of the BSs with $M = 8$ and $N = 4$. 
\begin{figure}[h]
	\centering
	\includegraphics[width=0.45\linewidth]{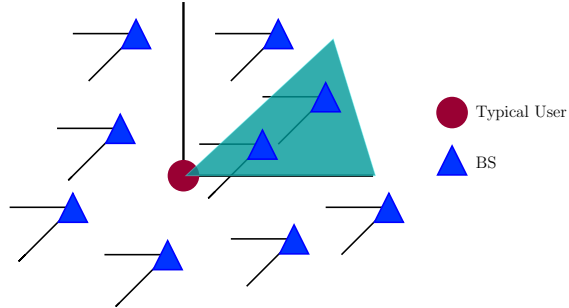}
	\caption{Beam pattern for BS and user beam pair $(5,1)$ with $M = 8, N = 4$. Only the typical user is shown, and the shaded area represents the corresponding BS sector.}
	\label{fig:BF_Structure}
\end{figure}
For any BS and its associated users, their aligned beamforming vectors need to be learned through the initial access, which will be used for subsequent data transmissions.

Finally, Table~\ref{SysParaTable} summarizes the definitions and simulation values of the important notation and system parameters that will be used in the rest of this paper. In particular, the two special functions $V(z,T,\lambda)$ and $U(z,T,\lambda)$ from $\mathbb{R}^2 \times \mathbb{R}^{+} \times \mathbb{R}^{+}$ to $\mathbb{R}^{+}$ are defined as:
\allowdisplaybreaks
\begin{align}\label{SpecFnDefnEq}
\allowdisplaybreaks
V(z,T,\lambda)  &=  \exp\biggl\{-2\pi\lambda \biggl(\int_{\mathit{l}_L^{-1}(z)}^{+\infty} \frac{Tzh(r)r{\rm d}r}{Tz + \mathit{l}_L(r)} + \int_{\mathit{l}_N^{-1}(z)}^{+\infty} \frac{Tz(1-h(r))r{\rm d}r}{Tz + \mathit{l}_N(r)} \biggl)\biggl\}, \nonumber\\
U(z,T,\lambda)  &= \exp\biggl\{-2\pi\lambda \biggl(\int_{0}^{+\infty} \frac{Tzh(r)r{\rm d}r}{Tz + \mathit{l}_L(r)} + \int_{0}^{+\infty} \frac{Tz(1-h(r))r{\rm d}r}{Tz + \mathit{l}_N(r)} \biggl)\biggl\}.
\end{align}

\section{Initial Access Design for MmWave Networks and Performance Metrics} \label{IASect}
Similar to~\cite{barati2015directional}, we investigate mmWave initial access protocols that are compliant with the basic procedures of LTE. However, the initial access of LTE is performed omni-directionally, which cannot be directly applied by mmWave networks due to the high isotropic path loss.


\subsection{Cell Search and Random Access Procedure}\label{CSSubSec}
We assume the BS and user will follow the beam patterns described in Section~\ref{AntModSubSec}. 
The two main design objectives for initial access in mmWave cellular networks include: (1) connect the users to the network, and (2) enable both BS and its associated user to learn their aligned beamforming directions with beamwidth $\frac{2\pi}{M}$ and $\frac{2\pi}{N}$ respectively. 
These objectives are achieved through the following directional cell search and random access procedures. 

During cell search phase, BSs sweep through $M_{cs}$ transmit beamforming directions to broadcast the synchronization signals, while users sweep through $N_{cs}$ receive beamforming directions to detect the synchronization signals. 
A synchronous beam sweeping pattern is used, such that during any synchronization signal period, all BSs/users will transmit/receive in the same direction, and one particular downlink beam pair is searched. 
We assume each user is able to detect a BS with negligible miss detection probability (e.g., less than 1\%)\footnote{A different miss detection probability can be assumed, but the performance trends derived in the paper will not be affected.}, if the signal-to-interference-plus-noise ratio (SINR) of the synchronization signal from that BS exceeds $\Gamma_{cs}$. The beam reference signal is assumed to be transmitted along with the synchronization signal, such that the user is able to acquire the BS beam direction upon successful cell search~\cite{Verizon20165G2}. Among all the BSs that are detected during CS, the user selects the BS that provides the smallest path loss as its serving BS. If cell search fails, the user will not transmit in the random access phase, and it needs to repeat the initial access procedure in the next cycle. Denote by $\tau_{cs}$ the duration for each synchronization signal, the cell search phase leads to a total delay of $T_{cs} = M_{cs} N_{cs} \tau_{cs}$, as shown in Fig.~\ref{fig:time_Structure}. 

In the random access phase, the user initiates the connection to its desired serving BS by transmitting a RA preamble sequence, which is uniformly selected from $N_{pa}$ orthogonal preamble sequences. Users sweep through $N_{ra}$ transmit beamforming directions synchronously, and BSs sweep through $M_{ra}$ receive beamforming directions synchronously during random access. The user can be discovered by its serving BS if: 1) there is no RA preamble collision with other users transmitting simultaneously to the same BS; and 2) the SINR of the preamble sequence exceeds $\Gamma_{ra}$. 
Similar to cell search, we assume the probability of miss detection is sufficiently small (e.g., less than 1\%) if the SINR of RA preamble exceeds $\Gamma_{ra}$.
The user is connected to its serving BS upon successful random access, and both the user and BS are aware of their beamforming directions for data transmission. 
According to Fig.~\ref{fig:time_Structure}, the total random access delay is $T_{ra} = M_{ra} N_{ra} \tau_{ra}$, with $\tau_{ra}$ representing the duration for each RA preamble sequence.

\subsection{Protocols for BS and User to Determine Beamforming Directions}
Different initial access protocols can be designed to enable the user and its serving BS to determine their aligned beamforming directions with beamwidth $\frac{2\pi}{N}$ and $\frac{2\pi}{M}$ respectively. The protocols that are investigated in this paper are as follows: 

1)  \emph{Baseline}: BSs and users sweep through all possible beamforming directions during cell search (i.e., $M_{cs} = M$, $N_{cs} = N$), so that the user can determine its beamforming direction after successful cell search. During random access, the user transmits in the beamforming direction it found during CS (i.e.,  $N_{ra} = 1$), while the BS sweeps through all its beamforming directions (i.e., $M_{ra} = M$) to receive the RA preamble sequences from the users. BS beamforming direction to the user is determined as the receive direction of the RA preamble. 

2) \emph{Fast RA}: cell search is the same as baseline, but the BS receives omni-directionally during random access. The user determines the BS beamforming direction  by decoding the beam reference signal during cell search, and it encodes that information into the RA preamble. The BS obtains its beamforming direction by decoding the RA preamble. Therefore,  $M_{cs} = M$, $N_{cs} = N$, $M_{ra} = 1$ (omni) and $N_{ra} = 1$ (beamforming in fixed direction). 

3) \emph{Fast CS}: in order reduce the cell search overhead while maintaining reasonable synchronization signal strength, the BS applies a coarse beam-sweeping using relatively wide beams during CS (i.e., $M_{cs} = m_{cs}$ with $N \leq m_{cs} \leq M$)~\cite{li2016performance}. Other procedures are the same as baseline.

4) \textit{Omni RX}: Now $M_{cs} = M$, $N_{cs} = 1$ (omni), $M_{ra} = 1$ (omni), $N_{ra} = N$, i.e., the user receives omni-directionally during cell search, and the BS receives omni-directionally during random access. The BS determines its beamforming direction as in the fast RA protocol, and the user determines its beamforming direction by beam sweeping during random access.

A summary of these protocols is provided in Table~\ref{IAOptions}. 

\begin{table}[t]
	\center\caption{Initial access protocols}\label{IAOptions}
	\resizebox{300pt}{!}{%
		\begin{tabular}{|c|p{40mm}|p{30mm}|p{30mm}|p{30mm}|}
			\hline 
			\textbf{Protocol} & \textbf{BS during CS} & \textbf{User during CS} & \textbf{BS during RA} & \textbf{User during RA}\\ 
			\hline 
			Baseline & Beam-sweeping ($M_{cs} = M$)& Beam-sweeping ($N_{cs} = N$) & Beam-sweeping ($M_{ra} = M$)& Fixed-direction ($N_{ra} = 1$)\\
			\hline
			Fast RA &  Beam-sweeping ($M_{cs} = M$) & Beam-sweeping ($N_{cs} = N$)  & Omni-directional ($M_{ra} = 1$) & Fixed-direction ($N_{ra} = 1$)  \\
			\hline
			Fast CS & Coarse beam-sweeping ($M_{cs} = m_{cs}$) & Beam-sweeping ($N_{cs} = N$) & Beam-sweeping ($M_{ra} = M$) & Fixed-direction ($N_{ra} = 1$) \\
			\hline
			UE Omni RX & Beam-sweeping ($M_{cs} = M$) & Omni-directional ($N_{cs} = 1$) & Omni-directional ($M_{ra} = 1$) & Beam-sweeping ($N_{ra} = N$)\\
			\hline
		\end{tabular} }
	\end{table}

\subsection{Performance Metrics}
The metrics that we use to evaluate the performance of the initial access protocols are:
\subsubsection{Success Probability of Initial Access} Initial access is successful if both cell search and random access are successful. For the typical user, we use $e_0$ and $\delta_0$ to denote its success indicator for cell search and random access in a typical initial access cycle. 
Therefore, the initial access success probability is given by: $\eta_{IA} = \mathbb{E}(e_0 \times \delta_0).$ 

\subsubsection{Expected Initial Access Delay}
If the user fails the initial access procedure in one initial access cycle, it will try to re-connect to the network in the next cycle. According to Fig.~\ref{fig:time_Structure}, the total initial access delay for typical user to be connected is given by:
\begin{align}\label{IA_Delay_Eq}
D_0 = (L_0 -1)  T + (M_{cs} N_{cs} \tau_{cs} + M_{ra} N_{ra} \tau_{ra}),
\end{align} 
where $L_0 \in \mathbb{N}^{+}$ represents the number of cycles to discover the typical user, $T$ represents the period of an initial access cycle, and $M_{cs} N_{cs} \tau_{cs} + M_{ra} N_{ra} \tau_{ra}$ represents the duration for initial access in each cycle. 
In this paper, we are focused on a high mobility scenario where the users or blockers (e.g., pedestrians and cars) are moving with a relatively high speed, such that the user and BS PPPs are independent across different initial access cycles. 
Therefore, $L_0$ follows a geometric distribution with parameter $\eta_{IA}$, which means:
\begin{align}\label{ExpDelayExpr}
\mathbb{E}(D_0) = (\frac{1}{\eta_{IA}} - 1) T +  (M_{cs} N_{cs} \tau_{cs} + M_{ra} N_{ra} \tau_{ra}).
\end{align}
Achieving a small initial access delay is important for mmWave cellular networks, especially for mobile scenarios where initial access needs to be performed very frequently. In addition, it is useful to reduce power consumption.


\subsubsection{Average User-Perceived Downlink Throughput} We only focus on the downlink and assume the entire data transmission period is occupied by the downlink in this paper. If the user succeeds the initial access, it can be scheduled by its serving BS for downlink transmission; otherwise, its data rate in the current cycle is 0 almost surely. 
Therefore, the average user-perceived downlink throughput, or average UPT, which represents the expected downlink data rate a typical user achieves within one initial access cycle, is given by:
\begin{align}\label{AvgUPTDefnEq}
\bar{R} = (1 -\eta_{TO}) \times \eta_{IA} \times \mathbb{E}[ \eta_{s}  W\log_{2}(1+\text{SINR}_{DL} )| e_0\delta_0 = 1],
\end{align}
where $\eta_{TO} \triangleq \min(\frac{M_{cs} N_{cs} \tau_{cs} + M_{ra} N_{ra} \tau_{ra}}{T}, 1)$ represents the initial access overhead; $\eta_{s}$ denotes the average schedule probability of typical user; and $\text{SINR}_{DL}$ denotes the downlink data SINR. 


It can be observed from~(\ref{ExpDelayExpr}) and~(\ref{AvgUPTDefnEq}) that both the initial access delay and UPT are highly dependent on $M_{cs},N_{cs},M_{ra}$, and $N_{ra}$, which means the four protocols in Table~\ref{IAOptions} could lead to very different initial access delay and UPT performance. In the rest of this paper, we develop and verify a general analytical framework that can quantify the impact of various initial access protocols on the mmWave system performance under a general blockage model.

\section{Success Probability for Cell Search and Random Access}\label{SPCSRASec}
In this section, the success probabilities for 
the initial access protocols in Table~\ref{IAOptions} are derived.
\vspace*{-0.1in}

\subsection{Success Probability for Cell Search}
\subsubsection{Analytical Model for Cell Search}\label{CS_Analy_Model_SubSec}
According to Section~\ref{CSSubSec}, BSs and users sweep through $M_{cs} \times N_{cs}$ transmit-receive beam pairs synchronously over the downlink during cell search. 
If the BS and user beam pair $(m,n)$ ($1 \leq m \leq M_{cs}$, $1 \leq n \leq N_{cs}$) is aligned, the typical user can receive from the main-lobes of BSs inside the following area due to the synchronous beam sweeping pattern:
\begin{align}
S(o,\frac{2\pi (n-1)}{N_{cs}}, \frac{2\pi n}{N_{cs}}) \cap S(o,\frac{2\pi (m-1)}{M_{cs}} + \pi, \frac{2\pi m}{M_{cs}} + \pi),
\end{align} 
where $o$ represents the typical user located at the origin of $\mathbb{R}^2$, and we define an infinite sector domain centered at $u \in \mathbb{R}^2$ by:  
\begin{align}
S(u,\theta_1,\theta_2) = \{ x \in \mathbb{R}^2, \text{ s.t., } \angle(x - u) \in [\theta_1, \theta_2) \}.
\end{align}From the typical user's perspective, there are $K_{cs} \triangleq \max(M_{cs},N_{cs})$ such non-overlapping sectors during cell search, which are referred to as the ``BS sectors" for the rest of the pape. Specifically, the $j$-th ($ 1 \leq j \leq K_{cs}$) BS sector is $S(o,\frac{2\pi (j-1)}{K_{cs}}, \frac{2\pi j}{K_{cs}})$. Note the BS sector notion in this paper is different from the sector concept in antenna theory. An example is shown in Fig.~\ref{fig:BF_Structure}.

For the rest of the paper, when analyzing the typical user performance inside a BS sector, we implicitly assume the BS and user beams are aligned. 
In addition, we say a BS sector is detected during cell search if the typical user is able to detect the BS that provides the smallest path loss in this sector, and the overall cell search is successful if at least one BS sector is detected. For simplicity, we neglect the scenario that the BS providing the smallest path loss inside a BS sector is in deep fade and unable to be detected, while some other BSs can be detected in the same sector. Such a scenario will be incorporated in our future work. After cell search, the typical user selects the BS with the smallest path loss across all the detected BS sectors as its serving BS, and initiates random access to this BS.

\subsubsection{Success Probability of Cell Search}
Since BSs are PPP and different BS sectors are non-overlapping, the event for BS sectors to be detected are independent and identically distributed (i.i.d.). Without loss of generality, we consider the first BS sector as a ``typical" BS sector, which is denoted by $S_1 \triangleq S(o,0,\frac{2\pi}{K_{cs}})$. 
The minimum path loss distribution inside the typical BS sector is given by the following lemma. 

\begin{lemma}\label{MinPLPerSectorLemma}
	Denote the minimum path loss from the typical user to BSs inside the typical BS sector by $Z_1$, then the probability density function (PDF) of $Z_1$ is given by:
	\allowdisplaybreaks
	\begin{align}\label{MinPLPerSectorEq}
	\allowdisplaybreaks
	f_{Z_1}(z) = &\biggl\{\frac{2\pi\lambda}{K_{cs}} \frac{1}{\alpha_L} (\frac{1}{\beta})^{\frac{2}{\alpha_L}} z^{\frac{2}{\alpha_L}-1} h((\frac{z}{\beta})^{\frac{1}{\alpha_L}}) \exp\biggl(-\frac{2\pi\lambda}{K_{cs}} \int_{0}^{(\frac{z}{\beta})^{\frac{1}{\alpha_L}}} \!\!\!\!\!h(r) r {\rm d}r\biggl) \biggl\} \nonumber \\
	&\times \exp\biggl(-\frac{2\pi\lambda}{K_{cs}} \int_{0}^{(\frac{z}{\beta})^{\frac{1}{\alpha_N}}} \!\!\!\!\!\!(1-h(r)) r {\rm d}r\biggl) + \biggl\{\frac{2\pi\lambda}{K_{cs}} \frac{1}{\alpha_N} (\frac{1}{\beta})^{\frac{2}{\alpha_N}} z^{\frac{2}{\alpha_N}-1}(1 - h((\frac{z}{\beta})^{\frac{1}{\alpha_N}}))  \nonumber\\
	&\times \exp\biggl(-\frac{2\pi\lambda}{K_{cs}} \int_{0}^{(\frac{z}{\beta})^{\frac{1}{\alpha_N}}} \!\!\!\!\!\!(1-h(r)) r {\rm d}r\biggl) \biggl\}\exp\biggl(-\frac{2\pi\lambda}{K_{cs}} \int_{0}^{(\frac{z}{\beta})^{\frac{1}{\alpha_L}}} \!\!\!\!\!h(r) r {\rm d}r\biggl).
	\end{align} where $K_{cs} = \min(M_{cs},N_{cs})$.
\end{lemma}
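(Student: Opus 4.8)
The plan is to exploit the independence of the LOS and NLOS tiers and reduce the claim to a pair of one-dimensional void-probability computations. First I would write $Z_1 = \min(Z_1^L, Z_1^N)$, where $Z_1^L$ (resp.\ $Z_1^N$) is the smallest path loss from the typical user to the BSs of $\Phi_L$ (resp.\ $\Phi_N$) lying in the typical sector $S_1 = S(o,0,\frac{2\pi}{K_{cs}})$. Because the LOS/NLOS split depends only on link length, $\Phi_L$ and $\Phi_N$ are independent PPPs, hence $Z_1^L$ and $Z_1^N$ are independent and $\mathbb{P}(Z_1 > z) = \mathbb{P}(Z_1^L > z)\,\mathbb{P}(Z_1^N > z)$. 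Differentiating gives the product-rule expression $f_{Z_1}(z) = f_{Z_1^L}(z)\,\mathbb{P}(Z_1^N > z) + \mathbb{P}(Z_1^L > z)\,f_{Z_1^N}(z)$, so it remains to identify each of the four factors with the corresponding piece of~(\ref{MinPLPerSectorEq}).

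Next I would compute the LOS factor (the NLOS one being identical with $\alpha_N$ and $1-h$ in place of $\alpha_L$ and $h$). Converting the dB path-loss law to linear scale yields $\mathit{l}_L(r) = \beta r^{\alpha_L}$, which is strictly increasing, so a LOS BS at distance $r$ contributes path loss $\le z$ iff $r \le (z/\beta)^{1/\alpha_L}$. The intensity measure of $\Phi_L$ over the disk-sector $S_1 \cap B(o,\rho)$ is $\int_0^{2\pi/K_{cs}}\!\!\int_0^{\rho}\lambda h(r)\,r\,{\rm d}r\,{\rm d}\theta = \frac{2\pi\lambda}{K_{cs}}\int_0^{\rho} h(r)\,r\,{\rm d}r$, so the void probability gives $\mathbb{P}(Z_1^L > z) = \exp\bigl(-\frac{2\pi\lambda}{K_{cs}}\int_0^{(z/\beta)^{1/\alpha_L}} h(r)\,r\,{\rm d}r\bigr)$, which is the second exponential factor in the first line of~(\ref{MinPLPerSectorEq}). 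Then $f_{Z_1^L}(z) = -\frac{{\rm d}}{{\rm d}z}\mathbb{P}(Z_1^L > z) = \mathbb{P}(Z_1^L > z)\cdot\frac{{\rm d}}{{\rm d}z}\bigl[\frac{2\pi\lambda}{K_{cs}}\int_0^{(z/\beta)^{1/\alpha_L}} h(r)\,r\,{\rm d}r\bigr]$, and the chain rule with $\frac{{\rm d}}{{\rm d}z}(z/\beta)^{1/\alpha_L} = \frac{1}{\alpha_L}(1/\beta)^{1/\alpha_L}z^{1/\alpha_L-1}$ collapses the inner derivative to the prefactor $\frac{2\pi\lambda}{K_{cs}}\frac{1}{\alpha_L}(1/\beta)^{2/\alpha_L} z^{2/\alpha_L-1} h((z/\beta)^{1/\alpha_L})$. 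Multiplying by $\mathbb{P}(Z_1^L > z)$ reproduces the first braced term of~(\ref{MinPLPerSectorEq}) exactly; the NLOS computation gives the second braced term and the remaining exponential, and substituting all four pieces into the product-rule formula completes the proof.

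This argument is essentially bookkeeping, so I do not anticipate a genuine obstacle; the only points needing care are (i) keeping the angular fraction $1/K_{cs}$ consistent (and flagging that the lemma's $K_{cs}$ is $\min(M_{cs},N_{cs})$), (ii) converting the dB path-loss model to linear form before inverting it, and (iii) handling the composition $z\mapsto(z/\beta)^{1/\alpha}$ cleanly when differentiating the void probability. An equivalent and perhaps slicker route for (iii) is to apply the mapping theorem: the path losses of the in-sector LOS BSs form a PPP on $\mathbb{R}^{+}$ with intensity $\frac{2\pi\lambda}{K_{cs}}\,h(\mathit{l}_L^{-1}(z))\,\mathit{l}_L^{-1}(z)\,\frac{{\rm d}}{{\rm d}z}\mathit{l}_L^{-1}(z)$, and $f_{Z_1^L}$ is then read off directly as the density of its first point. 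I would also note for completeness that, as stated in Section~\ref{CS_Analy_Model_SubSec}, neglecting deep-fade miss detection is precisely what permits working with the minimum path loss in the sector rather than with the strongest detectable BS.
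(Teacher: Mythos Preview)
Your proposal is correct and follows essentially the same route as the paper: compute the CCDF $\mathbb{P}(Z_1\ge z)$ as the product of the LOS and NLOS void probabilities over the sector (using independence of $\Phi_L$ and $\Phi_N$), then differentiate. The paper's proof stops after writing the CCDF and leaves the product-rule differentiation implicit, whereas you spell out the chain-rule bookkeeping and the term-by-term matching; your flag about the $\min$/$\max$ discrepancy for $K_{cs}$ is also apt, since the paper's main text defines $K_{cs}=\max(M_{cs},N_{cs})$.
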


\begin{proof}
Inside the typical BS sector, since the minimum path loss to the typical user is either from the nearest LOS BS or the nearest NLOS BS, we have: 
\begin{align*}
\mathbb{P}(Z_1 \geq z) 
& \overset{(a)}{=} \mathbbm{P}\biggl(\min_{x \in \Phi_L \cap S_1}\|x\| \geq  \mathit{l}_L^{-1}(z)\biggl) \times \mathbb{P}\biggl( \min_{x \in \Phi_N \cap S_1} \|x\| \geq \mathit{l}_N^{-1}(z)\biggl)\\
&\overset{(b)}{=} \exp\biggl(-\frac{2\pi\lambda}{K_{cs}} \int_{0}^{(\frac{z}{\beta_L})^{\frac{1}{\alpha_L}}} \!\!\!\!\!\!\!\! h(r) r {\rm d}r\biggl) \exp\biggl(-\frac{2\pi\lambda}{K_{cs}} \int_{0}^{(\frac{z}{\beta_N})^{\frac{1}{\alpha_N}}} \!\!\!\biggl(1-h(r)\biggl) r {\rm d}r\biggl),
\end{align*}
where (a) is because $\Phi_L$ and $\Phi_N$ are independent, and (b) is from void probability of PPP.
\end{proof}

Note the first term and second term in (\ref{MinPLPerSectorEq}) refer to the PDF of $Z_1$ when the BS providing the minimum path loss is LOS and NLOS respectively. 

\begin{remark}
	The result in  (\ref{MinPLPerSectorEq}) is in integral form since it provides the path loss distribution under a general blockage model. It can be simplified for specific blockage models such as the LOS ball model and exponential blockage model. For example, for the LOS ball model with $p =1$, Lemma~\ref{MinPLPerSectorEq} simply becomes:
	\begin{align*}
	f_{Z_1}(z) = &\frac{2\pi\lambda}{K_{cs}} \frac{1}{\alpha_L} (\frac{1}{\beta})^{\frac{2}{\alpha_L}} z^{\frac{2}{\alpha_L}-1} \exp\biggl(-\frac{\pi\lambda }{K_{cs}} (\frac{z}{\beta})^{\frac{1}{\alpha_L}}\biggl) \mathbbm{1}_{(\frac{z}{\beta})^{\frac{1}{\alpha_L}} \leq R_c}\\
	& + \frac{2\pi\lambda}{K_{cs}} \frac{1}{\alpha_N} (\frac{1}{\beta})^{\frac{2}{\alpha_N}} z^{\frac{2}{\alpha_N}-1} \exp\biggl(-\frac{\pi\lambda}{K_{cs}} (\frac{z}{\beta})^{\frac{1}{\alpha_N}}\biggl) \mathbbm{1}_{(\frac{z}{\beta})^{\frac{1}{\alpha_N}} \geq R_c}.
	\end{align*}
	Similarly, all the analytical results afterwards can be simplified under specific blockage models.
\end{remark}

Conditionally on the minimum path loss inside the typical BS sector $Z_1$, the SINR of the synchronization signal from the BS providing the minimum path loss is given by: 
\allowdisplaybreaks
\begin{align}\label{SyncSigSINREq}
\allowdisplaybreaks
\text{SINR}_{SS}(Z_1) = \frac{F_0/Z_1}{\!\!\!\!\!\!\!\!\!\!\sum\limits_{x_{i}^L \in \Phi_L \cap S_1 \cap B^c(o,\mathit{l}_L^{-1}(Z_1))} \!\!\!\!\!\!\!\!\!\!\!\!\!\! F_{i}^L / \mathit{l}_{L}(\|x_{i}^L\|) + \!\!\!\!\!\!\!\!\!\!\!\!\! \sum\limits_{x_{j}^N \in \Phi_N \cap S_1 \cap B^c(o,\mathit{l}_N^{-1}(Z_1))} \!\!\!\!\!\!\!\!\!\! F_{j}^N / \mathit{l}_{N}(\|x_{j}^N\|) + \frac{\sigma^2}{P_b M_{cs} G( 2\pi /N_{cs})}},
\end{align}
where $F_0$, $F_i^L$ and $F_j^N$ represent the Rayleigh fading channel from the typical user to the BS proving the minimum path loss, interfering LOS BS $x_{i}^L$ and interfering NLOS BS $x_{j}^N$ respectively. The last term in the denominator of~(\ref{SyncSigSINREq}) represents the ``effective noise" at the typical user, which is the total noise power normalized by the transmit power and antenna gains. In particular, the user antenna gain is $G(\frac{2\pi}{N_{cs}})$, and the BS antenna gain is $M_{cs}$ since it has 0 side lobe gain. We have applied the strong Markov property of PPPs~\cite[Proposition 1.5.3]{baccelli2010stochastic} for obtaining~(\ref{SyncSigSINREq}): conditionally on the minimum path loss $Z_1$, the interference only depends on the interfering LOS and NLOS BSs located inside $S_1 \cap B^c(0,\mathit{l}_L^{-1}(Z_1))$ and $S_1 \cap B^c(0,\mathit{l}_N^{-1}(Z_1))$ respectively. 
The detection probability of the typical BS sector is as follows:
\begin{lemma}\label{CSSucProbPerSectorLemma}
	The probability for the typical user to detect the BS providing the smallest path loss inside the typical BS sector is given by:
	\allowdisplaybreaks
	\begin{align}\label{CSProbPerSecEq}
	\allowdisplaybreaks
		P_{M_{cs},N_{cs}} (\Gamma_{cs}) = \int_{0}^{\infty}	\tilde{P}_{M_{cs},N_{cs}}(z,\Gamma_{cs}) f_{Z_1}(z) {\rm d}z,
	\end{align}
where $\tilde{P}_{M_{cs},N_{cs}}(z,\Gamma_{cs})$ denotes the conditional detection probability when the minimum path loss inside the typical BS sector is $z$, which is given by:
	\begin{align}\label{GFuncDefnEq}
	\allowdisplaybreaks
	\tilde{P}_{M_{cs},N_{cs}} (z, \Gamma_{cs}) = &\exp\biggl(-\frac{\Gamma_{cs}z\sigma^2}{P_b M_{cs} G(2\pi/N_{cs})}\biggl) V\left(z, \Gamma_{cs},\frac{\lambda }{K_{cs}}\right),
	\end{align}
	where function $V$ is defined in~(\ref{SpecFnDefnEq}), and other parameters are defined in Table~\ref{SysParaTable}.
\end{lemma}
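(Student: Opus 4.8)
The plan is to condition on the minimum path loss $Z_1=z$ inside the typical BS sector $S_1$, compute the probability that the synchronization-signal SINR of~(\ref{SyncSigSINREq}) exceeds $\Gamma_{cs}$, and then integrate against $f_{Z_1}$ from Lemma~\ref{MinPLPerSectorLemma}. Since~(\ref{CSProbPerSecEq}) is exactly the total-probability formula $P_{M_{cs},N_{cs}}(\Gamma_{cs})=\int_0^\infty\mathbb{P}(\mathrm{SINR}_{SS}(z)\geq\Gamma_{cs})f_{Z_1}(z)\,{\rm d}z$, the whole content of the lemma is to verify $\mathbb{P}(\mathrm{SINR}_{SS}(z)\geq\Gamma_{cs})=\tilde{P}_{M_{cs},N_{cs}}(z,\Gamma_{cs})$. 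Throughout I use the simplification announced in Section~\ref{CS_Analy_Model_SubSec}: ``the typical BS sector is detected'' is identified with ``the minimum-path-loss BS in that sector has synchronization-signal SINR at least $\Gamma_{cs}$.''

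First I would fix $z$, write $I_L=\sum_{x_i^L\in\Phi_L\cap S_1\cap B^c(o,\mathit{l}_L^{-1}(z))}F_i^L/\mathit{l}_L(\|x_i^L\|)$ and $I_N$ for the analogous NLOS sum, and abbreviate the effective noise by $\nu=\sigma^2/(P_bM_{cs}G(2\pi/N_{cs}))$. The event $\mathrm{SINR}_{SS}(z)\geq\Gamma_{cs}$ is exactly $F_0\geq\Gamma_{cs}z\,(I_L+I_N+\nu)$, and since $F_0$ is a unit-mean exponential independent of the interferer configuration, conditioning on the latter gives $\mathbb{P}(F_0\geq\Gamma_{cs}z(I_L+I_N+\nu)\mid I_L,I_N)=e^{-\Gamma_{cs}z\nu}e^{-\Gamma_{cs}zI_L}e^{-\Gamma_{cs}zI_N}$. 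Taking expectations and using that $\Phi_L$ and $\Phi_N$ are independent PPPs, this factors as $\tilde{P}_{M_{cs},N_{cs}}(z,\Gamma_{cs})=e^{-\Gamma_{cs}z\nu}\,\mathcal{L}_{I_L}(\Gamma_{cs}z)\,\mathcal{L}_{I_N}(\Gamma_{cs}z)$.

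It then remains to evaluate the two Laplace transforms with the probability generating functional (PGFL) of a PPP. By the strong Markov property of PPPs already invoked for~(\ref{SyncSigSINREq}), conditionally on $Z_1=z$ the interfering LOS BSs form a PPP of intensity $\lambda h(\|x\|)$ on $S_1\cap B^c(o,\mathit{l}_L^{-1}(z))$ and the interfering NLOS BSs an independent PPP of intensity $\lambda(1-h(\|x\|))$ on $S_1\cap B^c(o,\mathit{l}_N^{-1}(z))$, irrespective of whether the serving BS is LOS or NLOS. Using $\mathbb{E}_F[e^{-sF/\mathit{l}_L(r)}]=\mathit{l}_L(r)/(\mathit{l}_L(r)+s)$, so $1-\mathbb{E}_F[e^{-sF/\mathit{l}_L(r)}]=s/(s+\mathit{l}_L(r))$, and passing to polar coordinates — the angular integral over the sector $S_1$ contributes the factor $2\pi/K_{cs}$ — the PGFL gives $\mathcal{L}_{I_L}(s)=\exp\big(-\tfrac{2\pi\lambda}{K_{cs}}\int_{\mathit{l}_L^{-1}(z)}^{\infty}\tfrac{s\,h(r)\,r}{s+\mathit{l}_L(r)}\,{\rm d}r\big)$ and likewise $\mathcal{L}_{I_N}(s)=\exp\big(-\tfrac{2\pi\lambda}{K_{cs}}\int_{\mathit{l}_N^{-1}(z)}^{\infty}\tfrac{s(1-h(r))r}{s+\mathit{l}_N(r)}\,{\rm d}r\big)$. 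Substituting $s=\Gamma_{cs}z$ and comparing with the definition of $V$ in~(\ref{SpecFnDefnEq}), with $\lambda$ replaced by $\lambda/K_{cs}$ and $T$ by $\Gamma_{cs}$, the product $\mathcal{L}_{I_L}(\Gamma_{cs}z)\mathcal{L}_{I_N}(\Gamma_{cs}z)$ equals $V(z,\Gamma_{cs},\lambda/K_{cs})$, which together with $e^{-\Gamma_{cs}z\nu}=\exp(-\Gamma_{cs}z\sigma^2/(P_bM_{cs}G(2\pi/N_{cs})))$ yields~(\ref{GFuncDefnEq}), and hence~(\ref{CSProbPerSecEq}).

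The main obstacle is the conditioning bookkeeping in the second and third steps: one has to be sure that after conditioning on $Z_1=z$ the residual interference is still generated by a PPP of the original intensity restricted to the complements of the two exclusion balls (radii $\mathit{l}_L^{-1}(z)$ for LOS and $\mathit{l}_N^{-1}(z)$ for NLOS), and that the fading variables of the interferers remain i.i.d.\ and independent of $F_0$; granting these — they are exactly the strong Markov / independent-scattering properties of the PPP — the remainder is a routine PGFL evaluation and a change to polar coordinates.
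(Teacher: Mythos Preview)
Your proposal is correct and follows essentially the same route as the paper's proof: condition on $Z_1=z$, exploit the exponential law of $F_0$ to factor out the noise term and reduce to the product of the LOS and NLOS interference Laplace transforms, then evaluate each via the PGFL of the PPP restricted to $S_1$ outside the appropriate exclusion ball and identify the result with $V(z,\Gamma_{cs},\lambda/K_{cs})$. Your write-up is in fact somewhat more explicit than the paper's about the strong-Markov conditioning and the polar-coordinate $2\pi/K_{cs}$ factor, but the underlying argument is identical.
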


\begin{proof}
	Given the minimum path loss inside the typical BS sector is $Z_1 = z$, the conditional success probability to detect the BS providing the minimum path loss is:
	\allowdisplaybreaks
	\begin{align*}
	\allowdisplaybreaks
	\!\!\!&\mathbb{P}(\text{SINR}_{SS}(z) > \Gamma_{cs}) \nonumber\\
	\overset{(a)}{=}&\exp\biggl(-\frac{\Gamma_{cs}z\sigma^2}{PM_{cs}G(2\pi/N_{cs})}\biggl) \mathbb{E}\biggl[\exp(-\Gamma_{cs}z \!\!\!\!\!\!\!\!\!\!\!\!\!\!\!\!\!\sum\limits_{x_{i}^L \in \Phi_L \cap S_1 \cap B^c(o,\mathit{l}_L^{-1}(z))} \!\!\!\!\!\!\!\!\!\!\!\!\!\!\!\!\!F_{i}^L / \mathit{l}_{L}(x_{i}^L))\biggl] \mathbb{E}\biggl[\exp(-\Gamma_{cs}z \!\!\!\!\!\!\!\!\!\!\!\!\!\!\!\!\!\sum\limits_{x_{j}^N \in \Phi_N \cap S_1 \cap B^c(o,\mathit{l}_N^{-1}(z))} \!\!\!\!\!\!\!\!\!\!\!\!\!\!\!\!\!F_{j}^N / \mathit{l}_{N}(x_{j}^N))\biggl] \nonumber\\
	\overset{(b)}{=}&\exp\biggl(-\frac{\Gamma_{cs}z\sigma^2}{PM_{cs}G(2\pi/N_{cs})}\biggl) \mathbb{E}\biggl[\!\!\!\!\!\!\!\!\!\!\!\!\!\prod_{{x_{i}^L \in \Phi_L \cap S_1 \cap B^c(o,\mathit{l}_L^{-1}(z))}}\!\!\!\!\!\frac{\mathit{l}_L(x_i^L)}{\mathit{l}_L(x_i^L)+\Gamma_{cs}z}\biggl] \mathbb{E}\biggl[\!\!\!\!\!\!\!\!\prod_{{x_{j}^N \in \Phi_N \cap S_1 \cap B^c(o,\mathit{l}_N^{-1}(z))}}\!\!\!\frac{\mathit{l}_N(x_j^N)}{\mathit{l}_N(x_j^N)+\Gamma_{cs}z}\biggl],
	\end{align*} 
	where (a) is from the expression of $\text{SINR}_{SS}(z)$ in~(\ref{SyncSigSINREq}), and (b) is because all the fadings variables are i.i.d. exponentially distributed with parameter 1. Therefore, $\tilde{P}_{M_{cs},N_{cs}} (z, \Gamma_{cs})$ can be obtained by applying the probability generating functional of the PPP~\cite{chiu2013stochastic}. Finally, the overall detection probability of the typical BS sector is obtained by de-conditioning on $z$.
\end{proof}

We can derive the overall cell search success probability as follows:
\begin{theorem}\label{CS_SP_Thm}
	The probability for the typical user to succeed the cell search is given by: 
	\allowdisplaybreaks
	\begin{align}\label{CS_SP_Thm_Eq}
	\allowdisplaybreaks
	\hat{P}_{M_{cs},N_{cs}}(\Gamma_{cs}) = 1-(1-P_{M_{cs},N_{cs}}(\Gamma_{cs}))^{K_{cs}},
	\end{align}
	where $P_{M_{cs},N_{cs}}(\Gamma_{cs})$ is derived in Lemma~\ref{CSSucProbPerSectorLemma}. 
\end{theorem}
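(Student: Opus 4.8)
The plan is to exploit the independence of the $K_{cs}$ BS sectors established in Section~\ref{CS_Analy_Model_SubSec}. First I would recall that, from the typical user's perspective, the plane is partitioned into $K_{cs} = \min(M_{cs},N_{cs})$ non-overlapping BS sectors, each of which is a rotated copy of $S_1$, and — because $\Phi_L$ and $\Phi_N$ are PPPs — the restrictions of the point process to distinct sectors are independent, and by stationarity of the PPP under rotation they are identically distributed. Consequently the detection events $\{$sector $j$ is detected$\}$ for $j = 1,\dots,K_{cs}$ are i.i.d. Bernoulli random variables, each with success probability $P_{M_{cs},N_{cs}}(\Gamma_{cs})$ as computed in Lemma~\ref{CSSucProbPerSectorLemma}.

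Next I would invoke the definition of a successful cell search given in Section~\ref{CS_Analy_Model_SubSec}: cell search succeeds if and only if \emph{at least one} BS sector is detected. Writing $e_0$ for the cell search success indicator and $D_j$ for the indicator that sector $j$ is detected, we have $\{e_0 = 0\} = \bigcap_{j=1}^{K_{cs}} \{D_j = 0\}$, so by independence $\mathbb{P}(e_0 = 0) = \prod_{j=1}^{K_{cs}} \mathbb{P}(D_j = 0) = (1 - P_{M_{cs},N_{cs}}(\Gamma_{cs}))^{K_{cs}}$. Taking complements gives $\hat{P}_{M_{cs},N_{cs}}(\Gamma_{cs}) = \mathbb{P}(e_0 = 1) = 1 - (1 - P_{M_{cs},N_{cs}}(\Gamma_{cs}))^{K_{cs}}$, which is exactly~(\ref{CS_SP_Thm_Eq}).

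The only genuine subtlety — and the step I would be most careful about — is justifying that the per-sector detection events are truly independent rather than merely identically distributed. This relies on two facts that are already in place in the excerpt: (i) the interference and the minimum path loss inside a given sector depend only on the points of $\Phi_L$ and $\Phi_N$ lying in that sector (this is the content of the strong Markov property argument used to derive~(\ref{SyncSigSINREq}) and the conditional probabilities in Lemma~\ref{CSSucProbPerSectorLemma}), and (ii) the fading variables are i.i.d. across all links. Since the sectors are disjoint, the independence property of the PPP (the restriction of a PPP to disjoint Borel sets yields independent processes) then makes the $D_j$ independent. One caveat worth a sentence in the proof: the model deliberately ignores the event that the path-loss-minimizing BS in a sector is in deep fade while another BS in that same sector is detectable, so within each sector "detection" is unambiguously tied to the single BS with minimum path loss, and there is no cross-sector coupling introduced by the serving-BS selection rule (which is applied only \emph{after} the independent detection events are resolved). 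With these points noted, the result follows immediately. $\blacksquare$
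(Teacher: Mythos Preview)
Your proposal is correct and follows essentially the same approach as the paper: the paper's proof simply observes that, because the BS process is a PPP and the $K_{cs}$ BS sectors are non-overlapping, the per-sector detection events are i.i.d.\ with success probability $P_{M_{cs},N_{cs}}(\Gamma_{cs})$, and then uses the ``at least one detected'' definition of cell search success to conclude. Your version spells out the independence justification (disjoint restrictions of a PPP, i.i.d.\ fadings) in more detail than the paper does, but the argument is the same.
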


\begin{proof}
	Since the BS process is PPP and all the BS sectors are non-overlapping, every BS sector can be detected by the typical user independently with success probability $P_{M_{cs},N_{cs}}(\Gamma_{cs})$. The proof is concluded by noting cell search is successful if the typical user is able to detect at least one BS sector.
\end{proof}

\begin{remark}
	Intuitively, by increasing the number of beamforming directions to search (i.e., $K_{cs}$), the synchronization signal received at the typical user is subject to less effective noise as well as less interference on average, and therefore a higher cell search success probability is expected. This observation will be validated more rigorously in Section~\ref{NumEvalSec}. 
\end{remark}

\subsubsection{Serving Path Loss Distribution} \label{ServingPLDistrSubSec}
Since the main objective of cell search is for the typical user to detect its neighboring BSs and make cell association decision, it is important to determine the path loss distribution from the typical user to its potential serving BS. For the rest of this paper, we call the potential serving BS of the typical user the ``tagged BS". 

Denote by $Z_0$ the path loss from the typical user to the tagged BS; it is the minimum path loss from the typical user to the BSs inside the detected BS sectors. By convention if cell search fails, we say that the potential serving BS to the typical user is infinitely far away and therefore $Z_0$ is infinity. Based on Lemma~\ref{CSSucProbPerSectorLemma}, we are able to derive the distribution of $Z_0$ as follows:
\begin{lemma}\label{minPLdistlemma}
	The complementary cumulative distribution function (CCDF) of the path loss from the typical user to the tagged BS is given by:
	\allowdisplaybreaks
	\begin{align}\label{minPLdistEq}
	\allowdisplaybreaks
	\mathbb{P}(Z_0 \geq z_0 ) = \biggl(P_{M_{cs},N_{cs}}(z_0,\Gamma_{cs}) + 1- P_{M_{cs},N_{cs}}(\Gamma_{cs})\biggl)^{K_{cs}},
	\end{align}
	where $P_{M_{cs},N_{cs}}(z_0,\Gamma_{cs}) \triangleq  \int_{z_0}^{\infty} \tilde{P}_{M_{cs},N_{cs}}(z,\Gamma_{cs}) f_{Z_1}(z) {\rm d}z$. In addition, the PDF of $Z_0$ is given by:
	\begin{align*}
	f_{Z_0}(z_0) = K_{cs}\biggl(P_{M_{cs},N_{cs}}(z_0,\Gamma_{cs}) + 1- P_{M_{cs},N_{cs}}(\Gamma_{cs})\biggl)^{K_{cs}-1} \!\! \tilde{P}_{M_{cs},N_{cs}}(z_0,\Gamma_{cs}) f_{Z_1}(z_0),
	\end{align*}	
	where the notation and functions are defined in Table~\ref{SysParaTable}.
\end{lemma}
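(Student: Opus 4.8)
The plan is to express the event $\{Z_0 \ge z_0\}$ as an intersection of per-sector events and exploit independence across the $K_{cs}$ BS sectors. Write $A_j$ for the event that the $j$-th BS sector is detected and $Z_1^{(j)}$ for the minimum path loss from the typical user to the BSs inside that sector; by the definition of the tagged BS we have $Z_0 = \min_{j:\,A_j \text{ holds}} Z_1^{(j)}$, with the usual convention that a minimum over the empty index set equals $+\infty$ (this is precisely the cell-search-failure case). Hence $\{Z_0 \ge z_0\} = \bigcap_{j=1}^{K_{cs}} \big(A_j^c \cup \{Z_1^{(j)} \ge z_0\}\big)$. Since $\Phi_L$ and $\Phi_N$ are PPPs (with i.i.d.\ fading marks) and the BS sectors are disjoint, the restrictions of the point processes to the sectors are i.i.d.\ across $j$; as both $A_j$ and $Z_1^{(j)}$ are functions of the restriction to sector $j$, the pairs $(A_j, Z_1^{(j)})$ are i.i.d.\ — exactly the independence already invoked in the proof of Theorem~\ref{CS_SP_Thm}. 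This factors the CCDF as $\mathbb{P}(Z_0 \ge z_0) = \mathbb{P}\big(A_1^c \cup \{Z_1 \ge z_0\}\big)^{K_{cs}}$.

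Next I would evaluate the single-sector probability. The events $A_1^c$ and $A_1 \cap \{Z_1 \ge z_0\}$ are disjoint, so $\mathbb{P}\big(A_1^c \cup \{Z_1 \ge z_0\}\big) = \mathbb{P}(A_1^c) + \mathbb{P}(A_1, Z_1 \ge z_0)$. The first term is $1 - P_{M_{cs},N_{cs}}(\Gamma_{cs})$ by Lemma~\ref{CSSucProbPerSectorLemma}. For the second term, the key point is that $A_1$ and $Z_1$ are \emph{not} independent, since $A_1 = \{\text{SINR}_{SS}(Z_1) > \Gamma_{cs}\}$ depends on $Z_1$ through~(\ref{SyncSigSINREq}); the correct move is to condition on $Z_1 = z$ and use the conditional detection probability $\tilde{P}_{M_{cs},N_{cs}}(z,\Gamma_{cs})$ of~(\ref{GFuncDefnEq}), giving $\mathbb{P}(A_1, Z_1 \ge z_0) = \int_{z_0}^{\infty} \tilde{P}_{M_{cs},N_{cs}}(z,\Gamma_{cs}) f_{Z_1}(z)\,{\rm d}z = P_{M_{cs},N_{cs}}(z_0,\Gamma_{cs})$. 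Substituting the two terms into the factored expression yields the claimed CCDF in~(\ref{minPLdistEq}), and one checks consistency at $z_0 \to 0$ (it reduces to $1$, since $P_{M_{cs},N_{cs}}(0,\Gamma_{cs}) = P_{M_{cs},N_{cs}}(\Gamma_{cs})$) and at $z_0 \to \infty$ (it reduces to $(1-P_{M_{cs},N_{cs}}(\Gamma_{cs}))^{K_{cs}}$, the probability of cell-search failure, matching Theorem~\ref{CS_SP_Thm}).

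Finally, the PDF follows by differentiating the CCDF in $z_0$. The only $z_0$-dependence is through $P_{M_{cs},N_{cs}}(z_0,\Gamma_{cs})$, and by the fundamental theorem of calculus $\frac{{\rm d}}{{\rm d}z_0} P_{M_{cs},N_{cs}}(z_0,\Gamma_{cs}) = -\,\tilde{P}_{M_{cs},N_{cs}}(z_0,\Gamma_{cs}) f_{Z_1}(z_0)$. Applying the chain rule to the $K_{cs}$-th power and using $f_{Z_0}(z_0) = -\frac{{\rm d}}{{\rm d}z_0}\mathbb{P}(Z_0 \ge z_0)$ produces the stated formula for $f_{Z_0}$. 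I expect the only genuine subtlety to be the non-independence of $A_1$ and $Z_1$ within a sector, which is resolved cleanly by conditioning on $Z_1$; the cell-search-failure case requires no separate treatment, as it is automatically absorbed by the additive term $1 - P_{M_{cs},N_{cs}}(\Gamma_{cs})$ in each of the $K_{cs}$ factors.
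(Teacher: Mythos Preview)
Your proposal is correct and follows essentially the same approach as the paper: both express $\{Z_0 \ge z_0\}$ as the intersection over sectors of ``not detected or detected with minimum path loss $\ge z_0$,'' invoke i.i.d.\ behavior across the $K_{cs}$ disjoint sectors, and compute the single-sector probability by conditioning on $Z_1$ to obtain $P_{M_{cs},N_{cs}}(z_0,\Gamma_{cs}) + 1 - P_{M_{cs},N_{cs}}(\Gamma_{cs})$. Your write-up is in fact slightly more explicit than the paper's (you formalize the min-over-detected-sectors representation and spell out the PDF differentiation and limiting checks), but the logical structure is identical.
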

\begin{proof}
	 Note that if the typical user is unable to detect a certain BS sector, the BSs inside this sector can be seen as infinitely far away from the typical user, or equivalently having an infinite path loss. Therefore, 
	 $Z_0 \geq z_0$ is equivalent to the fact that for any BS sector, either this sector cannot be detected, or this sector is detected and the minimum path loss from the typical user to BSs inside this sector is greater than or equal to $z_0$. For the typical BS sector, the above events happen with the following probability:
	 \allowdisplaybreaks
	 \begin{align}\label{MinPL_post_CS_Proof_Eq1}
	 \allowdisplaybreaks
	 &\mathbb{P}(Z_1 \geq z_0 | \text{SINR}_{SS}(Z_1) \geq \Gamma_{cs}) \times \mathbb{P}(\text{SINR}_{SS}(Z_1) \geq \Gamma_{cs}) +  \mathbb{P}(\text{SINR}_{SS}(Z_1) < \Gamma_{cs})\nonumber\\
	 =&\mathbb{P}(Z_1 \geq z_0 \cap \text{SINR}_{SS}(Z_1) \geq \Gamma_{cs}) + 1-\mathbb{P}(\text{SINR}_{SS}(Z_1) \geq \Gamma_{cs})]\nonumber\\
	 =&\int_{0}^{\infty} \mathbbm{1}_{z_1 \geq z_0} \mathbb{P}(\text{SINR}_{SS}(z_1) \geq \Gamma_{cs}) f_{Z_1}(z_1){\rm d}z_1 + 1-P_{M_{cs},N_{cs}}(\Gamma_{cs})\nonumber\\
	 =&P_{M_{cs},N_{cs}}(z_0,\Gamma_{cs}) + 1-P_{M_{cs},N_{cs}}(\Gamma_{cs}),
	 \end{align}
	where $Z_1$ denotes the minimum path loss from typical user to BSs inside the typical BS sector. Finally, we can obtain~(\ref{minPLdistEq}) since the detection events for the BS sectors are independent from each other, and $Z_0 \geq z_0$ is equivalent to~(\ref{MinPL_post_CS_Proof_Eq1}) is satisfied by all BS sectors.
\end{proof}

\begin{remark}
	It is straightforward that $\lim\limits_{z_0 \rightarrow \infty} \mathbb{P}(Z_0 \geq z_0) = 1- \hat{P}_{M_{cs},N_{cs}}(\Gamma_{cs})$, which means the probability for $Z_0$ to have a mass at infinity is equal to the probability that typical user fails cell search. Since $\hat{P}_{M_{cs},N_{cs}}(\Gamma_{cs})$ is 
	non-decreasing w.r.t. $K_{cs}$, $Z_0$ has a lighter tail as $K_{cs}$ increases. 
\end{remark}

\subsection{Success Probability for Random Access}\label{RA_SP_SubSec}
According to Section~\ref{CSSubSec}, users that succeed cell search will initiate the random access procedure, where BSs and users sweep through $M_{ra} \times N_{ra}$ transmit-receive beam pairs over the uplink synchronously.  
Since each user can initiate random access only upon successful cell search, the users that are involved in the random access process has intensity $\lambda_u \hat{P}_{M_{cs},N_{cs}}(\Gamma_{cs})$. For analytical tractability, we assume the user process during RA is approximated by a homogeneous PPP with intensity $\lambda_u \hat{P}_{M_{cs},N_{cs}}(\Gamma_{cs})$, and we will show in Remark~\ref{ESFRmk} and Section~\ref{IANumSubSec} that this is a reasonable approximation. 
Since random access is successful if the RA preamble of the typical user can be decoded by the tagged BS without any collision, the success probability for random access is derived in the following two parts.

\subsubsection{No RA Preamble Collision Probability} The RA preamble collision happens at the typical user when there exists another user such that: (1) it tries to associate with the tagged BS; (2) it chooses the same RA preamble sequence as the typical user, and (3) the tagged BS receives the RA preamble from this user under the same receive beam as the typical user. Therefore, the probability that the typical user has no RA premable collision is as follows:
\begin{lemma}\label{No_PA_Collision_Lemma}
	The probability that the typical user is not subject to RA preamble collision can be approximated by:
	\allowdisplaybreaks
	\begin{align}\label{Prob_No_PA_Collision_Eq}
	P_{co} \approx \exp(-\frac{1.28\lambda_u \hat{P}_{M_{cs},N_{cs}}(\Gamma_{cs})}{\lambda N_{PA} M_{ra}}),
	\end{align}
	where $\hat{P}_{M_{cs},N_{cs}}(\Gamma_{cs})$ is derived in Theorem~\ref{CS_SP_Thm}, and other parameters are defined in Table~\ref{SysParaTable}.
\end{lemma}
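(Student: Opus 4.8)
The plan is to realize the set of users that can collide with the typical user's RA preamble as a sequence of independent thinnings of the random-access user process, and then approximate the resulting point count by a Poisson random variable, so that $P_{co}$ becomes the void probability of that thinned process.

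First I would condition on the tagged BS of the typical user (the BS selected during cell search) and count the other RA users that point toward it. By the approximation stated just before the lemma, the users active during random access form a homogeneous PPP $\Phi_u'$ of intensity $\lambda_u \hat{P}_{M_{cs},N_{cs}}(\Gamma_{cs})$, independent of $\Phi$. Since association is by minimum path loss and the typical user lies in the path-loss cell of the tagged BS, this cell is the size-biased (zero) cell of the associated tessellation. Invoking the standard cellular-stochastic-geometry approximation that the mean area of the zero cell is $1.28$ times that of the typical cell — whose mean area is $1/\lambda$ — the expected number of \emph{other} active users co-associated with the tagged BS is approximately $1.28\,\lambda_u \hat{P}_{M_{cs},N_{cs}}(\Gamma_{cs})/\lambda$. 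Here I would remark that the constant $1.28$ is essentially scale-free and is carried over from the pure Poisson--Voronoi case to the LOS/NLOS path-loss cells used here, and that this is why $P_{co}$ ends up independent of the realized value $Z_0$.

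Next I would apply two further independent thinnings to these co-associated users. Each selects the typical user's preamble with probability $1/N_{PA}$, independently and uniformly over the $N_{PA}$ orthogonal sequences. Conditioned on being co-associated and on having chosen the same preamble, such a user actually causes a collision only if the tagged BS hears it on the same receive beam as the typical user during the synchronous RA sweep; since the $M_{ra}$ receive beams partition the angular domain into equal sectors and the bearing of a co-associated user from the tagged BS is approximately uniform, this happens with probability $1/M_{ra}$. Using the Poisson character of $\Phi_u'$ together with the independence of position, preamble choice, and beam index, the number of colliding users is approximately Poisson with mean
\[
\mu_{col} \approx \frac{1.28\,\lambda_u \hat{P}_{M_{cs},N_{cs}}(\Gamma_{cs})}{\lambda}\cdot\frac{1}{N_{PA}}\cdot\frac{1}{M_{ra}} = \frac{1.28\,\lambda_u \hat{P}_{M_{cs},N_{cs}}(\Gamma_{cs})}{\lambda N_{PA} M_{ra}},
\]
and $P_{co}$ is the probability that this count is zero, i.e. $P_{co}\approx e^{-\mu_{col}}$, which is exactly $(\ref{Prob_No_PA_Collision_Eq})$ with $\hat{P}_{M_{cs},N_{cs}}(\Gamma_{cs})$ as in Theorem~\ref{CS_SP_Thm}.

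The main obstacle — and the sole source of the ``$\approx$'' rather than an identity — is the characterization of the number of co-associated users: the number of users in a path-loss cell is not exactly Poisson, the zero-cell area is not exactly $1.28/\lambda$ for the LOS/NLOS weighted tessellation, and a colliding user's bearing from the tagged BS is not exactly uniform. I would therefore justify these steps by appealing to the well-documented accuracy of the $1.28$ zero-cell constant in cellular network analysis and validate the end result for $P_{co}$ numerically in the simulation section.
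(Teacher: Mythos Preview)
Your proposal is correct and follows essentially the same route as the paper: mean size-biased cell area $1.28/\lambda$, independent thinning by $1/N_{PA}$ and $1/M_{ra}$, and the PPP void probability to conclude. Your write-up is simply more explicit about the approximation sources than the paper's brief proof.
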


\begin{proof}
	Since the association from the typical user to the tagged BS is stationary~\cite{singh2014onassoication}, the mean associated cell size of the tagged BS is given by: $ \frac{1.28}{\lambda}$. In particular, the factor of ``1.28" is due to the fact that the association cell of the tagged BS is an area-biased version to that of a typical BS~\cite{singh2014onassoication}, whose accuracy has been verified in~\cite{single2013offloading}. Since each user randomly chooses its RA preamble sequence out of $N_{PA}$ total sequences, and the receive beamwidth of the tagged BS is $\frac{2\pi}{M_{ra}}$, the probability that a user associated with the tagged BS collides with the typical user for random access is $\frac{1}{N_{PA} M_{ra}}$. As will be demonstrated in Remark~\ref{ESFRmk}, the user process during random access can be accurately modeled by a stationary PPP with intensity $\lambda_u \hat{P}_{M_{cs},N_{cs}(\Gamma_{cs})}$. Therefore, the proof can be concluded from the void probability of the PPP~\cite{chiu2013stochastic}.
\end{proof}

\subsubsection{Successful Reception Probability of RA Preamble} The RA preamble sequence of the typical user is successfully decoded if its received SINR at the tagged BS is greater than or equal to $\Gamma_{ra}$. 
For simplicity, we assume perfect RA preamble sequences are used, so that they have a delta function as their auto-correlation functions and zero as their cross-correlation functions. Thus, only the users choosing the same RA preamble sequence as the typical user can potentially interferer with it. Conditionally on the path loss from the typical user to the tagged BS, the successful reception probability of the RA preamble is as follows:
\begin{lemma}\label{RA_SINR_COP_Lemma}
 	Denote by $Z_0$ the path loss from the typical user to the tagged BS, the probability that the RA preamble of the typical user can be successfully received by the tagged BS is: 
 	\allowdisplaybreaks
 	\begin{align}\label{RA_SINR_COP_Lemma_Eq}
 	\allowdisplaybreaks
 	P_{ra} (Z_0,\Gamma_{ra}) =& \exp\biggl(-\frac{\Gamma_{ra} Z_0 \sigma^2}{P_uM_{ra} G(2\pi/N)}\biggl) U\biggl(Z_0,\Gamma_{ra},\frac{\lambda_u \hat{P}_{M_{cs},N_{cs}}(\Gamma_{cs})}{N M_{ra} N_{PA}}\biggl) \nonumber\\
 	 &\times U\biggl(Z_0,\frac{g(2\pi/N)}{G(2\pi/N)}\Gamma_{ra},(1-\frac{1}{N})\frac{\lambda_u \hat{P}_{M_{cs},N_{cs}}(\Gamma_{cs})}{M_{ra} N_{PA}}\biggl) ,
 	\end{align}
 	where $N = \max(N_{cs},N_{ra})$, and $U$ is defined in~(\ref{SpecFnDefnEq}).
\end{lemma}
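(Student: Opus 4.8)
The plan is to condition on the tagged-BS path loss $Z_0$ and reuse the exponential-fading / probability-generating-functional (PGFL) argument from the proof of Lemma~\ref{CSSucProbPerSectorLemma}, now on the uplink. First I would make the conditional preamble SINR explicit. After normalizing the received power by $P_u$, by the BS receive main-lobe gain $M_{ra}$ (the BS has zero side lobe, so only interferers lying inside the $\tfrac{2\pi}{M_{ra}}$-wide receive sector aimed at the typical user are heard, and they are heard with exactly that gain), and by the user transmit gain $G(2\pi/N)$, the desired term becomes $F_0/Z_0$ with $F_0$ unit-mean exponential, and the effective noise becomes $\tfrac{\sigma^2}{P_u M_{ra} G(2\pi/N)}$. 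Modelling the users active in RA as a homogeneous PPP of intensity $\lambda_u \hat{P}_{M_{cs},N_{cs}}(\Gamma_{cs})$ (the approximation introduced before Lemma~\ref{No_PA_Collision_Lemma} and justified in Remark~\ref{ESFRmk}), the interferers are those active users that (i) picked the typical user's preamble, (ii) fall in the tagged BS's receive sector, and (iii) in the relevant RA slot illuminate the tagged BS through a transmit main lobe or through a side lobe. Treating (i)--(iii) as independent thinnings produces two independent interfering PPPs: one of intensity $\tfrac{\lambda_u \hat{P}_{M_{cs},N_{cs}}(\Gamma_{cs})}{N M_{ra} N_{PA}}$ whose points reach the tagged BS with gain $G(2\pi/N)$, and one of intensity $(1-\tfrac1N)\tfrac{\lambda_u \hat{P}_{M_{cs},N_{cs}}(\Gamma_{cs})}{M_{ra} N_{PA}}$ whose points reach it with gain $g(2\pi/N)$; after the same normalization a main-lobe interferer at distance $r$ from the tagged BS contributes $F/\mathit{l}(r)$ and a side-lobe one contributes $\tfrac{g(2\pi/N)}{G(2\pi/N)}F/\mathit{l}(r)$, the link being LOS with probability $h(r)$ and NLOS otherwise.

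Second, since $F_0\sim\mathrm{Exp}(1)$ is independent of everything else, $P_{ra}(Z_0,\Gamma_{ra}) = \mathbb{P}(\text{SINR}_{ra}\geq\Gamma_{ra}\mid Z_0) = \mathbb{E}\!\left[\exp\!\left(-\Gamma_{ra}Z_0\left(\nu + I_{\mathrm{m}} + I_{\mathrm{s}}\right)\right)\right]$, where $\nu$ is the effective noise and $I_{\mathrm{m}},I_{\mathrm{s}}$ are the main- and side-lobe interference sums. The noise factors out as $\exp(-\Gamma_{ra}Z_0\nu)$, and since $I_{\mathrm{m}}$ and $I_{\mathrm{s}}$ are sums over disjoint, independently thinned PPPs, the remaining expectation splits into the product of their two Laplace transforms evaluated at $\Gamma_{ra}Z_0$.

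Third, I would compute each Laplace transform exactly as in Lemma~\ref{CSSucProbPerSectorLemma}: average out the unit-mean exponential fading, using $\mathbb{E}_F[\exp(-sF/\mathit{l}(r))] = \mathit{l}(r)/(\mathit{l}(r)+s)$, then apply the PGFL of the interfering PPP with its LOS/NLOS-split intensity. For $I_{\mathrm{m}}$ this produces precisely $U\!\left(Z_0,\Gamma_{ra},\tfrac{\lambda_u \hat{P}_{M_{cs},N_{cs}}(\Gamma_{cs})}{N M_{ra} N_{PA}}\right)$, with $U$ as defined in~(\ref{SpecFnDefnEq}); for $I_{\mathrm{s}}$ the gain ratio $g(2\pi/N)/G(2\pi/N)$ multiplies $\Gamma_{ra}Z_0$ inside the exponent, which is absorbed by replacing the second argument of $U$ by $\tfrac{g(2\pi/N)}{G(2\pi/N)}\Gamma_{ra}$, giving $U\!\left(Z_0,\tfrac{g(2\pi/N)}{G(2\pi/N)}\Gamma_{ra},(1-\tfrac1N)\tfrac{\lambda_u \hat{P}_{M_{cs},N_{cs}}(\Gamma_{cs})}{M_{ra} N_{PA}}\right)$. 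Multiplying the three factors yields~(\ref{RA_SINR_COP_Lemma_Eq}). A point worth stressing is why the radial integrals here start at $0$ (function $U$) rather than at $\mathit{l}_L^{-1}(Z_0),\mathit{l}_N^{-1}(Z_0)$ (function $V$, as in Lemma~\ref{CSSucProbPerSectorLemma}): conditioning on $Z_0$ is a statement about the typical user's distance to its tagged BS, and since users and BSs are independent PPPs it imposes no exclusion region on the other users, so the interfering PPPs see no forbidden ball around the tagged BS.

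I expect the Laplace-transform algebra to be routine given Lemma~\ref{CSSucProbPerSectorLemma}; the real work is in justifying the interference model. In particular I would have to argue that the RA-active user process is well approximated by a homogeneous PPP (leaning on Remark~\ref{ESFRmk} and Section~\ref{IANumSubSec}), that retaining the same-preamble users, the users inside the tagged BS's receive sector, and the users whose transmit lobe (main vs.\ side) illuminates the tagged BS may be modelled as mutually independent thinnings with probabilities $1/N_{PA}$, $1/M_{ra}$, and $1/N$ (resp.\ $1-1/N$) --- this is where rotational symmetry of the aggregate interference at the tagged BS and independence of the typical user's bearing from the interferer locations enter --- and that the thinned processes are themselves homogeneous PPPs so that the PGFL applies.
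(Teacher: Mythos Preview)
Your proposal is correct and follows essentially the same approach as the paper's proof: set up the uplink SINR at the tagged BS, model the same-preamble RA users as a thinned PPP, split interferers into main-lobe and side-lobe contributions (the paper does this via indicator variables $\delta_i^L,\delta_j^N$ with $\mathbb{E}[\delta]=1/N$, you do it via an explicit $1/N$ vs.\ $1-1/N$ thinning, which is equivalent), and then apply the exponential-fading plus PGFL machinery of Lemma~\ref{CSSucProbPerSectorLemma}. Your explicit remark on why $U$ rather than $V$ appears (no exclusion ball around the tagged BS because conditioning on $Z_0$ constrains only the typical user, not the independent user PPP) is a useful addition that the paper leaves implicit.
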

\begin{proof}
The proof is provided in Appendix~\ref{RA_SINR_COP_Lemma_Proof_Apdx}.
\end{proof}

Since $P_{ra} (Z_0,\Gamma_{ra}) = 0$ when $Z_0 = \infty$, 
the overall success probability of the initial access procedure can be obtained by combining Lemma~\ref{minPLdistlemma}, Lemma~\ref{No_PA_Collision_Lemma} and Lemma~\ref{RA_SINR_COP_Lemma}, which gives:

\begin{theorem}\label{Overall_IA_SP_Thm}
	The initial access success probability for the typical user is given by:
	\allowdisplaybreaks
	\begin{align}\label{IA_SP_Eq}
	\allowdisplaybreaks
	\eta_{IA} = &\int_{0}^{\infty}K_{cs}\biggl(P_{M_{cs},N_{cs}}(z_0,\Gamma_{cs}) + 1- P_{M_{cs},N_{cs}}(\Gamma_{cs})\biggl)^{K_{cs}-1} \nonumber \\
	 & \times \tilde{P}_{M_{cs},N_{cs}}(z_0,\Gamma_{cs}) \times P_{co}  \times P_{ra} (z_0,\Gamma_{ra})  f_{Z_1}(z_0) {\rm d}z_0,
	\end{align}
	where the notation and functions are defined in Table~\ref{SysParaTable}. 
\end{theorem}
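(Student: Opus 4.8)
The plan is to begin from the definition $\eta_{IA} = \mathbb{E}(e_0 \delta_0)$ and to decompose the joint cell-search-and-random-access success event by conditioning on the path loss $Z_0$ from the typical user to the tagged BS. First I would note that the random access success indicator $\delta_0$ already presupposes a successful cell search (a user that fails cell search does not transmit in the RA phase), so $e_0\delta_0 = \delta_0$; moreover, if cell search fails then $Z_0 = \infty$ by convention, and since the effective-noise term in the exponential of $P_{ra}(Z_0,\Gamma_{ra})$ in~(\ref{RA_SINR_COP_Lemma_Eq}) diverges as $Z_0 \to \infty$, we have $P_{ra}(\infty,\Gamma_{ra}) = 0$. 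Hence the cell-search-failure event contributes nothing and it suffices to work on $\{Z_0 < \infty\}$, whose density part is $f_{Z_0}$ from Lemma~\ref{minPLdistlemma} (recall $\int_0^\infty f_{Z_0}(z_0)\,{\rm d}z_0 = \hat{P}_{M_{cs},N_{cs}}(\Gamma_{cs})$, the remaining mass sitting at infinity, consistent with the remark after that lemma).

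Second, I would write $\delta_0 = \mathbbm{1}\{\text{no RA preamble collision}\}\cdot\mathbbm{1}\{\text{SINR of RA preamble}\geq\Gamma_{ra}\}$ and argue that, conditionally on $Z_0 = z_0$, the expectation of this product factorizes as $P_{co}\cdot P_{ra}(z_0,\Gamma_{ra})$. The collision indicator is a function of which other active users pick the typical user's preamble sequence and are received in the same RA beam at the tagged BS — an independent thinning of the active-user process — while the SINR indicator is a function of the fading powers and of the aggregate interference generated by the co-preamble users; under the PPP approximation for the active-user process (intensity $\lambda_u \hat{P}_{M_{cs},N_{cs}}(\Gamma_{cs})$, justified in Remark~\ref{ESFRmk}) the independent-thinning property of the PPP makes these two functionals depend on independent randomness once $Z_0$ has been conditioned out, so the expectation splits. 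Here $P_{co}$ is supplied by Lemma~\ref{No_PA_Collision_Lemma} and does not depend on $z_0$, whereas $P_{ra}(z_0,\Gamma_{ra})$ is supplied by Lemma~\ref{RA_SINR_COP_Lemma}.

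Third, I would de-condition on $Z_0$ by integrating against its density, obtaining
\begin{align*}
\eta_{IA} = \int_{0}^{\infty} P_{co}\, P_{ra}(z_0,\Gamma_{ra})\, f_{Z_0}(z_0)\,{\rm d}z_0,
\end{align*}
and then substitute the explicit form of $f_{Z_0}$ from Lemma~\ref{minPLdistlemma}, namely $f_{Z_0}(z_0) = K_{cs}\bigl(P_{M_{cs},N_{cs}}(z_0,\Gamma_{cs}) + 1 - P_{M_{cs},N_{cs}}(\Gamma_{cs})\bigr)^{K_{cs}-1}\tilde{P}_{M_{cs},N_{cs}}(z_0,\Gamma_{cs})\,f_{Z_1}(z_0)$, which reproduces~(\ref{IA_SP_Eq}) verbatim. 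The leading factor $K_{cs}$ and the exponent $K_{cs}-1$ in~(\ref{IA_SP_Eq}) are then seen to be precisely those carried by $f_{Z_0}$, so the theorem is literally ``integrate $P_{co}P_{ra}f_{Z_0}$'' with nothing left over.

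The main obstacle is the factorization step: the collision event and the SINR-reception event are built from the same realization of the active-user point process, so their conditional independence given $Z_0$ is not exact but rests on the PPP approximation together with independent thinning by the pair (preamble index, receive beam). I would make this explicit so that the reader sees the ``$\approx$'' already carried by $P_{co}$ in Lemma~\ref{No_PA_Collision_Lemma} is exactly the source of approximation in the final expression, and that the de-conditioning on $Z_0$ itself introduces no further error since Lemma~\ref{minPLdistlemma}, Lemma~\ref{No_PA_Collision_Lemma} and Lemma~\ref{RA_SINR_COP_Lemma} are already all stated conditionally on, or in terms of, $Z_0$.
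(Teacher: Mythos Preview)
Your proposal is correct and follows essentially the same approach as the paper, which simply states that the result follows by combining Lemma~\ref{minPLdistlemma}, Lemma~\ref{No_PA_Collision_Lemma} and Lemma~\ref{RA_SINR_COP_Lemma} together with the observation $P_{ra}(\infty,\Gamma_{ra})=0$. Your version is more explicit about why the conditional expectation factorizes into $P_{co}\cdot P_{ra}(z_0,\Gamma_{ra})$ and where the PPP approximation enters, which is useful elaboration but not a different route.
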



\begin{remark}\label{ESFRmk}
Denote by $\Phi_u^{''}$ the users that succeed initial access, Theorem~\ref{Overall_IA_SP_Thm} shows the intensity of $\Phi_u^{''}$ is $\lambda_u \eta_{IA}$. Intuitively, $\Phi_u^{''}$ is expected to exhibit spatial clustering since the users in $\Phi_u^{''}$ should be centered around BSs and sparse at cell edges. In Fig.~\ref{ESF}, we plot the empty space function (ESF) of $\Phi_u^{''}$ which is defined as $F(r) \triangleq \mathbb{P}_{\Phi}^{0}\left( \min\{\|u\| : u \in \Phi_u^{''} \} \leq r \right)$, where $\mathbb{P}_{\Phi}^{0}$ denotes the Palm distribution of BS process $\Phi$. Fig.~\ref{ESF} shows that $\Phi_u^{''}$ has a smaller ESF than its fitted PPP, which means $\Phi_u^{''}$ exhibits clustered pattern~\cite{li2015statistical}. In fact, Fig.~\ref{ESF} also shows that for most range of $r$, the ESF of $\Phi_u^{''}$ falls within the 95\% confidence interval created by its fitted PPP. Therefore, we still assume $\Phi_u^{''}$ is modeled by a PPP with intensity $\lambda_u \eta_{IA}$ for analytical simplicity. The accuracy of this assumption will be validated in Section~\ref{NumEvalSec}. 
\end{remark}

\begin{figure}[h]
	\begin{subfigure}[b]{0.46\textwidth}
		\centering
		\includegraphics[height=1.8in, width= 2.9in]{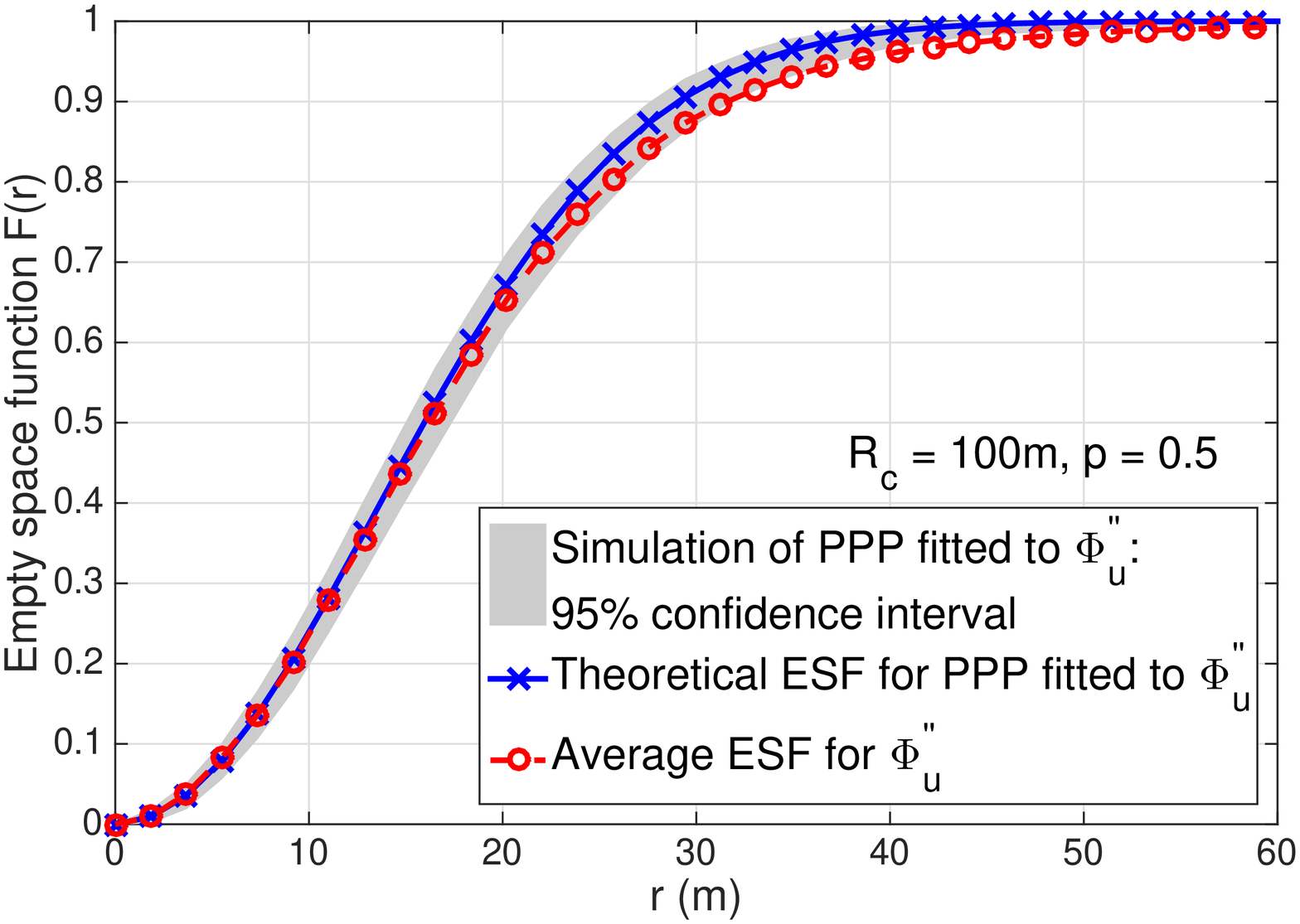}
		\caption{Generalized LOS ball model }
		\label{ESF_LOS}
	\end{subfigure}
	\hfill
	\begin{subfigure}[b]{0.46\textwidth}
		\centering
		\includegraphics[height=1.8in, width=2.9in]{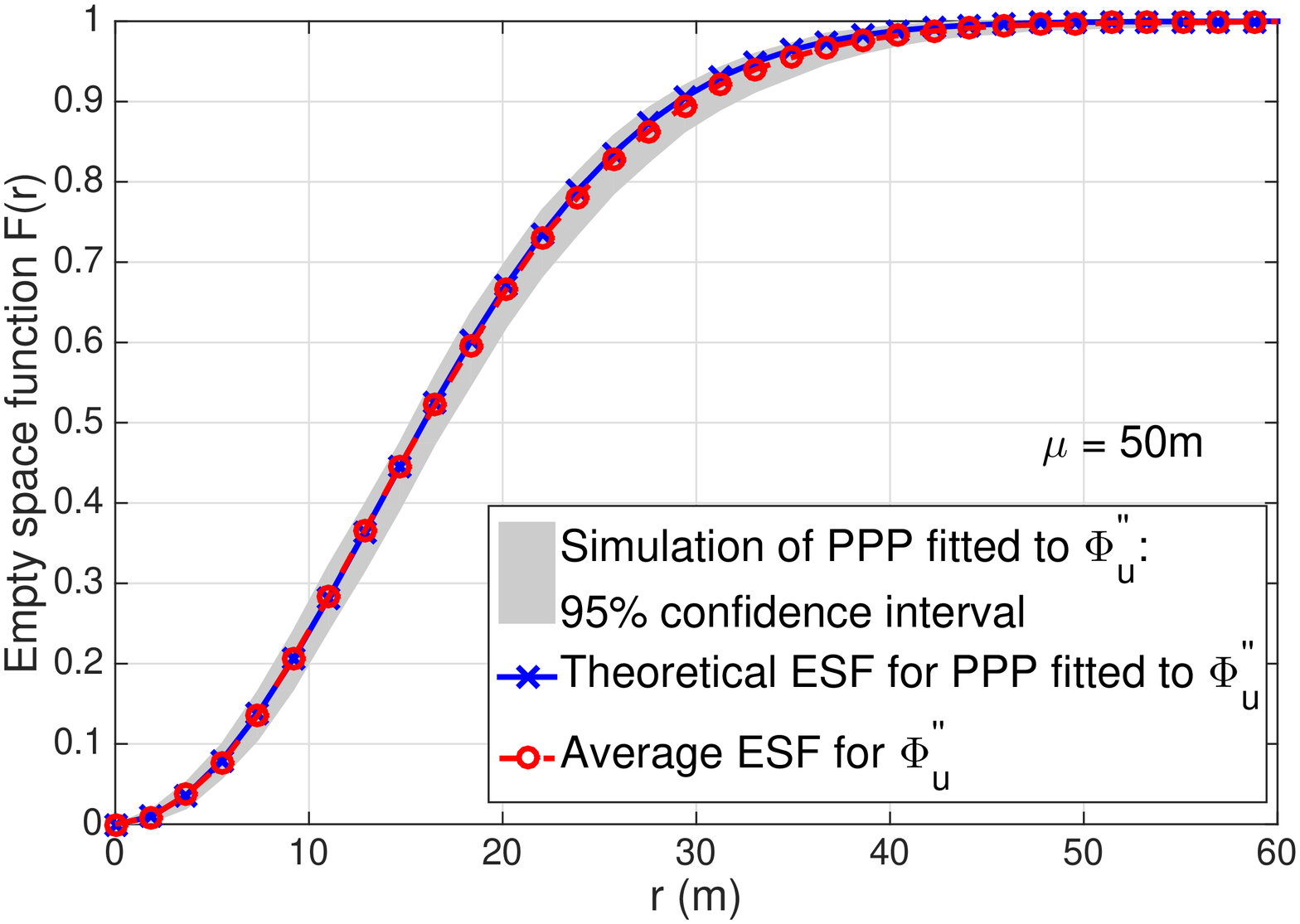}
		\caption{Exponential blockage model }
		\label{ESF_Exp}
	\end{subfigure}
	\caption{Empty space function comparison of $\Phi_u^{''}$ and its fitted PPP ($M = 8$, $N =4$).}\label{ESF}
\end{figure}

Based on Theorem~\ref{Overall_IA_SP_Thm}, the expected initial access delay defined in~(\ref{ExpDelayExpr}) can be easily evaluated, which will be discussed in more detail in Section~\ref{NumEvalSec}. 

\section{Downlink SINR Distribution and User-perceived Throughput}\label{DLUPTSect}
In this section, the downlink SINR distribution and the user-perceived throughput are derived. 

During the data transmission phase in Fig.~\ref{fig:time_Structure}, each BS randomly schedules one of its associated users, and the beam directions of the BS and its scheduled user are aligned. 
We assume the typical user has succeeded initial access and is scheduled by the tagged BS, such that the CCDF of its conditional downlink data SINR is given by:
\begin{align}\label{DL_SINR_Exp_Eq}
P_{DL}(\Gamma) = \mathbb{P}(\text{SINR}_{DL} \geq \Gamma | e_0\delta_0 = 1), 
\end{align}
where $\text{SINR}_{DL}$ represents the SINR of the typical user, while $e_0$ and $\delta_0$ represent the success indicator for cell search and random access respectively. 

Since a random scheduler is used, we assume i.i.d. beam directions of the interfering BSs to the typical user. Despite every interfering BS has positive probability to have zero associated users, we assume it is actively transmitting for analytical simplicity. Although this overestimates the interference at the typical user, we will show in Section~\ref{NumEvalSec} that the effect is negligible.  
Based on the assumptions above, the expression of $P_{DL}(\Gamma)$ is derived in the following lemma:
\begin{lemma}\label{CCDF_Data_SINR_Lemma}
	The CCDF of the SINR of the typical user given it succeeds the initial access is approximated by:
	\allowdisplaybreaks
	\begin{align*}
	\allowdisplaybreaks
	P_{DL}(\Gamma) =& \frac{1}{\eta_{IA}}\int_{0}^{\infty} \exp\biggl(-\frac{\Gamma z\sigma^2}{P_bMN}\biggl) K_{cs} \biggl[V(z,\Gamma,\frac{\lambda}{M K_{cs}})P_{M_{cs},N_{cs}}(z,\Gamma_{cs}) + U(z,\Gamma,\frac{\lambda}{M K_{cs}}) \nonumber \\
	&\times   (1 - P_{M_{cs},N_{cs}}(\Gamma_{cs})) \biggl]^{q-1} \biggl[ P_{M_{cs},N_{cs}}(z,\Gamma_{cs}) V(z,\frac{g(2\pi/N)}{G(2\pi/N)}  \Gamma,\frac{\lambda}{M K_{cs}})  +  (1 - P_{M_{cs},N_{cs}}(\Gamma_{cs})) \nonumber \\
	& \times U(z,\frac{g(2\pi/N)}{G(2\pi/N)}  \Gamma,\frac{\lambda}{M K_{cs}})\biggl]^{K_{cs}-q} \!\!\!\!  V(z,\Gamma,\frac{\lambda}{M K_{cs}}) \tilde{P}_{M_{cs},N_{cs}} (z, \Gamma_{cs}) P_{ra} (z,\Gamma_{ra}) P_{co} f_{Z_1}(z) {\rm d}z,
	\end{align*}
	where $q = \frac{K_{cs}}{N}$, and other notation and functions are all defined in Table~\ref{SysParaTable}. 
\end{lemma}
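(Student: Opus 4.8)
The plan is to condition on the path loss $Z_0$ from the typical user to the tagged BS and on the identity of the tagged BS sector, then compute the downlink SINR complementary CDF by taking the Laplace transform of the interference, and finally de-condition. First I would write $\text{SINR}_{DL} = \frac{F_0 M N / Z_0}{\sigma^2/P_b + I}$, where $I$ collects the interference from all other BSs; each interferer is either in a detected BS sector or not, and its beam to the typical user points in the aligned main-lobe direction with probability $1/N$ (gain $G(2\pi/N)$) and in a side-lobe direction otherwise (gain $g(2\pi/N)$). Because the fading $F_0$ is unit-mean exponential, $\mathbb{P}(\text{SINR}_{DL}\ge\Gamma \mid Z_0 = z) = \exp(-\Gamma z\sigma^2/(P_b M N))\,\mathcal{L}_I(\Gamma z/(P_b M N))$, which explains the prefactor $\exp(-\Gamma z\sigma^2/(P_b M N))$ and produces the $V$ and $U$ factors with argument $\Gamma$ (resp. $\frac{g(2\pi/N)}{G(2\pi/N)}\Gamma$) and intensity scaled by $\frac{\lambda}{M K_{cs}}$ — the $M$ coming from thinning over the $M$ transmit beam directions, the $K_{cs}$ from the sectorization.

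Second, I would handle the sector bookkeeping. There are $K_{cs}$ BS sectors, and they split into those that were detected during cell search and those that were not; the tagged BS lies in one detected sector (call it the ``chosen'' sector, which must also satisfy $Z_1\ge z_0$ there is subsumed into $\tilde P_{M_{cs},N_{cs}}(z,\Gamma_{cs})f_{Z_1}(z)$). For the chosen sector the interference contribution is governed by $V(z,\cdot,\cdot)$ because no interferer is closer than $Z_0$; for each of the remaining $K_{cs}-1$ sectors the interference is governed by $V$ if that sector was detected (again a closest-point exclusion, with factor $P_{M_{cs},N_{cs}}(z,\Gamma_{cs})$, the probability of detection conditioned on the path loss being $\ge z$) and by $U$ if it was not (no exclusion, factor $1-P_{M_{cs},N_{cs}}(\Gamma_{cs})$). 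Accounting separately for the main-lobe ($\Gamma$) and side-lobe ($\frac{g(2\pi/N)}{G(2\pi/N)}\Gamma$) beam orientations and grouping the $N$ sub-beams per BS sector yields the exponent $q = K_{cs}/N$ on the main-lobe bracket and $K_{cs}-q$ on the side-lobe bracket: of the $K_{cs}$ fine directions inside the non-chosen sectors, a fraction $1/N$ are main-lobe-aligned. The $P_{co}$, $P_{ra}(z,\Gamma_{ra})$, and the division by $\eta_{IA}$ come from Lemma~\ref{No_PA_Collision_Lemma}, Lemma~\ref{RA_SINR_COP_Lemma}, and the conditioning $\mid e_0\delta_0 = 1$ respectively, using that the RA success event and the DL SINR event are taken as conditionally independent given $Z_0$ and the point process.

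Third, I would assemble: multiply the chosen-sector factor $V(z,\Gamma,\frac{\lambda}{M K_{cs}})$, the $(K_{cs}-1)$-fold product over the other sectors (which, being i.i.d., becomes a power), the RA and collision factors, and the density $\tilde P_{M_{cs},N_{cs}}(z,\Gamma_{cs})f_{Z_1}(z)$ of the event ``tagged BS in this sector at path loss $z$'', then sum over which of the $K_{cs}$ sectors is the chosen one (a factor $K_{cs}$ by symmetry, exactly as in Lemma~\ref{minPLdistlemma}) and integrate over $z$.

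The main obstacle I expect is justifying the decomposition of the interference Laplace transform into the clean product of $V$ and $U$ factors with the stated arguments: this requires (i) the PPP thinning/independence across the $M$ transmit beams and across sectors, (ii) the strong Markov / closest-point exclusion argument for detected sectors exactly as used to derive $\text{SINR}_{SS}$ in~(\ref{SyncSigSINREq}), (iii) the i.i.d.-random-beam approximation for interfering BSs stated just before the lemma, and (iv) the approximation that $\Phi_u^{''}$ is a PPP (Remark~\ref{ESFRmk}) so that $P_{ra}$ and $P_{co}$ can be reused verbatim. Getting the exponents $q$ and $K_{cs}-q$ and the intensity scalings $\frac{\lambda}{MK_{cs}}$ exactly right is the delicate combinatorial part; everything else is a routine probability-generating-functional computation for the PPP, already carried out in the proofs of Lemma~\ref{CSSucProbPerSectorLemma} and Lemma~\ref{RA_SINR_COP_Lemma}.
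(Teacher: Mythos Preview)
Your outline is essentially the paper's approach, but the origin of the $q$ versus $K_{cs}-q$ split is misidentified in your first paragraph and would lead to the wrong Laplace transform if taken literally. The gains $G(2\pi/N)$ and $g(2\pi/N)$ are the \emph{user's} receive main-lobe and side-lobe gains, not the interfering BS's; the BS has main-lobe gain $M$ and side-lobe gain zero, which is precisely the source of the $1/M$ thinning you correctly identify. The main-lobe/side-lobe split at the user is therefore \emph{not} random per interferer with probability $1/N$: during data transmission the user's receive beam has width $2\pi/N$ and is aligned to the tagged BS, so it deterministically covers exactly $q=K_{cs}/N$ of the $K_{cs}$ BS sectors (the tagged sector being one of them). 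Interferers in those $q$ sectors are received through the main lobe (argument $\Gamma$ in $V,U$), and interferers in the remaining $K_{cs}-q$ sectors through the side lobe (argument $\frac{g(2\pi/N)}{G(2\pi/N)}\Gamma$). If you instead treated each interferer's main-lobe/side-lobe status as an i.i.d.\ Bernoulli$(1/N)$ variable, the per-sector Laplace transform would be a mixture inside the PGFL integral and would not factor into the two separate bracketed powers appearing in the lemma.

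The paper carries out the sector bookkeeping more laboriously than you propose: it sums explicitly over $k$ (total number of detected sectors) and $s$ (number of detected sectors among the $q$ main-lobe ones), computes $\mathbb{P}(\text{SINR}_{DL}\ge\Gamma\cap A)$ for each configuration $A$, and then collapses the double sum via the binomial identity $(a+b)^{q-1}(c+d)^{K_{cs}-q}=\sum_{k,s}\binom{K_{cs}-q}{k-s}\binom{q-1}{s-1}a^{s-1}b^{q-s}c^{k-s}d^{K_{cs}-q-k+s}$. Your direct product-over-independent-sectors argument (once the geometry is corrected as above) reaches the same closed form without this detour, and is arguably cleaner. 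One minor correction: $P_{M_{cs},N_{cs}}(z,\Gamma_{cs})$ is the \emph{joint} probability that a sector is detected and its minimum path loss exceeds $z$, not a conditional probability; this joint event is exactly what enforces that the tagged BS achieves the smallest path loss among all detected sectors.
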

\begin{proof}
The proof is provided in Appendix~\ref{CCDF_Data_SINR_Proof_Apdx}.
\end{proof}

Given the data SINR distribution, we are able to derive the average user-perceived downlink throughput defined in~(\ref{AvgUPTDefnEq}). According to Remark~\ref{ESFRmk}, the users that succeed initial access are assumed to form a homogeneous PPP with intensity $\lambda_u \eta_{IA}$. 
As a result, the average number of users that are associated to the tagged BS is approximated by $1 + 1.28 \frac{\lambda_u \eta_{IA}}{\lambda}$, which means the average scheduling probability for the typical user is $\eta_s = \frac{1}{1 +  1.28 \lambda_u \eta_{IA} / \lambda}$. By substituting $\eta_s$ into~(\ref{AvgUPTDefnEq}), we can derive the average user-perceived downlink throughput as follows:

\begin{theorem}\label{Avg_UPT_thm}
	The average user-perceived downlink throughput is given by:
	\begin{align}\label{Avg_UPT_thm_eq} 
	 \bar{R} = \max(0,1-\frac{M_{cs}N_{cs} \tau_{cs} + M_{ra}N_{ra}\tau_{ra} }{T}) \times \frac{\eta_{IA}} {1 + 1.28 \lambda_u \eta_{IA} / \lambda} \times \int_{0}^{\infty} \frac{W}{\ln 2}\frac{P_{DL}(\Gamma)  {\rm d}\Gamma }{1+\Gamma},
	\end{align}
	where $P_{DL}(\Gamma)$ is derived in Lemma~\ref{CCDF_Data_SINR_Lemma}, and other notations are defined in Table~\ref{SysParaTable}.
\end{theorem}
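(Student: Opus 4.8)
\textbf{Proof proposal for Theorem~\ref{Avg_UPT_thm}.}
The plan is to start from the definition of average UPT in~(\ref{AvgUPTDefnEq}), namely $\bar{R} = (1-\eta_{TO}) \times \eta_{IA} \times \mathbb{E}[\eta_s W\log_2(1+\text{SINR}_{DL}) \mid e_0\delta_0 = 1]$, and reduce it term by term to the claimed closed form. First I would handle the overhead factor: by definition $\eta_{TO} = \min(\frac{M_{cs}N_{cs}\tau_{cs} + M_{ra}N_{ra}\tau_{ra}}{T},1)$, so $1-\eta_{TO} = \max(0,1-\frac{M_{cs}N_{cs}\tau_{cs}+M_{ra}N_{ra}\tau_{ra}}{T})$, which is exactly the first factor in~(\ref{Avg_UPT_thm_eq}). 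Next I would treat the scheduling term: invoking Remark~\ref{ESFRmk}, the users succeeding initial access form (approximately) a homogeneous PPP of intensity $\lambda_u\eta_{IA}$, so the tagged BS serves on average $1 + 1.28\lambda_u\eta_{IA}/\lambda$ users (the extra $1$ being the typical user itself, the $1.28$ the area-biasing factor already used in Lemma~\ref{No_PA_Collision_Lemma}), giving $\eta_s = (1+1.28\lambda_u\eta_{IA}/\lambda)^{-1}$. Since this is treated as deterministic, it pulls out of the expectation and combines with $\eta_{IA}$ to produce the middle factor $\frac{\eta_{IA}}{1+1.28\lambda_u\eta_{IA}/\lambda}$.

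The remaining work is to show $\mathbb{E}[W\log_2(1+\text{SINR}_{DL}) \mid e_0\delta_0 = 1] = \int_0^\infty \frac{W}{\ln 2}\frac{P_{DL}(\Gamma)}{1+\Gamma}{\rm d}\Gamma$. This is a standard manipulation: for a nonnegative random variable $X = \text{SINR}_{DL}$ conditioned on $e_0\delta_0=1$, write $\log_2(1+X) = \frac{1}{\ln 2}\int_0^X \frac{{\rm d}t}{1+t} = \frac{1}{\ln 2}\int_0^\infty \frac{\mathbbm{1}_{t < X}}{1+t}{\rm d}t$, then apply Fubini/Tonelli to exchange expectation and integral, yielding $\mathbb{E}[\log_2(1+X)\mid e_0\delta_0=1] = \frac{1}{\ln 2}\int_0^\infty \frac{\mathbb{P}(X > t \mid e_0\delta_0=1)}{1+t}{\rm d}t = \frac{1}{\ln 2}\int_0^\infty \frac{P_{DL}(t)}{1+t}{\rm d}t$, using the definition~(\ref{DL_SINR_Exp_Eq}) of $P_{DL}$. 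Multiplying by $W$ and substituting the expression for $P_{DL}(\Gamma)$ from Lemma~\ref{CCDF_Data_SINR_Lemma} completes the identification with~(\ref{Avg_UPT_thm_eq}).

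Assembling the three factors — the overhead term from the definition of $\eta_{TO}$, the combined $\eta_{IA}$-and-scheduling term from Remark~\ref{ESFRmk}, and the Shannon-rate integral from the layer-cake identity plus Lemma~\ref{CCDF_Data_SINR_Lemma} — gives the stated formula. The main obstacle here is not any single calculation but rather justifying that the scheduling probability $\eta_s$ may be pulled out of the expectation as a deterministic quantity: strictly this requires the approximation that the number of co-served users is independent of the typical user's downlink SINR (and that $\Phi_u^{''}$ is PPP, per Remark~\ref{ESFRmk}), so the proof should flag this as an approximation consistent with the modeling assumptions already in force rather than an exact equality. Everything else is bookkeeping, since Lemma~\ref{CCDF_Data_SINR_Lemma} has already absorbed the genuinely hard stochastic-geometry content (the conditioning on successful initial access, the interference Laplace functionals, and the sum over which of the $K_{cs}$ sectors is the detected one).
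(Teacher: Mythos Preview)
Your proposal is correct and follows essentially the same route as the paper: the paper derives $\eta_s = (1+1.28\lambda_u\eta_{IA}/\lambda)^{-1}$ from Remark~\ref{ESFRmk} in the text immediately preceding the theorem and then substitutes into~(\ref{AvgUPTDefnEq}), leaving the layer-cake conversion of $\mathbb{E}[\log_2(1+\text{SINR}_{DL})]$ into the CCDF integral implicit. Your write-up is actually more explicit than the paper's on both the Fubini step and on flagging the independence approximation needed to extract $\eta_s$ from the expectation.
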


\section{Numerical Evaluation for the Baseline Initial Access Protocol}\label{NumEvalSec}
Since the baseline protocol is the most straightforward initial access design which could be potentially implemented by the initial mmWave systems~\cite{Verizon20165G2}, a detailed performance evaluation is carried out in this section.

We consider a mmWave cellular system with the same frame structure and synchronization signal configuration as the one specified in~\cite{Verizon20165G2}. Specifically, the system operates at 28 GHz carrier frequency with 100 MHz bandwidth, the sub-carrier spacing is 75 kHz, and the corresponding OFDM symbol length (including cyclic prefix) is 14.3 $\mu$s. Each synchronization signal occupies only one OFDM symbol (i.e., $\tau_{cs} = 14.3$ $\mu$s), and the beam reference signal is also transmitted in the same symbol to uniquely identify the beam index. The synchronization signal/beam reference signal transmission period is 20 ms, which means $T$ = 20 ms. In addition, we assume each RA preamble sequence duration is also one OFDM symbol, and therefore $\tau_{ra} = 14.3$ $\mu$s. The default system parameter values are summarized in Table~\ref{SysParaTable}. 

In order to simulate the initial access and data transmission procedures, we have generated 50 realizations of the BS PPP, and 50 realizations of the user PPP given every BS PPP, inside a 1.5 km $\times$ 1.5 km network area. For each pair of the BS and user PPPs, we first simulate the initial access procedure according to Section~\ref{IASect}. Then the downlink data transmission phase is simulated, where each BS either randomly schedules one of its associated users, or keeps silent if it has no user to serve. By averaging over all the 2500 combinations of BS and user PPPs, different performance metrics of this mmWave system are recorded. 
BS and user locations are simulated by PPPs since currently there is no location data for mmWave system, and PPPs have already been shown to be an accurate model for mmWave system design~\cite{akoum2012covrage,singh2015tractable,Jihong2016Tractable,bai2015coverage,alkhateeb2016initial,DiRenzo2015Stochastic}.
 
\subsection{Performance for the Initial Access Phase}\label{IANumSubSec}
\subsubsection{Success Probability for Cell Search}
The cell search success probability is plotted in Fig.~\ref{SP_CS_Fig} for the generalized LOS ball model and the exponential blockage model. It can be observed from Fig.~\ref{SP_CS_Fig} that the analytical result in Theorem~\ref{CS_SP_Thm} is accurate. In addition, Fig.~\ref{SP_CS_Fig} shows that when BSs transmit omni-directionally and users receive omni-directionally, the cell search success probability is relatively low for various $\Gamma_{cs}$, which means the system is subject to significant coverage issues when cell search is performed omni-directionally as LTE. For example when $\Gamma_{cs} = -4$dB, Fig.~\ref{SP_CS_Fig} shows that the omni-directional cell search success probability is less tan 75\%. By contrast, it has been shown in Fig. 5 of~\cite{trac} that the omni-directional cell search success probability for macro cellular networks in lower frequencies is close to 90\% under the regular BS location model. 

Fig.~\ref{SP_CS_Fig} also shows when beam sweeping is applied, the cell search success probability can be significantly improved even with a small value of $\max(M_{cs},N_{cs})$ such as 4. As $M_{cs}$ or $N_{cs}$ is increased, the cell search probability can be further improved, so beam sweeping needs to be applied to guarantee a reasonable cell search performance. In the remaining simulations, we use $\Gamma_{cs} = -4$ dB as the SINR threshold to detect the synchronization signals, above which a sufficiently small miss detection probability (e.g., 1\%) can be achieved~\cite{barati2015directional}.

\begin{figure}[h]
	\begin{subfigure}[b]{0.46\textwidth}
		\centering
		\includegraphics[height=1.7in, width= 2.6in]{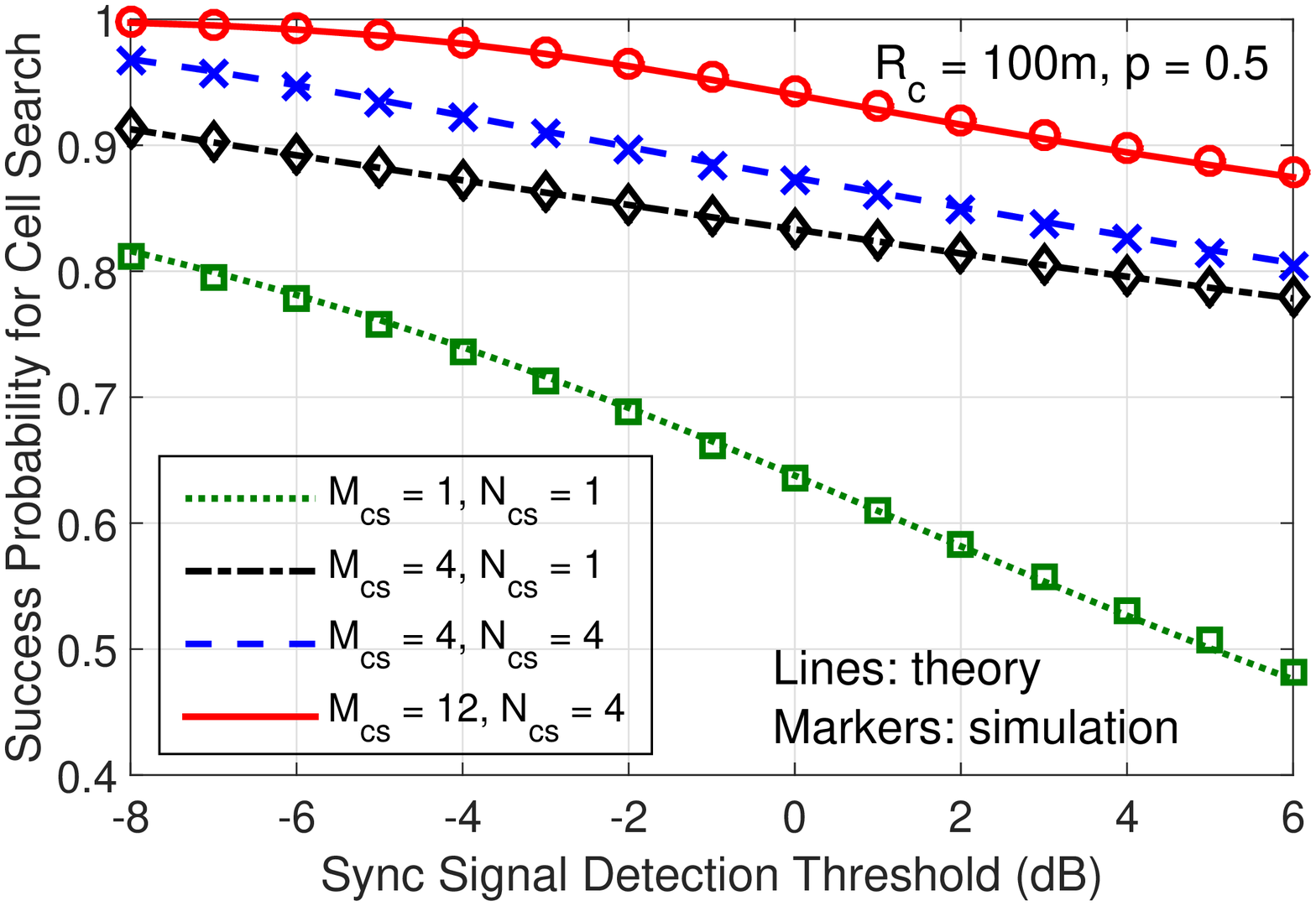}
     	\caption{Generalized LOS ball model }
		\label{SP_CS_LOS_Fig}
	\end{subfigure}
	\hfill
	\begin{subfigure}[b]{0.46\textwidth}
		\centering
		\includegraphics[height=1.7in, width=2.6in]{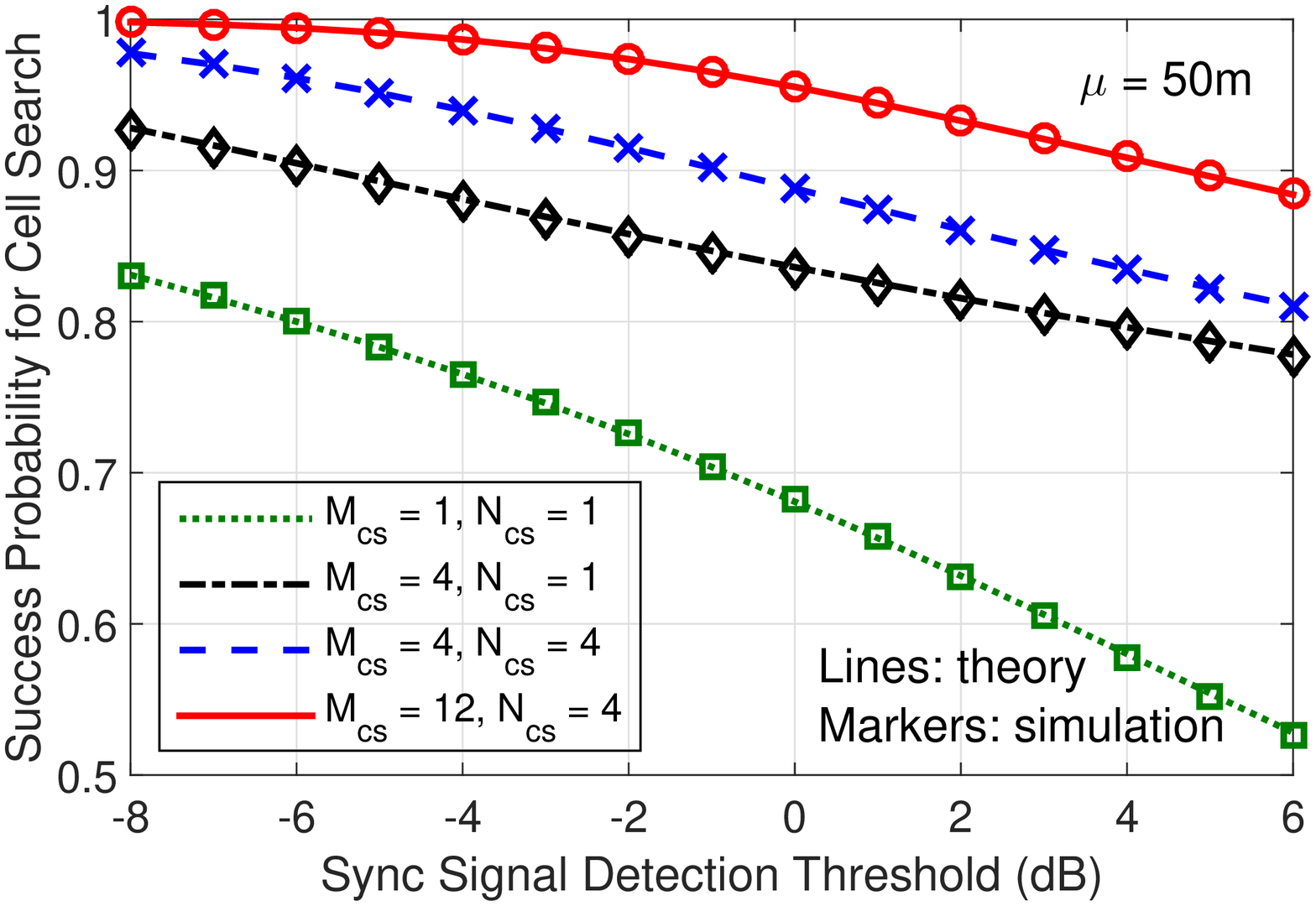}
		\caption{Exponential blockage model }
		\label{SP_CS_EXP_Fig}
	\end{subfigure}
	\caption{Cell search success probability.}\label{SP_CS_Fig}
\end{figure}

Fig.~\ref{ServingPL_CCDF_Fig} shows the CCDF of the path loss from the typical user to the tagged BS, which is derived in Lemma~\ref{minPLdistlemma}. As we increase $M_{cs}$ or $N_{cs}$, the CCDF of the path loss decreases, especially at the tail of the distribution. Note when $M_{cs} = N_{cs} = 1$, the tagged BS is the BS providing the minimum path loss to the typical user, which coincides with the conventional minimum path loss association rule~\cite{trac,akoum2012covrage,bai2015coverage,singh2015tractable,alkhateeb2016initial}. By contrast, if beam sweeping is implemented for cell search, the typical user can connect to other BSs even if the BS providing the minimum path loss is unable to be detected. As a result, the typical user will have a smaller path loss to the tagged BS almost surely as $M_{cs}$ or $N_{cs}$ increases. This fact further demonstrates the benefit of beam sweeping for cell search. Actually, all the CCDF curves in Fig.~\ref{ServingPL_CCDF_LOS_Fig} have an inflection point at $101.4$ dB. This is because for the LOS ball blockage model, the serving BS could be either LOS or NLOS when the path loss is smaller than $101.4$ dB, while it is NLOS almost surely when the path loss is higher than $101.4$ dB.

\begin{figure}[h]
	\begin{subfigure}[b]{0.46\textwidth}
		\centering
		\includegraphics[height=1.7in, width= 2.6in]{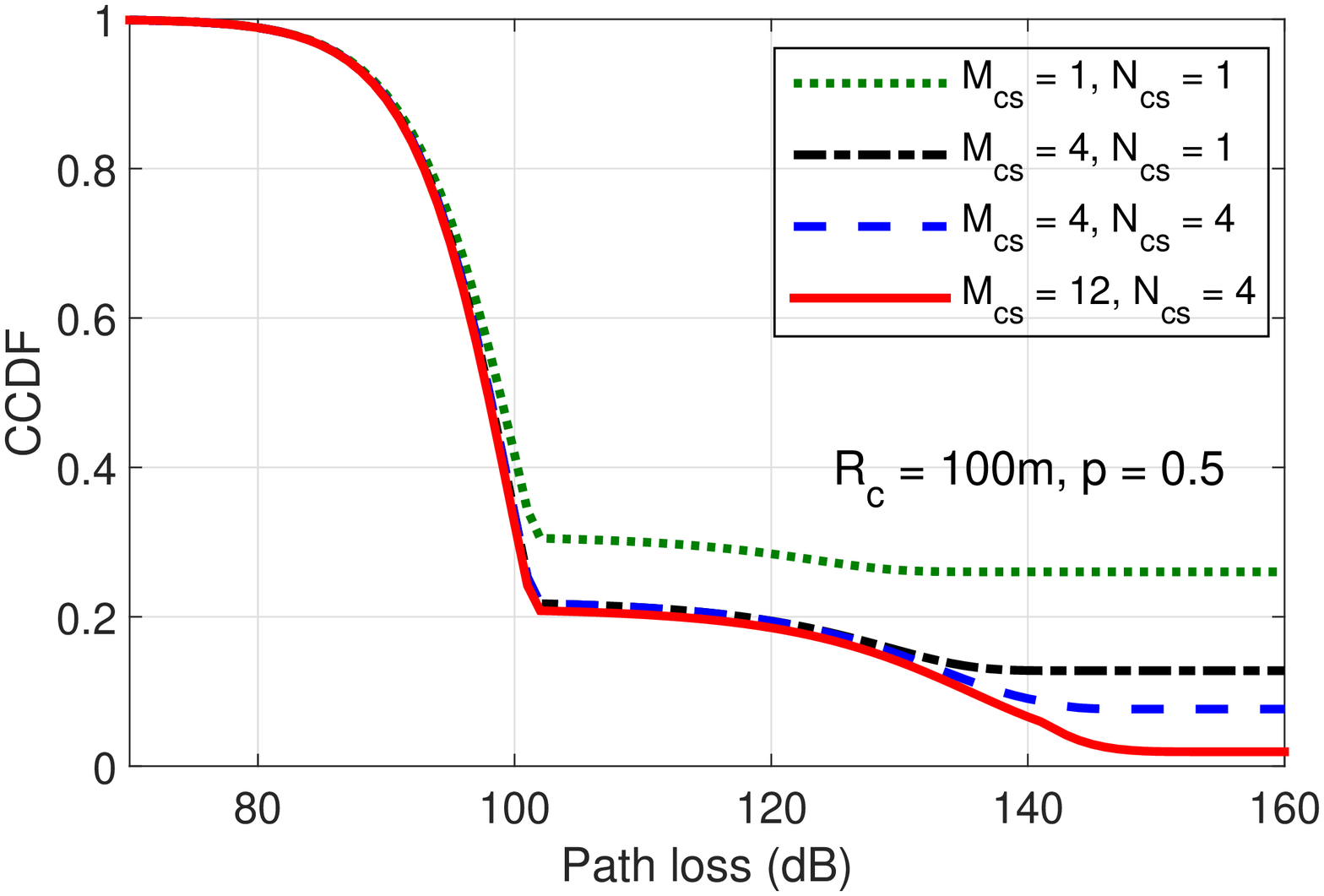}
	    \caption{Generalized LOS ball model}
		\label{ServingPL_CCDF_LOS_Fig}
	\end{subfigure}
	\hfill
	\begin{subfigure}[b]{0.46\textwidth}
		\centering
		\includegraphics[height=1.7in, width=2.6in]{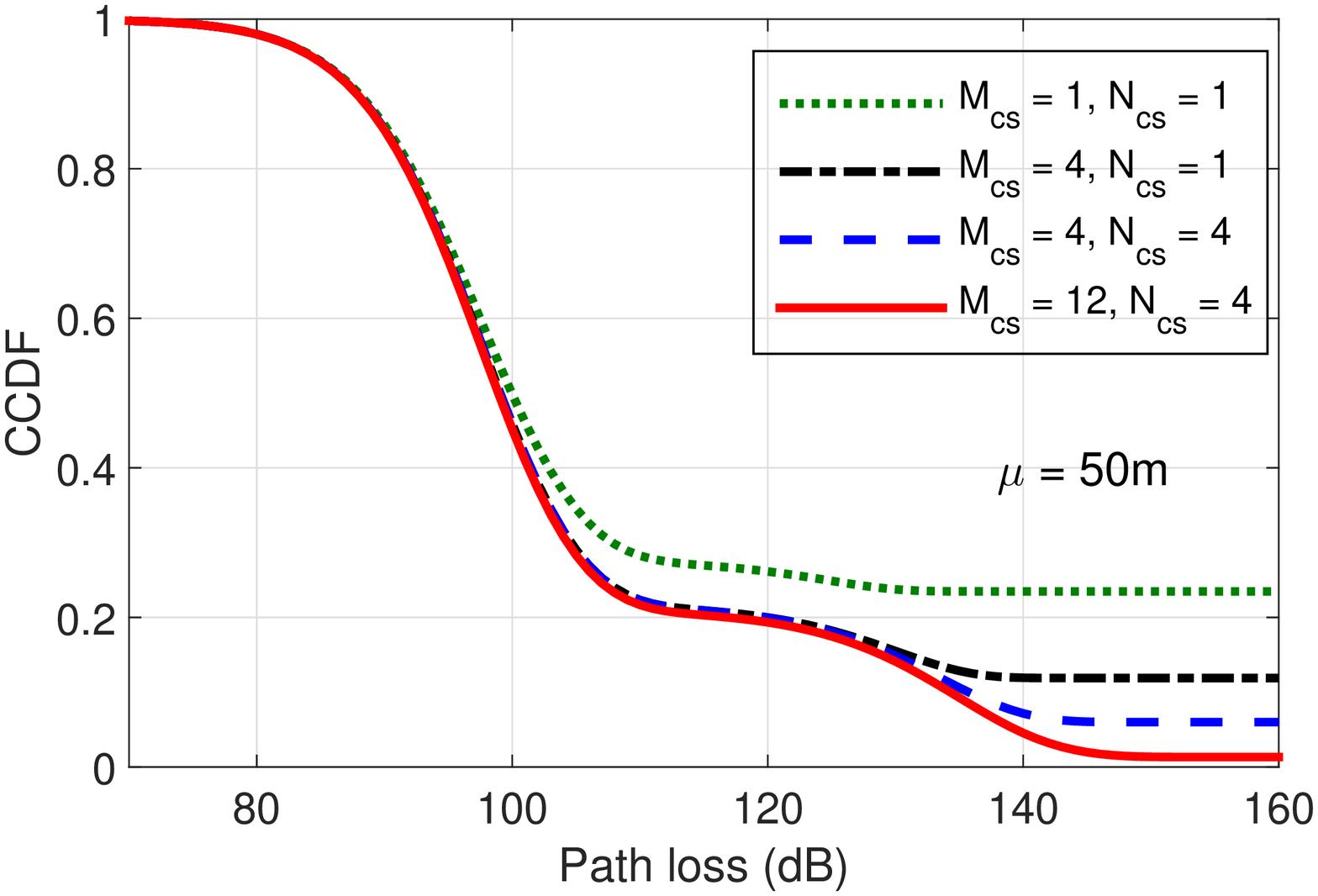}
		\caption{Exponential blockage model}
	    \label{ServingPL_CCDF_EXP_Fig}
	\end{subfigure}
	\caption{Path loss distribution from the typical user to tagged BS. }\label{ServingPL_CCDF_Fig}
\end{figure}

\subsubsection{No RA Preamble Collision Probability} Fig.~\ref{No_PA_Collision_Prob_Fig} plots the probability that the typical user is not subject to RA preamble collisions versus the number of BS beams $M = \max(M_{cs},M_{ra})$. Different parameters for the two blockage models are considered, where blockage becomes more severe as $p$ decreases in the generalized LOS ball model, or $\mu$ decreases in the exponential blockage model. It can be observed from Fig.~\ref{No_PA_Collision_Prob_Fig}
that Lemma~\ref{No_PA_Collision_Lemma} is an accurate approximation to the actual simulation results, which shows that it is accurate to approximate the users that succeed cell search by PPP. In addition, Fig.~\ref{No_PA_Collision_Prob_Fig} shows that the probability of no RA preamble collision $P_{co}$ is relatively insensitive to the underlying blockage conditions, and $P_{co}$ increases as the number of BS beams increases. Since $P_{co}$ remains consistently high (greater than 95\%) for different blockage conditions and various $M$ values, RA preamble collision is therefore not the performance bottleneck for the baseline protocol. It is clear from Lemma~\ref{No_PA_Collision_Lemma} that this is a result of the 64 RA preamble sequences and beam sweeping at the BS during random access.

\begin{figure}
	\begin{subfigure}[b]{0.46\textwidth}
		\centering
		\includegraphics[height=1.85in, width= 3.0in]{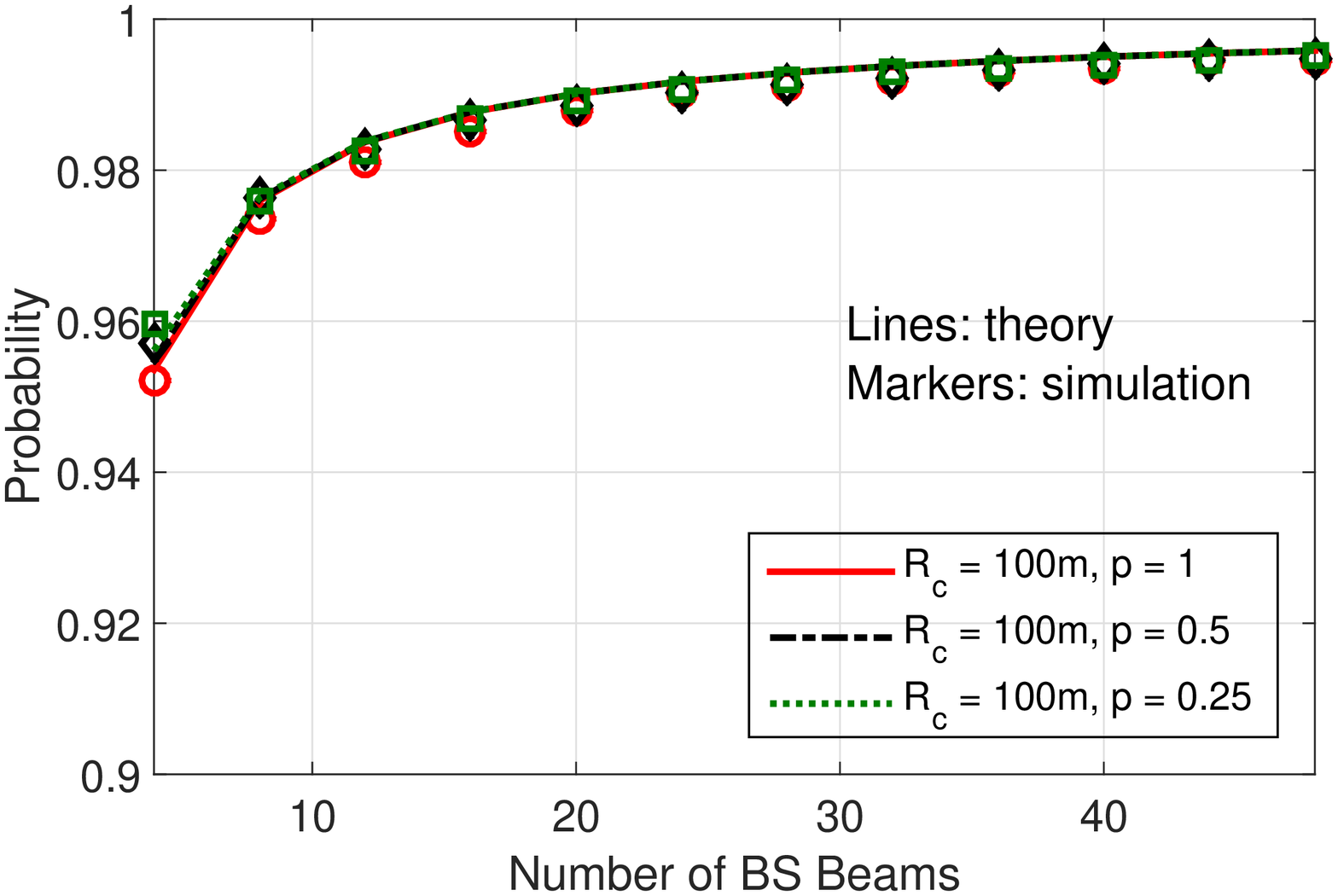}
		\caption{Generalized LOS ball model}
		\label{No_PA_Collision_Prob_LOS_Fig}
	\end{subfigure}
	\hfill
	\begin{subfigure}[b]{0.48\textwidth}
		\centering
		\includegraphics[height=1.7in, width= 2.6in]{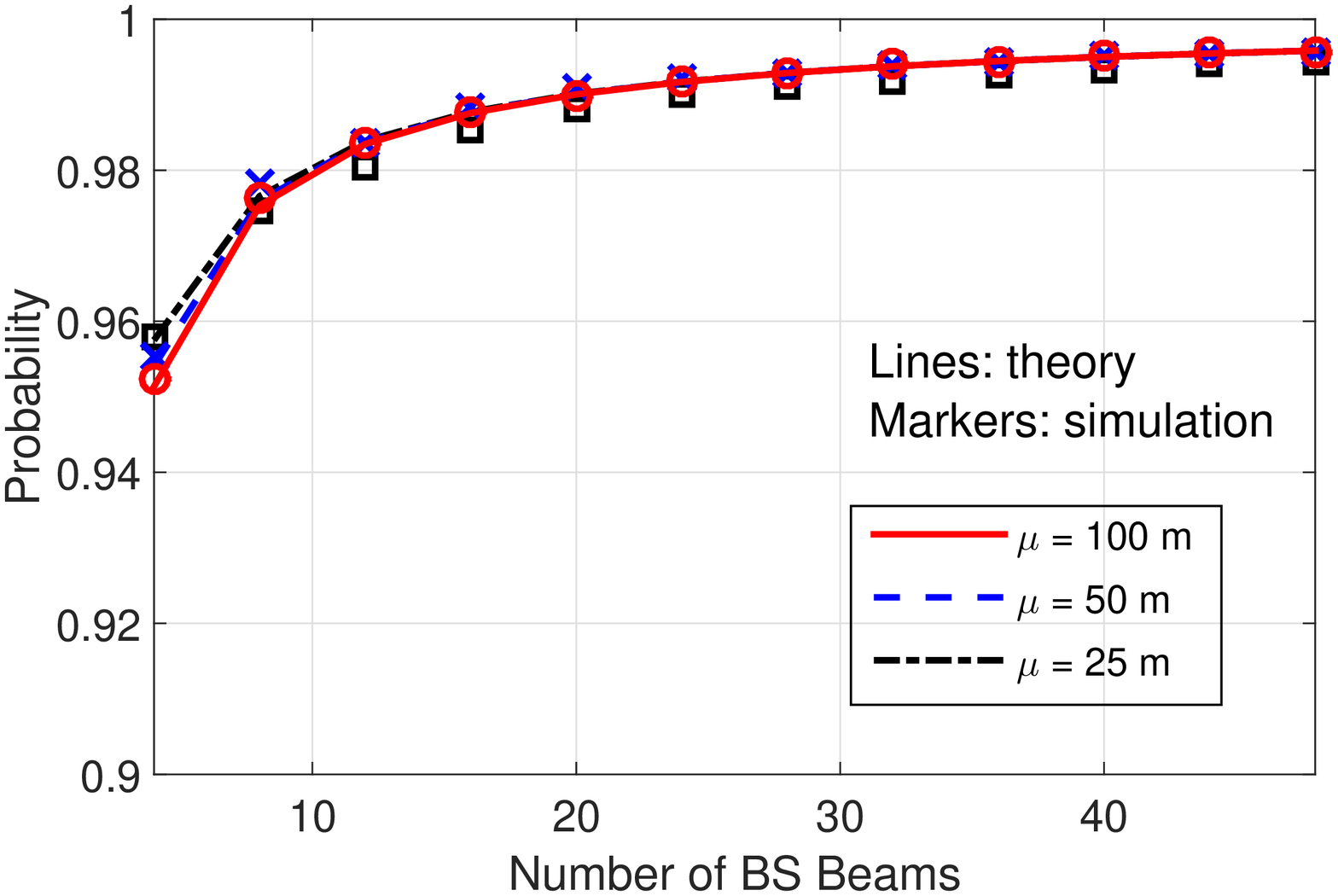}
		\caption{Exponential blockage model}
		\label{No_PA_Collision_Prob_EXP_Fig}
	\end{subfigure}
	\caption{Probability of no RA preamble collision.}\label{No_PA_Collision_Prob_Fig}
\end{figure}

\subsubsection{Expected Initial Access Delay} The initial access delay, which can be derived from Theorem~\ref{Overall_IA_SP_Thm}, is plotted in Fig.~\ref{Exp_IA_Delay_Fig} for both blockage models. Despite some approximations used in deriving Theorem~\ref{Overall_IA_SP_Thm}, Fig.~\ref{Exp_IA_Delay_Fig} validates the  accuracy of the analytical results. 
In addition, Fig.~\ref{Exp_IA_Delay_Fig} shows that as blockage becomes less severe, the expected initial access delay decreases as a result of the improved initial access success probability. Depending on the propagation environment, the optimal expected initial access delay in our simulations ranges from 2.2 ms to 4.1 ms for the generalized LOS ball model, and 1.2 ms to 5.0 ms for the exponential blockage model. 

According to Fig.~\ref{Exp_IA_Delay_Fig}, the expected initial access delay is relatively high when the number of BS beams is small, which is because the typical user needs more initial access cycles until it can connect to the network. By increasing the number of BS beams, despite the typical user has higher probability to succeed within one initial access cycle, the overhead for initial access starts to become more dominant. As a result, there exists an optimal BS beam number (or BS beamwidth) in terms of the expected initial access delay. For example, given $R_c = 100$ m for the generalized LOS ball model, this optimal beamwidth is $45^{\circ}$, $22.5^{\circ}$ and $15^{\circ}$ when $p$ is equal to $1$, $0.5$ and $0.25$ respectively. In fact, as blockage becomes more severe, the optimal BS beamwidth is decreasing for both blockage models, which means a more robust link with higher antenna gain is needed in order to quickly establish the connection. In addition, we have also verified that the optimal BS beamwidth in terms of initial access delay is non-decreasing as user density increases, which is mainly because a narrower beam at the BS will reduce the collision of RA preambles among different users.  

\begin{figure}
	\begin{subfigure}[b]{0.46\textwidth}
		\centering
		\includegraphics[height=1.7in, width= 2.8in]{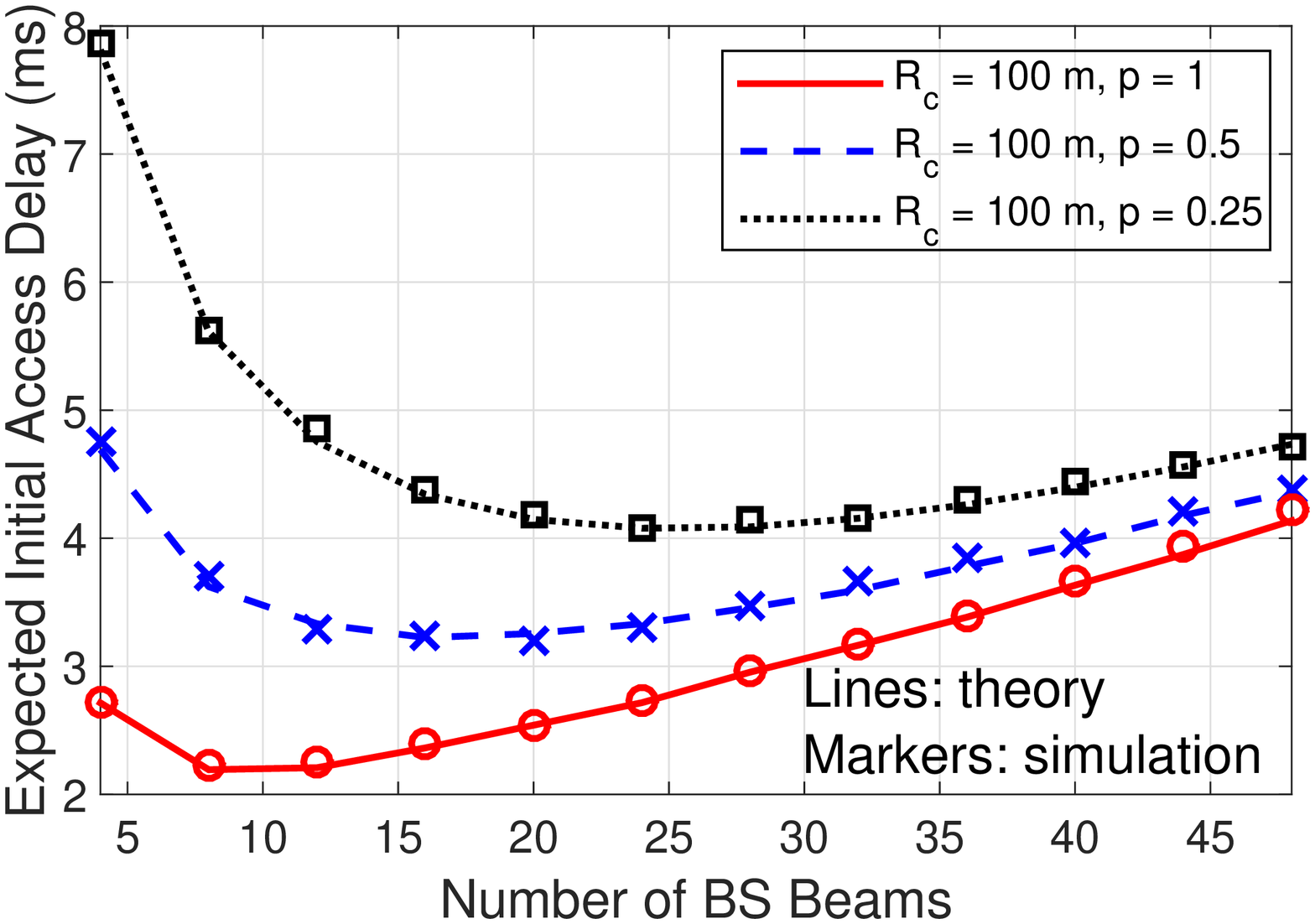}
		\caption{Generalized LOS ball model}
		\label{Exp_IA_Delay_LOS_Fig}
	\end{subfigure}
	\hfill
	\begin{subfigure}[b]{0.46\textwidth}
		\centering
		\includegraphics[height=1.7in, width=2.8in]{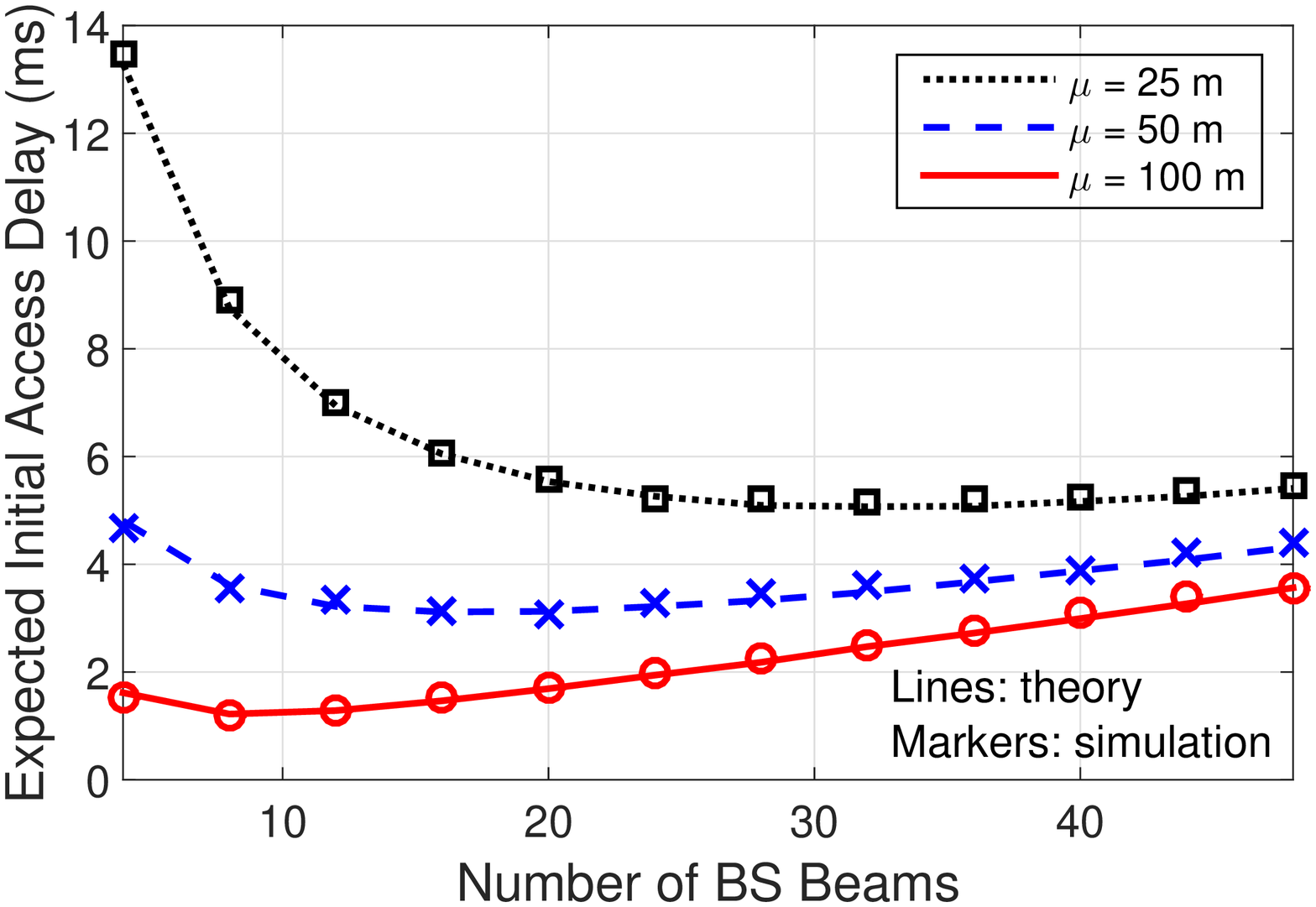}
		\caption{Exponential blockage model }
		\label{Exp_IA_Delay_EXP_Fig}
	\end{subfigure}
	\caption{Expected initial access delay performance.}\label{Exp_IA_Delay_Fig}
\end{figure}

\subsection{Performance for the Data Transmission Phase} Fig.~\ref{DataSINRCCDF_Fig} plots the downlink data SINR coverage probability given the typical user succeeds the initial access, where the BS and user beamwidth are $30^\circ$ and $90^\circ$ respectively. Although all interfering BSs are assumed to be active in deriving Lemma~\ref{CCDF_Data_SINR_Lemma}, the difference between the analytical and simulation results in Fig.~\ref{DataSINRCCDF_Fig} is negligible in the range of parameters chosen here. The same trend has been observed for other BS and user beamwidth values as well, which validates the accuracy of Lemma~\ref{CCDF_Data_SINR_Lemma}. 
In addition, we can observe non-concave behavior of the data SINR curves in Fig.~\ref{DataSINRCCDF_Fig}, which is because the overall CCDF of data SINR in Fig.~\ref{DataSINRCCDF_Fig} is obtained by adding up the CCDF of data SINR when the serving BS is LOS and NLOS respectively. 

The average user-perceived downlink throughput versus the number of BS beams is plotted in Fig.~\ref{UPT_Fig} for both blockage models. Fig.~\ref{UPT_Fig} shows the average UPT has a steep increase when the number of BS beams increases from a very small value to a medium value. This is mainly due to a much improved link quality and relatively low initial access overhead in this range. However, as the number of BS beams further increases, the initial access overhead starts to become more dominant, which leads to a steady decrease of the average UPT. In terms of the average UPT, Fig.~\ref{UPT_Fig} shows the optimal BS beamwidth does not vary too much for different blockage conditions, which is typically between 10$^\circ$ to 18$^\circ$. 
This is because the average UPT is affected by multiple counterbalancing factors such as the initial access overhead, success probability of initial access, and scheduling factors. For example, a high initial access success probability will lead to a heavily-loaded cell for the tagged BS, such that the typical user has smaller probability to be scheduled.

Therefore, for the baseline initial access protocol, depending on the blockage condition and which metric is more important, the optimal BS beamwidth could vary. When blockage is not very significant, a wide BS beamwidth (e.g. 45$^\circ$) is preferred to reduce the initial access delay, while a narrow BS beamwidth (e.g. 15$^\circ$) is preferred to achieve higher UPT performance. By contrast, when blockage is severe, a narrow BS beamwidth (e.g. 15$^\circ$) could achieve good performance for both initial access delay and average UPT. 

\begin{figure}
	\begin{subfigure}[b]{0.46\textwidth}
		\centering
		\includegraphics[height=1.7in, width= 2.8in]{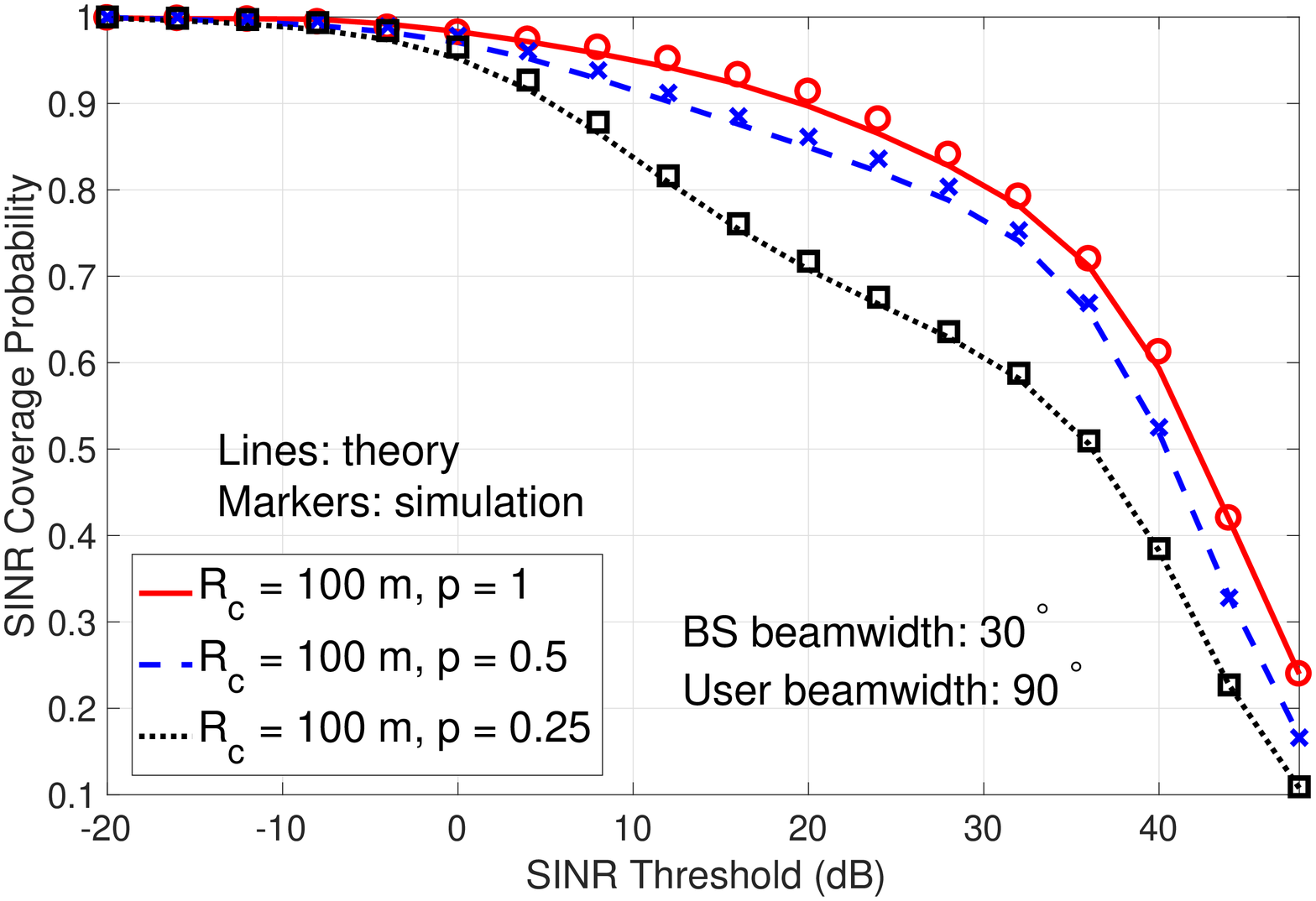}
		\caption{Generalized LOS ball model}
		\label{DataSINRCCDF_LOS_Fig}
	\end{subfigure}
	\hfill
	\begin{subfigure}[b]{0.46\textwidth}
		\centering
		\includegraphics[height=1.7in, width=2.8in]{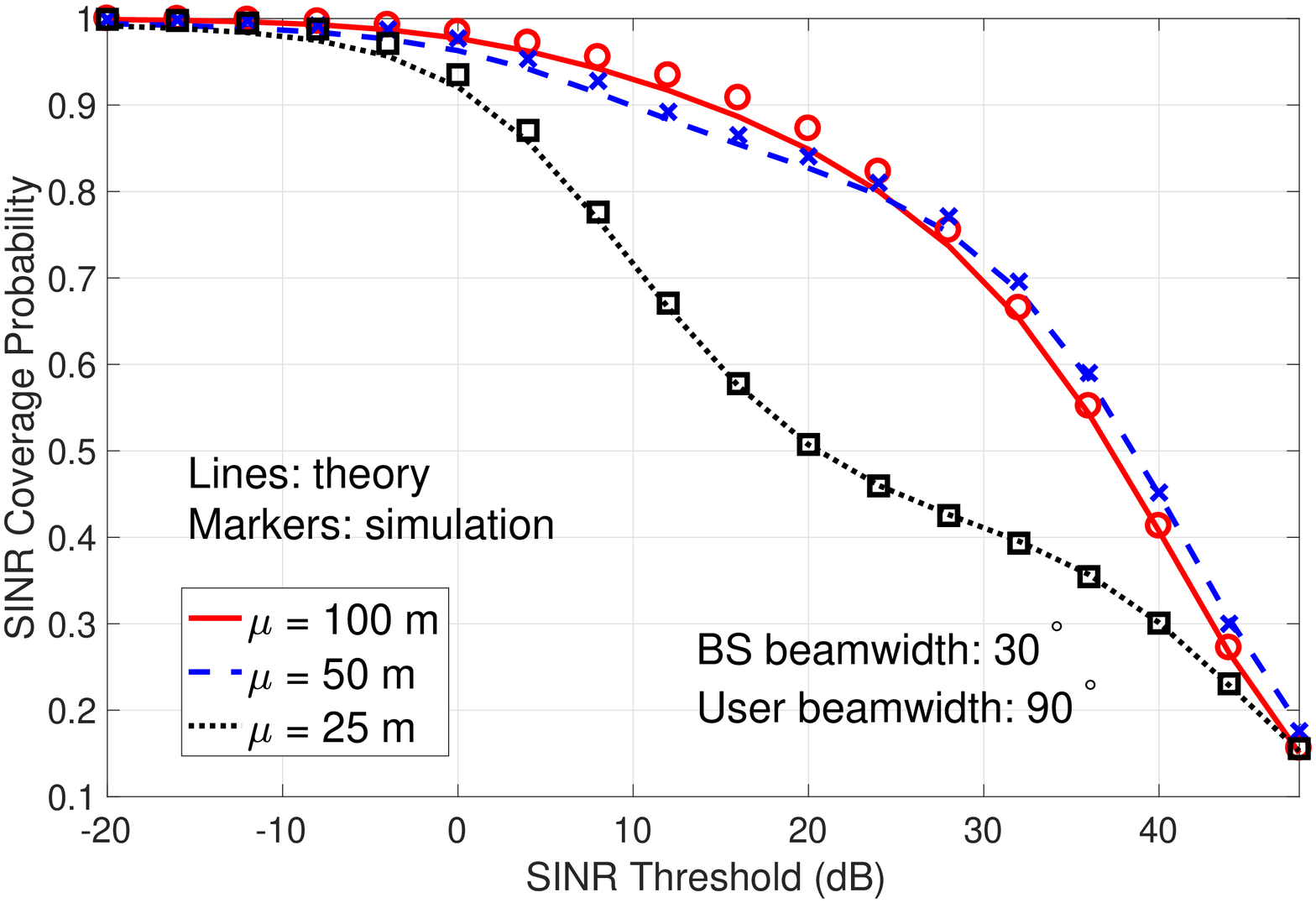}
		\caption{Exponential blockage model}
		\label{DataSINRCCDF_EXP_Fig}
	\end{subfigure}
	\caption{CCDF of data SINR given successful initial access.}\label{DataSINRCCDF_Fig}
\end{figure}

\begin{figure}
	\begin{subfigure}[b]{0.46\textwidth}
		\centering
		\includegraphics[height=1.7in, width= 2.8in]{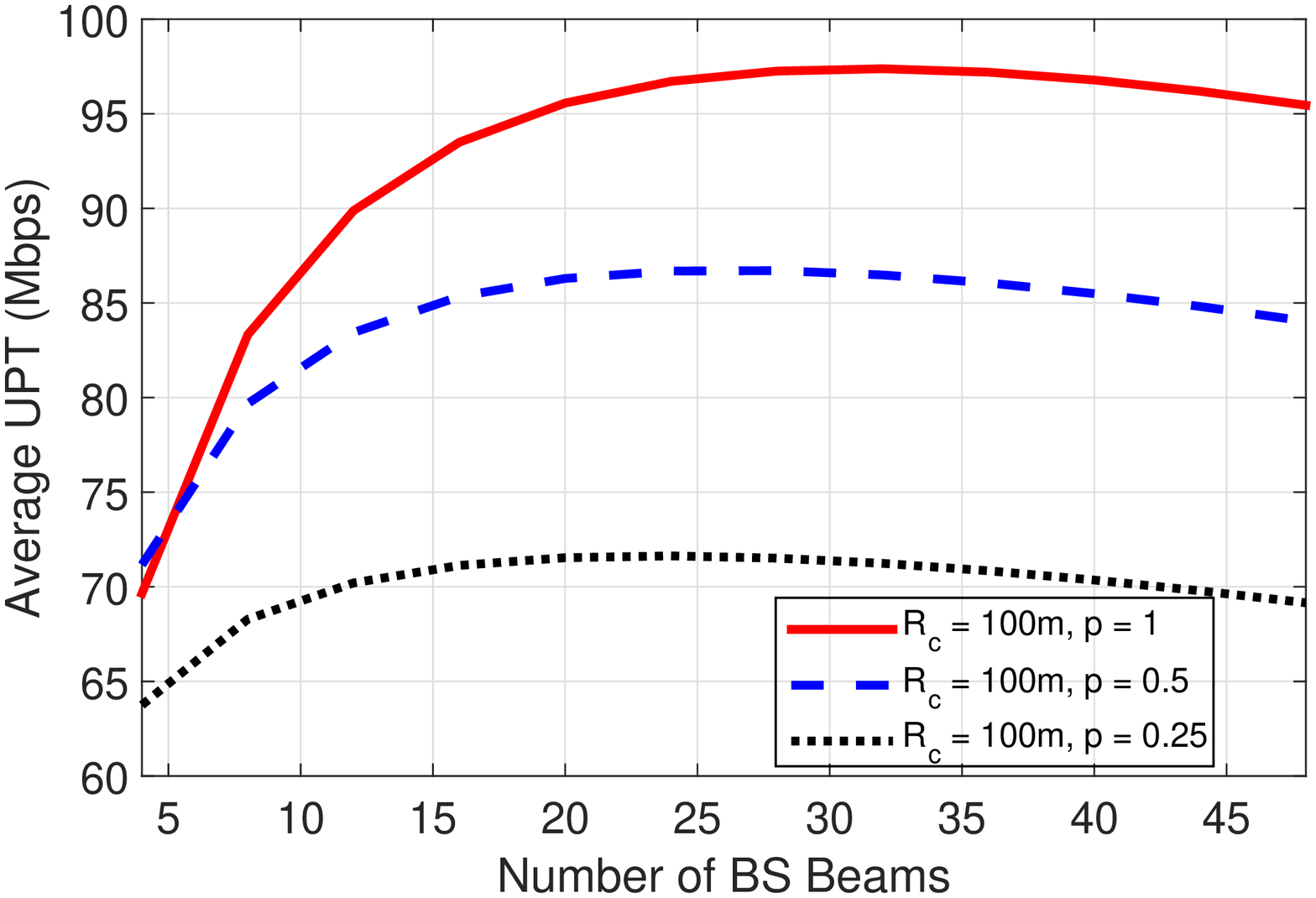}
		\caption{Generalized LOS ball model}
		\label{UPT_LOS_Fig}
	\end{subfigure}
	\hfill
	\begin{subfigure}[b]{0.46\textwidth}
		\centering
		\includegraphics[height=1.7in, width=2.8in]{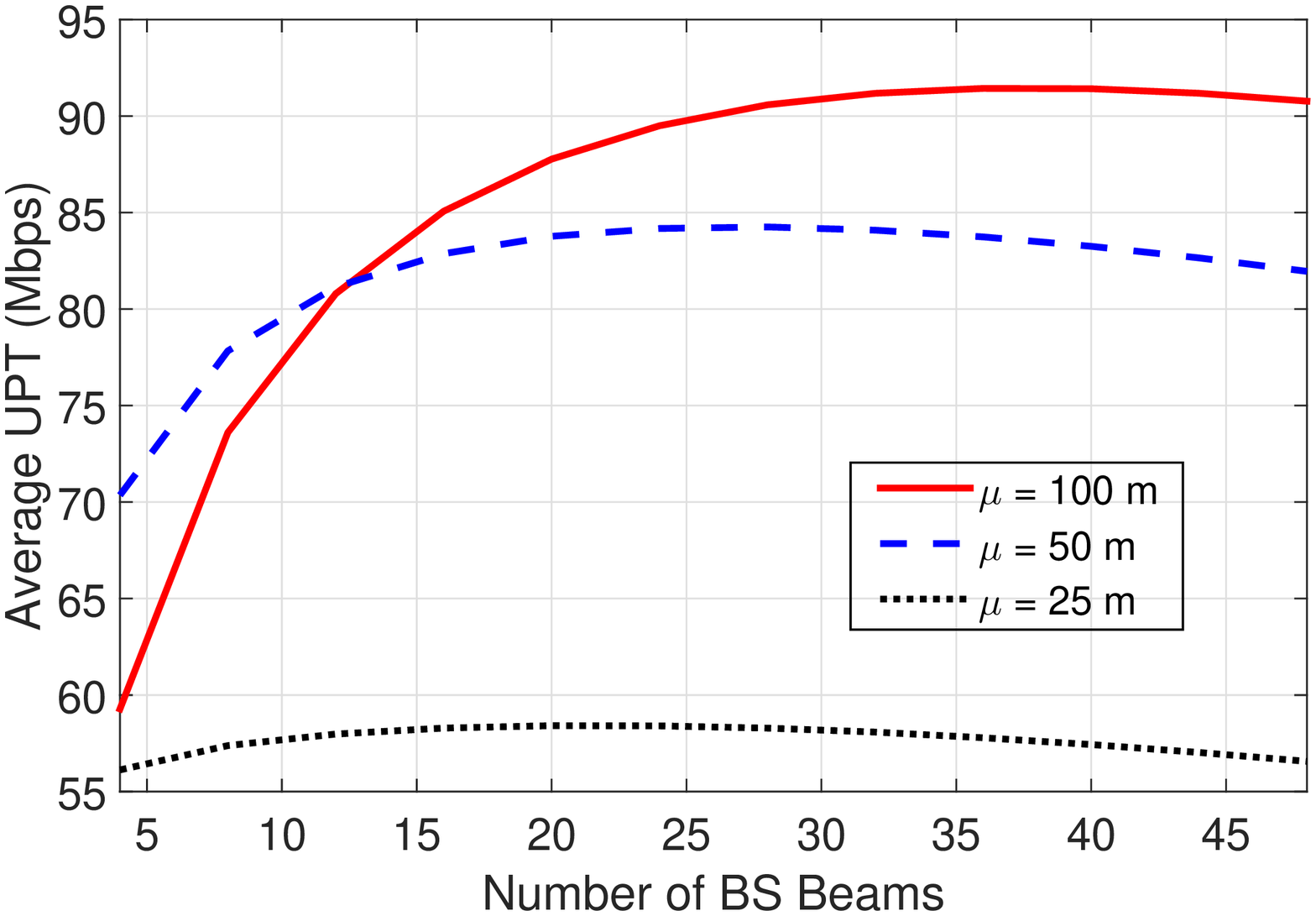}
		\caption{Exponential blockage model}
		\label{UPT_EXP_Fig}
	\end{subfigure}
	\caption{Average user-perceived downlink throughput performance.}\label{UPT_Fig}
\end{figure}

\section{Performance Comparison for Different Initial Access Protocols}\label{IAOptPerfCompSec} 
In this section, based on Theorem~\ref{Overall_IA_SP_Thm} and Theorem~\ref{Avg_UPT_thm}, we compare the expected initial access delay and average user-perceived downlink throughput\footnote{The accuracy of the analytical results for the other three protocols can be validated similar to the baseline protocol, so only analytical results are shown in this section.} for the four initial access protocols in Table~\ref{IAOptions}. 
In making the comparisons, we consider both a severely blocked condition (e.g., $R_c = 100$ m, $p = 0.25$ for generalized LOS ball model; $\mu = 25$ m for exponential blockage model), and also a lightly blocked condition (e.g., $R_c = 100$ m, $p = 1$ for generalized LOS ball model; $\mu = 100$ m for exponential blockage model) for both blockage models. All the other system parameters remain the same as in Table~\ref{SysParaTable}. 

\subsection{Comparison of Expected Initial Access Delay} 
\textbf{Baseline and fast CS outperform fast RA and omni RX in terms of the expected initial access delay.} The expected initial access delay is plotted in Fig.~\ref{IAdelay_comp_fig}, which shows that the fast RA and omni RX protocols always have higher initial access delay than the baseline and fast CS protocols. In addition, as the BS beam number $M$ increases, the expected IA delay for the fast RA and omni RX protocols will keep increasing, while the expected IA delay for baseline and fast CS first decreases then increases. The main reason is that both fast RA and omni RX protocols require the BS to receive omni-directionally during random access (i.e., $M_{ra}=1$), which leads to a high RA preamble collision probability according to Lemma~\ref{No_PA_Collision_Lemma}. Specifically, in contrast to the baseline and fast CS protocols wherein $P_{co}$ is consistently higher than 0.95, $P_{co}$ for the fast RA and omni RX protocols is around 0.82 for all blockage conditions and various BS beamwidth values. In addition, the RA preamble decoding probability for the fast RA and omni RX  protocols is also lower (around 5\%) compared to the other two protocols, which can be analytically shown from Lemma~\ref{RA_SINR_COP_Lemma}. As a result, as we increase $M$, the overall initial access success probability is not significantly improved for fast RA and omni RX, but the initial access overhead increases. According to~(\ref{ExpDelayExpr}), the initial access delay for fast RA and omni RX will keep increasing as $M$ increases, which is significantly higher than the other two protocols. 

\textbf{The baseline protocol has smaller expected initial access delay than the fast CS protocol when blockage is severe.} In terms of the expected initial access delay, Fig.~\ref{IAdelay_comp_fig} also shows whether or not the baseline protocol outperforms the fast CS protocol depends on the severity of blockage. Specifically, under a severely blocked condition, the baseline protocol has smaller initial access delay than the fast CS protocol. This is mainly because the baseline protocol has better link quality under both cell search and random access phases, which leads to significantly higher initial access success probability than the fast CS protocol. By contrast, under light blockage, the fast CS protocol is also able to achieve a sufficiently high initial access success probability. As a result, the fast CS protocol will outperform the baseline protocol due to a much lower initial access overhead. 
 
\begin{figure}
	\begin{subfigure}[b]{0.46\textwidth}
		\centering
		\includegraphics[height=1.7in, width= 2.8in]{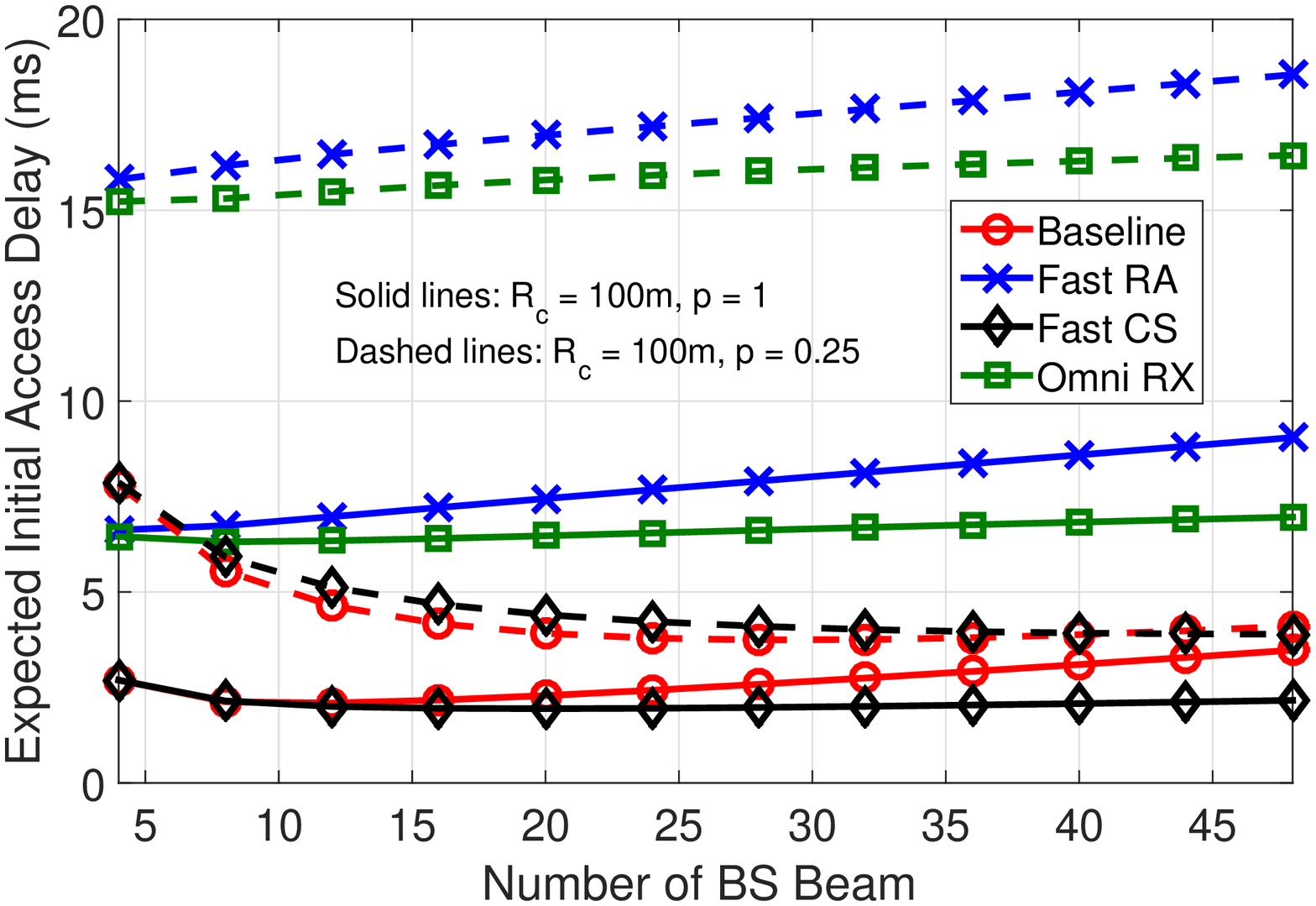}
		\caption{Generalized LOS ball model }   
		\label{IAdelay_comp_LOS_fig}
	\end{subfigure}
	\hfill
	\begin{subfigure}[b]{0.46\textwidth}
		\centering
		\includegraphics[height=1.7in, width=2.8in]{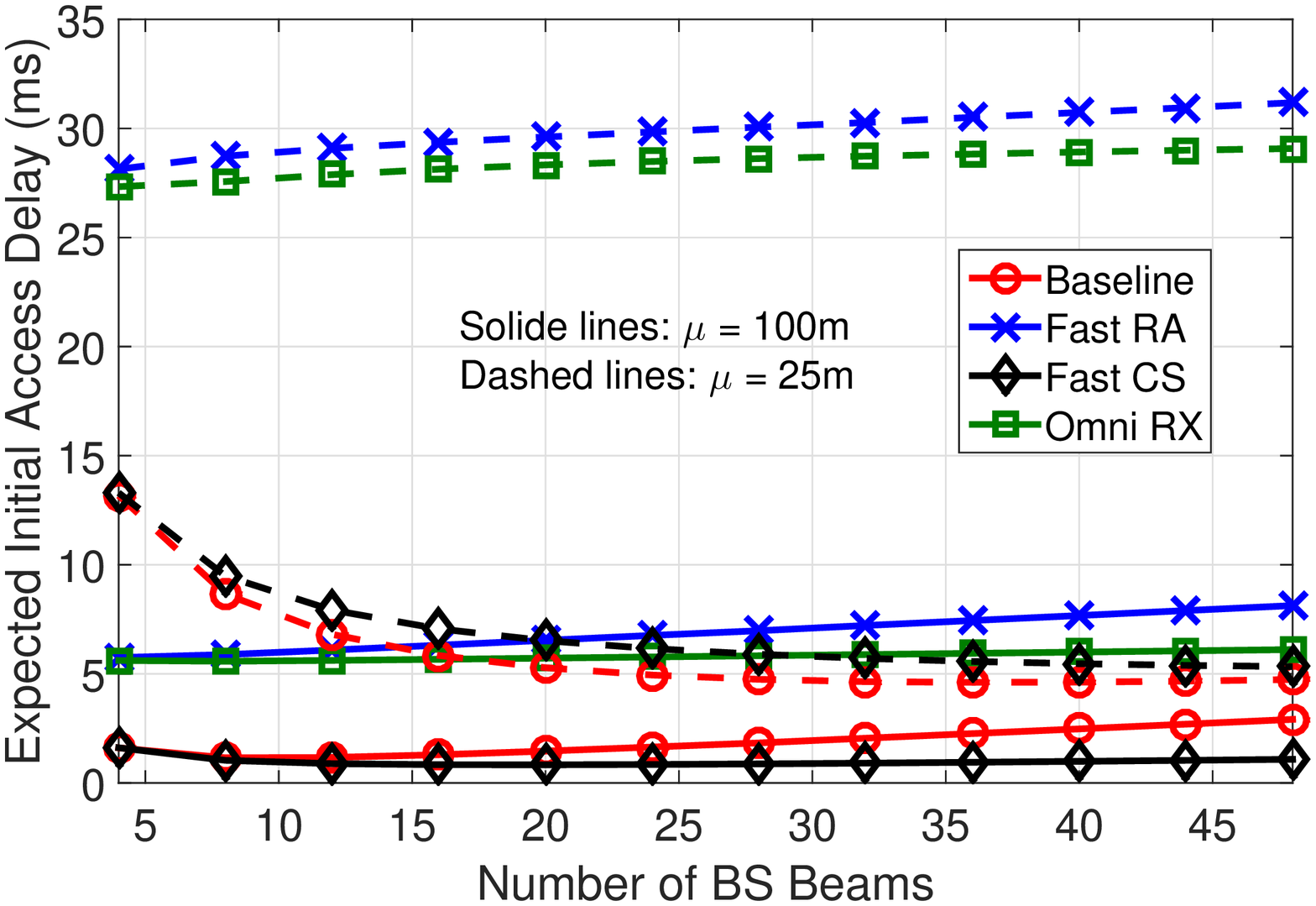}
		\caption{Exponential blockage model}
		\label{IAdelay_comp_EXP_fig}
	\end{subfigure}
	\caption{Comparison of expected initial access delay.}\label{IAdelay_comp_fig}
\end{figure}

\subsection{Comparison of the Average User-perceived Downlink Throughput} \textbf{In terms of the average UPT, the omni RX protocol and fast CS protocol generally outperform the other two protocols, while the baseline protocol only provides the smallest average UPT.} 
Despite having smaller initial access success probability than the baseline, the following reasons contribute to the high UPT for omni RX and fast CS protocols. First, both protocols have low initial access overhead, which is significantly lower than the other two protocols, especially when the number of BS beams $M$ is high. Second, despite both protocols having lower initial access success probability than the baseline, the typical user actually has higher scheduling probability since it will observe a lightly loaded cell once it succeeds at initial access. Third, under the omni RX and fast CS protocols, the typical user will have higher conditional downlink data SINR than the baseline, because it needs more favorable propagation in order to succeed at initial access. Despite the baseline protocol having the highest initial access success probability, the above factors render it inferior in terms of UPT versus the other protocols. 
Compares to the omni RX protocol, Fig.~\ref{UPT_Comp_Fig} shows that the fast CS protocol achieves a slightly higher UPT performance under a lightly blocked condition, while it provides a much smaller UPT under a severely blocked condition. 

Another observation from Fig.~\ref{UPT_Comp_Fig} is that as the BS beam number $M$ increases, the UPT for baseline and fast RA protocols first increases as a result of better data SINR and relatively low initial access overhead, then UPT will start to decrease as the initial access overhead term becomes more significant. In fact, the same trend also applies to omni RX and fast CS protocols as we continue to increase the number of BS beams in Fig.~\ref{UPT_Comp_Fig}, but the corresponding optimal beamwidth could be too narrow to implement in a real system. 

In summary, the baseline protocol is mainly beneficial for delay-sensitive applications since it provides a small initial access delay, especially when blockage is severe. However, due to the high initial access overhead, the baseline protocol also has a poor user-perceived downlink throughput performance. The omni RX protocol provides the best user-perceived downlink throughput performance, but it is unlikely to be adopted unless the network is delay-tolerant. By contrast, the fast CS protocol generally gives a good trade-off between the initial access delay and user-perceived throughput performance, especially under a lightly blocked condition.

\begin{figure}
	\begin{subfigure}[b]{0.46\textwidth}
		\centering
		\includegraphics[height=1.7in, width= 2.8in]{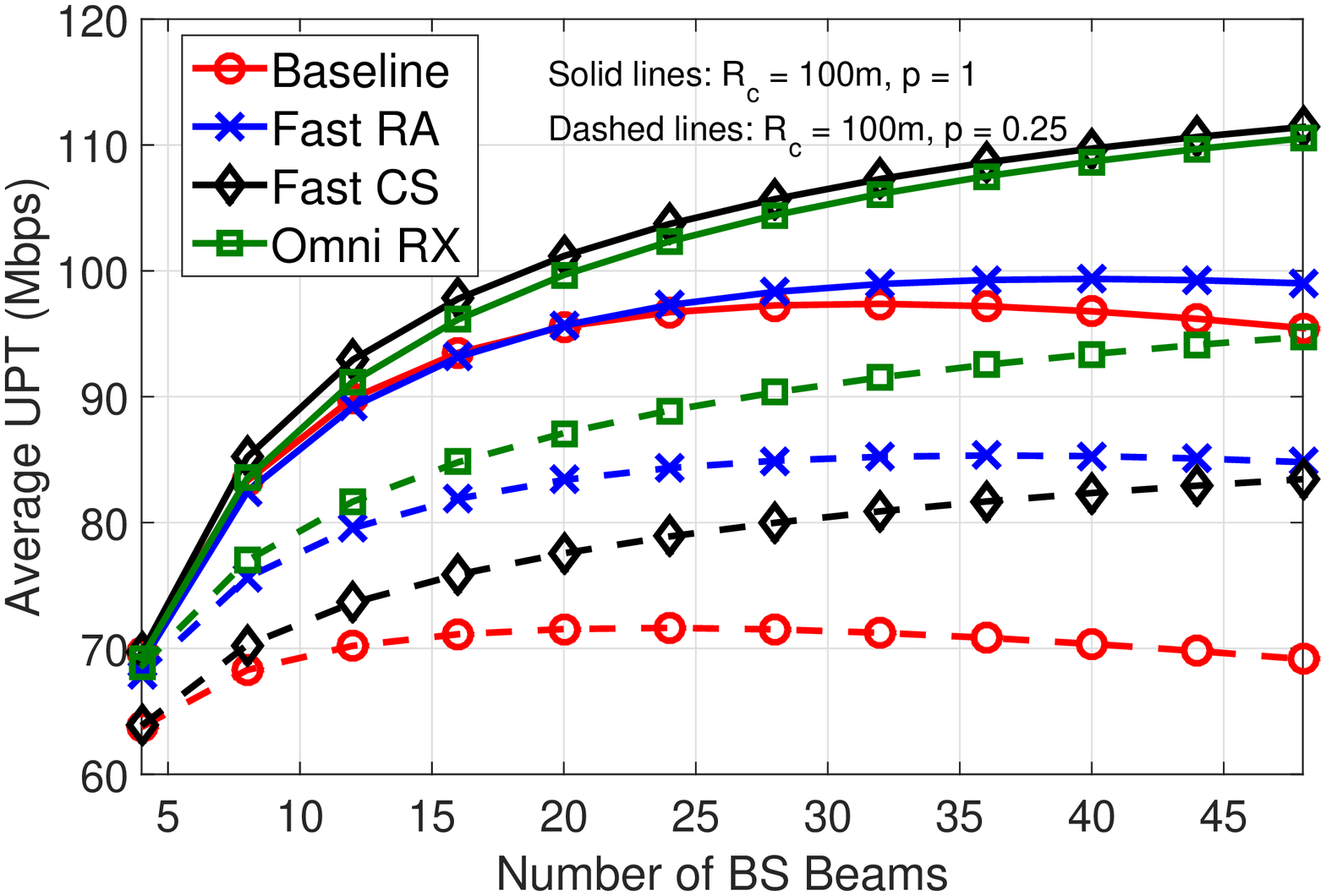}
		\caption{Generalized LOS ball model }   
		\label{UPT_Comp_LOS_Fig}
	\end{subfigure}
	\hfill
	\begin{subfigure}[b]{0.46\textwidth}
		\centering
		\includegraphics[height=1.7in, width=2.8in]{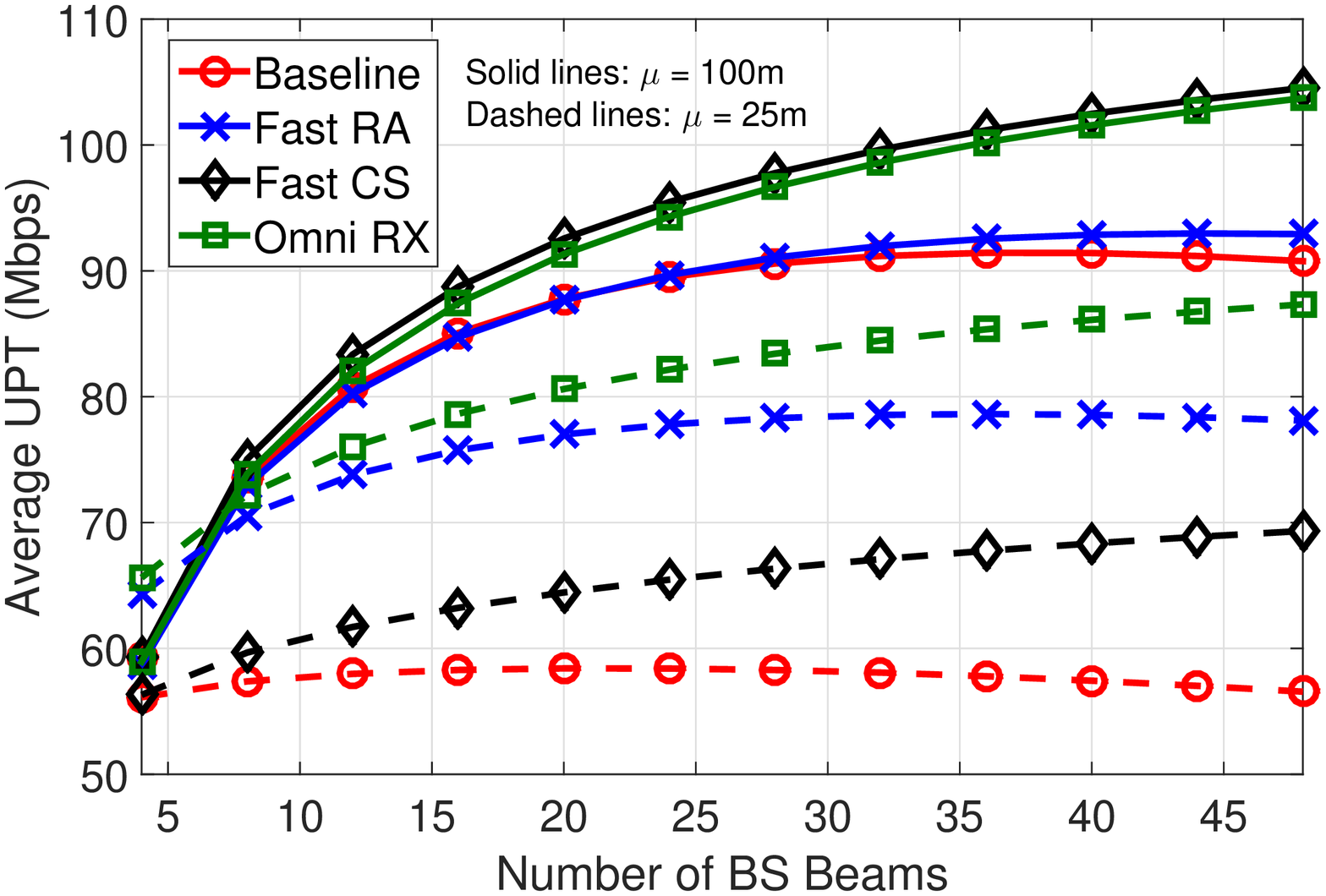}
		\caption{Exponential blockage model}
		\label{UPT_Comp_EXP_Fig}
	\end{subfigure}
	\caption{Comparison of average user-perceived downlink throughput.}\label{UPT_Comp_Fig}
\end{figure}

\section{Conclusions and Discussions}
This paper has proposed an analytical framework to investigate the effects of initial access protocol design on the system level performance of mmWave cellular networks. In particular, several new system design insights have been obtained by deriving the expected initial access delay, as well as a new metric called average user-perceived downlink throughput for four sample initial access protocols. Specifically, compared to LTE, where initial access is performed omni-directionally, we have shown that beam sweeping needs to be applied to cell search for mmWave cellular networks. For the baseline exhaustive search protocol, as blockage becomes more severe, the optimal BS beamwidth in terms of the initial access delay is found to decrease. However, the optimal BS beamwidth in terms of user-perceived downlink throughput is fairly constant and is within $[10^\circ, 18^\circ]$. Among the four initial access protocols that are investigated, the baseline protocol achieves the smallest initial access delay when blockage is severe, while the omni RX protocol generally provides the highest user-perceived downlink throughput. By contrast, the best trade-off between the initial access delay and user-perceived downlink throughput can be achieved by the fast CS protocol, wherein BSs transmit with relatively wide beams and users apply beam sweeping during cell search. 

By investigating the basic initial access protocols, this paper has shown that mmWave network performance depends heavily on the initial access design. Future work can leverage the proposed analytical framework to investigate the performance of more enhanced initial access protocols such as hierarchical beamforming~\cite{ieee2012IEEEad}, or to optimize various beam sweeping approaches as opposed to this paper where either omni or complete beam sweeping is used. In addition, the characterization of the uplink throughput is also an important future topic to investigate. 

\section*{Acknowledgments}
This work is supported in part by the National Science Foundation under Grant No. NSF-CCF-1218338 and an award from the Simons Foundation (\#197982), both to the University of Texas at Austin.

\appendices
\section{Proof of Lemma~\ref{RA_SINR_COP_Lemma}} \label{RA_SINR_COP_Lemma_Proof_Apdx}
Without loss of generality, denote the location of the tagged BS by $x_0$, and assume the first transmit beam of the typical user and the $m$-th ($1 \leq m \leq M_{ra}$) receive beam of the tagged BS are aligned. Since RA preamble sequences are randomly chosen, the users with the same RA preamble as the typical user 
form a PPP with intensity $\lambda_u \frac{\hat{P}_{M_{cs},N_{cs}}(\Gamma_{cs})}{N_{PA}}$, which is denoted by $\Phi_u^{'}$. Depending on whether the link to the tagged BS with distance $r$ is LOS or not, the users in $\Phi_u^{'}$ are further divided into two non-homogeneous PPPs $\Phi_{u,L}^{'}$ and $\Phi_{u,N}^{'}$, with the intensities being $\lambda_u \frac{\hat{P}_{M_{cs},N_{cs}}(\Gamma_{cs})}{N_{PA}} h(r)$ and $\lambda_u \frac{\hat{P}_{M_{cs},N_{cs}}(\Gamma_{cs})}{N_{PA}}(1-h(r))$ respectively.

When the typical user and the tagged BS are beam aligned, the SINR of the typical user's RA preamble sequence at the tagged BS is given by:
\allowdisplaybreaks
\begin{align}\label{RA_SINR_COP_Lemma_Proof_Eq}
\allowdisplaybreaks
\!\!\!\! \text{SINR}_{PA}(Z_0) = \frac{F_0M_{ra} G(\frac{2\pi}{N})/Z_0}{\!\!\!\!\!\!\!\!\!\!\!\!\!\!\!\!\!\!\!\sum\limits_{u_{i}^L \in \Phi_{u,L}^{'} \cap S(x_0, \frac{2(m-1)\pi}{M_{ra}},\frac{2m\pi}{M_{ra}})}\!\!\!\!\!\!\!\!\!\!\!\!\!\!\!\!\!\! \frac{F_{i}^L (G(\frac{2\pi}{N})\delta_{i}^L + g(\frac{2\pi}{N})(1 -\delta_{i}^L))}{ \mathit{l}_{L}(\|u_{i}^L-x_0\|)} + \!\!\!\!\!\!\!\!\!\!\!\!\!\!\!\!\! \sum\limits_{u_{j}^N \in \Phi_{u,N}^{'} \cap S(x_0, \frac{2(m-1)\pi}{M_{ra}},\frac{2m\pi}{M_{ra}})} \!\!\!\!\!\!\!\!\!\!\!\!\!\!\!\!\!\!\! \frac{F_{j}^N (G(\frac{2\pi}{N})\delta_{j}^N + g(\frac{2\pi}{N})(1 -\delta_{j}^N)) }{ \mathit{l}_{N}(\|u_{j}^N-x_0\|)} + \frac{\sigma^2}{P_u }},
\end{align}
where $F_0$, $F_i^L$ and $F_j^N$ represent the Rayleigh fading channels from the users to the tagged BS. In addition, $\delta_{i}^L$ ($\delta_{j}^N$) is equal to 1 if the main lobe of $u_{i}^L$ ($u_{j}^N$) covers the tagged BS and $0$ otherwise. For the first three initial access protocols in Table~\ref{IAOptions}, the transmit beam directions for the users in $\Phi_u^{'}$ are decided from the cell search phase, which are assumed to be independent and uniformly distributed with $\mathbb{E}[\delta_{i}^L] = \mathbb{E}[\delta_{j}^N] = \frac{1}{N}$ for $\forall i, j$. For the omni RX protocol, $\delta_{i}^L$ ($\delta_{j}^N$) is $1$ if the beam direction of $u_{i}^L$($u_{j}^N$) is the same as the typical user, which has beamwidth $\frac{2\pi}{N}$.
Since all the fading variables are i.i.d. exponentially distributed, the PGFL of PPP can be applied to~(\ref{RA_SINR_COP_Lemma_Proof_Eq}) similar to Lemma~\ref{CSSucProbPerSectorLemma}, which completes the proof. 

\section{Proof of Lemma~\ref{CCDF_Data_SINR_Lemma}}\label{CCDF_Data_SINR_Proof_Apdx}
Since $\mathbb{P}(\text{SINR}_{DL} \geq T | e_0\delta_0 = 1) = \frac{\mathbb{P}(\text{SINR}_{DL} \geq T \cap e_0\delta_0 = 1)} {\eta_{IA}}$, we will focus on the derivation of $\mathbb{P}(\text{SINR}_{DL} \geq T \cap e_0\delta_0 = 1)$. 
Without loss of generality, we assume the $n$-th receive beam of the typical user and the $m$-th transmit beam of the tagged BS are aligned during data transmission. 
We denote by $S_i \triangleq S(o,\frac{2\pi(i-1)}{K_{cs}}, \frac{2\pi i }{K_{cs}})$ the $i$-th BS sector for $1 \leq i \leq K_{cs}$. Note that during data transmission, the typical user is able to receive from the BSs inside $S(o,\frac{2\pi(n-1)}{N},\frac{2\pi n}{N})$ under its main lobe. Therefore, there are $q = \frac{K_{cs}}{N}$ BS sectors within $S(o,\frac{2\pi(n-1)}{N},\frac{2\pi n}{N})$, which are denoted by $\tilde{S}_1, \tilde{S}_2, ..., \tilde{S}_q$, with $\tilde{S}_i = S(o,\frac{2\pi(n-1)}{N} + \frac{2\pi (i-1) }{K_{cs}}, \frac{2\pi(n-1)}{N} + \frac{2\pi i}{K_{cs}})$ for $ 1 \leq i \leq q$. 

Among all the BS sectors, we denote by $S_{i_1}, S_{i_2}, ..., S_{i_k}$ the sectors that are detected during cell search, where $1 \leq k \leq K_{cs}$. In addition, we assume $S_{i_1}, S_{i_2}, ..., S_{i_s}$ are among $\tilde{S}_1, \tilde{S}_2, ..., \tilde{S}_q$, where $\max(1,k-K_{cs}+q) \leq s \leq \min(q,k)$. 
Given $k$ and $s$, we can obtain that there are: (1) ${N\choose 1}$ choices for receive beam direction of typical user; (2) ${q \choose 1}$ choices for the BS sector containing the tagged BS among $\tilde{S}_1$ to $\tilde{S}_q$; (3) ${q-1 \choose s-1}$ number of combinations for the other $s-1$ detected BS sectors among $\tilde{S}_1$ to $\tilde{S}_q$; and (4) ${K_{cs} - q \choose k-s}$ number of combinations for the detected BS sectors that are not among $\tilde{S}_1$ to $\tilde{S}_q$. 
Thus we have the following relation: 
\begin{align}\label{Data_SINR_CCDF_Pf_Eq1}
&\mathbb{P}(\text{SINR}_{DL}  \geq \Gamma \cap e_0\delta_0 = 1) \nonumber\\
= & \int_{0}^{\infty} \sum\limits_{k=1}^{K_{cs}} \sum\limits_{s = \max(1,k-K_{cs}+q)}^{\min(q,k)} {N\choose 1} {K_{cs} - q \choose k-s} {q\choose 1}{q-1 \choose s-1}  \mathbb{P}(\text{SINR}_{DL} \geq \Gamma \cap A) P_{M_{cs},N_{cs}}^{k-s}(z,\Gamma_{cs}) \nonumber\\
&\times (1-P_{M_{cs},N_{cs}}(\Gamma_{cs}))^{K_{cs} - q - k +s} f_{Z_1}(z) {\rm d}z,
\end{align}
where $z$ in the integration represents the path loss from the typical user to the tagged BS. In addition, $A$ denotes the event that among $\tilde{S}_1 $ to $\tilde{S}_q$, $S_{i_1}$ contains the tagged BS which the typical user is successfully connected to, $S_{i_2}$ to  $S_{i_s}$ are detected during cell search, while the rest are not detected. From the definition of $A$, it is easy to obtain that:
\begin{align}\label{Data_SINR_CCDF_Pf_Eq2}
\mathbb{P}(A) = \tilde{P}_{M_{cs},N_{cs}}(z,\Gamma_{cs}) \times P_{co} \times P_{ra} (z,\Gamma_{ra}) \times P_{M_{cs},N_{cs}}^{s-1}(z,\Gamma_{cs}) \times (1-P_{M_{cs},N_{cs}}(\Gamma_{cs}))^{q-s}.
\end{align}
Since random access is an uplink procedure which does not dependent on the BS process given $z$, $\text{SINR}_{DL}$ can be expressed as follows given event $A$ happens:
\begin{align}\label{Data_SINR_CCDF_Pf_Eq3}
\text{SINR}_{DL} = \frac{P_b M G(\frac{2\pi}{N}) F_0 / z }{I_1+ I_2 + I_3 + I_4 + \sigma^2},
\end{align}
where: 
\allowdisplaybreaks
\begin{align}\label{Data_SINR_CCDF_Pf_Eq4}
\allowdisplaybreaks
\!\!\!\! I_1 &= \!\!\!\!\!\!\!\!\!\!\!\!\sum\limits_{x_i^L \in \Phi_L \cap (\cup_{j = 1}^{s} S_{i_j}) \cap B^c(o,\mathit{l}^{-1}_L(z))} \!\!\!\!\!\!\!\!\!\!\!\! P_b M G(\frac{2\pi}{N}) F_i^L \delta_{i}^{L} /\mathit{l}_L(\|x_i^L\|) + \!\!\!\!\!\! \!\!\!\!\!\!\sum\limits_{x_i^N \in \Phi_N \cap(\cup_{j = 1}^{s} S_{i_j}) \cap B^c(o,\mathit{l}^{-1}_N(z))}  \!\!\!\!\!\!\!\!\!\!\!\!\!\!\!\! P_b M G(\frac{2\pi}{N}) F_i^N \delta_{i}^{N} /\mathit{l}_N(\|x_i^N\|) , \nonumber\\
\!\!\!\! I_2 &= \!\!\!\!\!\!\!\!\!\!\!\!\!\!\!\sum\limits_{x_i^L \in \Phi_L \cap (S(o,\frac{2\pi(i-1)}{K_{cs}}, \frac{2\pi i }{K_{cs}}) \setminus (\cup_{j = 1}^{s} S_{i_j})) } \!\!\!\!\!\!\!\!\!\!\!\!\!\!\!\!\!\!\!\!\!\!\!\!\!\!\!\! P_b M G(\frac{2\pi}{N}) F_i^L \delta_{i}^{L} /\mathit{l}_L(\|x_i^L\|) + \!\!\!\!\!\!\!\!\!\!\!\!\!\!\!\! \sum\limits_{x_i^N \in \Phi_N \cap (S(o,\frac{2\pi(i-1)}{K_{cs}}, \frac{2\pi i }{K_{cs}}) \setminus (\cup_{j = 1}^{s} S_{i_j}))}  \!\!\!\!\!\!\!\!\!\!\!\!\!\!\!\! P_b M G(\frac{2\pi}{N}) F_i^N \delta_{i}^{N} /\mathit{l}_N(\|x_i^N\|),\nonumber \\
\!\!\!\! I_3 &= \!\!\!\!\!\!\!\!\!\!\!\!\sum\limits_{x_i^L \in \Phi_L \cap (\cup_{j = s+1}^{k} S_{i_j}) \cap B^c(o,\mathit{l}^{-1}_L(z))} \!\!\!\!\!\!\!\!\!\!\!\! P_b M g(\frac{2\pi}{N}) F_i^L \delta_{i}^{L} /\mathit{l}_L(\|x_i^L\|) + \!\!\!\!\!\! \!\!\!\!\!\!\sum\limits_{x_i^N \in \Phi_N \cap(\cup_{j = s+1}^{k} S_{i_j}) \cap B^c(o,\mathit{l}^{-1}_N(z))}  \!\!\!\!\!\!\!\!\!\!\!\!\!\!\!\! P_b M g(\frac{2\pi}{N}) F_i^N \delta_{i}^{N} /\mathit{l}_N(\|x_i^N\|) , \nonumber\\
\!\!\!\! I_4 &= \!\!\!\!\!\!\!\!\!\!\!\!\!\!\!\sum\limits_{x_i^L \in \Phi_L  \setminus (S(o,\frac{2\pi(i-1)}{K_{cs}}, \frac{2\pi i }{K_{cs}}) \cup (\cup_{j = s+1}^{k} S_{i_j})) } \!\!\!\!\!\!\!\!\!\!\!\!\!\!\!\!\!\!\!\!\!\!\!\!\!\!\!\! P_b M g(\frac{2\pi}{N}) F_i^L \delta_{i}^{L} /\mathit{l}_L(\|x_i^L\|) + \!\!\!\!\!\!\!\!\!\!\!\!\!\!\!\! \sum\limits_{x_i^N \in \Phi_N  \setminus (S(o,\frac{2\pi(i-1)}{K_{cs}}, \frac{2\pi i }{K_{cs}}) \cup (\cup_{j = s+1}^{k} S_{i_j}))}  \!\!\!\!\!\!\!\!\!\!\!\!\!\!\!\!\!\!\!\! P_b M g(\frac{2\pi}{N}) F_i^N \delta_{i}^{N} /\mathit{l}_N(\|x_i^N\|).
\end{align}
$I_1$ and $I_2$ ($I_3$ and $I_4$) represent the interference from BSs that user receives under its main lobe (side lobe), which come from the BS sectors that are detected and not detected during cell search respectively. In~(\ref{Data_SINR_CCDF_Pf_Eq4}), $F_0$, $F_i^L$, and $F_i^N$ represent the exponential fading variables. 
The indicators $\delta_{i}^{L}$ and $\delta_{i}^{N}$ in~(\ref{Data_SINR_CCDF_Pf_Eq4}) represent whether the transmit beam direction of the interfering BS covers the typical user or not, which happens with probability $\frac{1}{M}$. 
Therefore, based on~(\ref{Data_SINR_CCDF_Pf_Eq3}) and~(\ref{Data_SINR_CCDF_Pf_Eq4}), as well as the PGFL of PPPs, we can derive the following result:
\begin{align}\label{Data_SINR_CCDF_Pf_Eq5}
\allowdisplaybreaks
\mathbb{P}(\text{SINR}_{DL} > \Gamma | A) = &\exp\biggl(-\frac{\Gamma z\sigma^2}{P_bM G(\frac{2\pi}{N})}\biggl) \biggl(V(z,\Gamma,\frac{\lambda }{ M K_{cs}}) \biggl)^s \biggl( U(z,\Gamma,\frac{\lambda}{M K_{cs}}) \biggl)^{q-s} \nonumber \\ &\times \biggl(V(z,\frac{g(2\pi/N)}{G(2\pi/N)}\Gamma,\frac{\lambda }{ M K_{cs}}) \biggl)^{k-s} \biggl( U(z,\frac{g(2\pi/N)}{G(2\pi/N)}\Gamma,\frac{\lambda}{M K_{cs}}) \biggl)^{K_{cs}-q-k+s}.
\end{align}

For $\forall a,b,c,d \in \mathbb{R}$, we have:
\allowdisplaybreaks
\allowdisplaybreaks
\begin{align}\label{Data_SINR_CCDF_Pf_Eq6}
\allowdisplaybreaks
\!\!\!\!\!(a+b)^{q-1} (c+d)^{K_{cs}-q} = &\sum\limits_{m=0}^{K_{cs}-q} \sum\limits_{l=0}^{q-1} {K_{cs} - q \choose m} {q-1 \choose l} a^{l} b^{q-1-l} c^{m} d^{K_{cs}-q-m} \nonumber\\
\overset{(a)}{=} & \sum\limits_{k=1}^{K_{cs}} \sum\limits_{s = \max(1,k-K_{cs}+q)}^{\min(q,k)} {K_{cs} - q \choose k-s} {q-1 \choose s-1} a^{s-1} b^{q-s} c^{k-s} d^{K_{cs}-k-q+s},
\end{align}
where (a) is obtained by letting $k = m + l +1$ and $s = l+1$. Finally, the proof is concluded by substituting (\ref{Data_SINR_CCDF_Pf_Eq2}), (\ref{Data_SINR_CCDF_Pf_Eq5}) and (\ref{Data_SINR_CCDF_Pf_Eq6}) into (\ref{Data_SINR_CCDF_Pf_Eq1}).

\bibliographystyle{ieeetr}
\bibliography{reference}

\begin{thebibliography}{10}

\bibitem{li2016mmWIAGC}
Y.~Li, J.~G. Andrews, F.~Baccelli, T.~D. Novlan, and J.~Zhang, ``On the initial
  access design in millimeter wave cellular networks,'' in {\em 2016 IEEE
  Globecom Workshops (GC Wkshps)}, pp.~1--6, Dec. 2016.

\bibitem{andrews2014will}
J.~G. Andrews, S.~Buzzi, W.~Choi, S.~V. Hanly, A.~Lozano, A.~C. Soong, and
  J.~C. Zhang, ``What will {5G} be?,'' {\em IEEE Journal on Selected Areas in
  Communications}, vol.~32, pp.~1065--1082, Jun. 2014.

\bibitem{pi2011introduction}
Z.~Pi and F.~Khan, ``An introduction to millimeter-wave mobile broadband
  systems,'' {\em IEEE Communications Magazine}, vol.~49, pp.~101--107, Jun.
  2011.

\bibitem{rappaport2013millimeter}
T.~S. Rappaport, S.~Sun, R.~Mayzus, H.~Zhao, Y.~Azar, K.~Wang, G.~N. Wong,
  J.~K. Schulz, M.~Samimi, and F.~Gutierrez, ``Millimeter wave mobile
  communications for 5{G} cellular: It will work!,'' {\em IEEE Access}, vol.~1,
  pp.~335--349, May 2013.

\bibitem{ghosh2014millimeter}
A.~Ghosh, T.~Thomas, M.~C. Cudak, R.~Ratasuk, P.~Moorut, F.~W. Vook, T.~S.
  Rappaport, G.~R. MacCartney, S.~Sun, S.~Nie, {\em et~al.}, ``Millimeter-wave
  enhanced local area systems: A high-data-rate approach for future wireless
  networks,'' {\em IEEE Journal on Selected Areas in Communications}, vol.~32,
  pp.~1152--1163, Jul. 2014.

\bibitem{stochtutorial}
M.~Haenggi, J.~Andrews, F.~Baccelli, O.~Dousse, and M.~Franceschetti,
  ``Stochastic geometry and random graphs for the analysis and design of
  wireless networks,'' {\em IEEE Journal on Selected Areas in Communications},
  vol.~27, pp.~1029--1046, Sept. 2009.

\bibitem{baccelli2010stochastic}
F.~Baccelli and B.~Blaszczyszyn, {\em Stochastic Geometry and Wireless
  Networks: Volume 1: THEORY}.
\newblock Now Publishers Inc, 2010.

\bibitem{chiu2013stochastic}
S.~N. Chiu, D.~Stoyan, W.~S. Kendall, and J.~Mecke, {\em Stochastic geometry
  and its applications}.
\newblock John Wiley \& Sons, 2013.

\bibitem{trac}
J.~Andrews, F.~Baccelli, and R.~Ganti, ``A tractable approach to coverage and
  rate in cellular networks,'' {\em IEEE Transactions on Communications},
  vol.~59, pp.~3122--3134, Nov. 2011.

\bibitem{mukherjee2012distribution}
S.~Mukherjee, ``Distribution of downlink {SINR} in heterogeneous cellular
  networks,'' {\em IEEE Journal on Selected Areas in Communications}, vol.~30,
  pp.~575--585, Apr. 2012.

\bibitem{ieee2012IEEEad}
{IEEE Std 802.11ad-2012}, ``Part 11: Wireless {LAN} medium access control
  ({MAC}) and physical layer ({PHY}) specifications,'' Dec. 2012.

\bibitem{Nitsche2014IEEE}
T.~Nitsche, C.~Cordeiro, A.~B. Flores, E.~W. Knightly, E.~Perahia, and J.~C.
  Widmer, ``{IEEE} 802.11ad: directional 60 {GHz} communication for
  multi-gigabit-per-second {Wi-Fi},'' {\em IEEE Communications Magazine},
  vol.~52, pp.~132--141, Dec. 2014.

\bibitem{Verizon20165G2}
{TS V5G.213}, ``Verizon {5G} radio access ({V5G RA}); physical layer
  procedures,'' Jun. 2016.

\bibitem{Jeong2016random}
C.~Jeong, J.~Park, and H.~Yu, ``Random access in millimeter-wave beamforming
  cellular networks: issues and approaches,'' {\em IEEE Communications
  Magazine}, vol.~53, pp.~180--185, Jan. 2015.

\bibitem{Desai2014initial}
V.~Desai, L.~Krzymien, P.~Sartori, W.~Xiao, A.~Soong, and A.~Alkhateeb,
  ``Initial beamforming for {mmWave} communications,'' in {\em 2014 48th
  Asilomar Conference on Signals, Systems and Computers}, pp.~1926--1930, Nov.
  2014.

\bibitem{Giordani2016comparative}
M.~Giordani, M.~Mezzavilla, C.~Barati, S.~Rangan, and M.~Zorzi, ``Comparative
  analysis of initial access techniques in {5G mmWave} cellular networks,'' in
  {\em 2016 Annual Conference on Information Science and Systems},
  pp.~268--273, Mar. 2016.

\bibitem{Raghavan2016beamforming}
V.~Raghavan, J.~Cezanne, S.~Subramanian, A.~Sampath, and O.~Koymen,
  ``Beamforming tradeoffs for initial {UE} discovery in millimeter-wave {MIMO}
  systems,'' {\em IEEE Journal of Selected Topics in Signal Processing},
  vol.~10, pp.~543--559, Apr. 2016.

\bibitem{barati2015directional}
C.~N. Barati, S.~A. Hosseini, M.~Mezzavilla, T.~Korakis, S.~S. Panwar,
  S.~Rangan, and M.~Zorzi, ``Initial access in millimeter wave cellular
  systems,'' {\em IEEE Transactions on Wireless Communications}, vol.~15,
  pp.~7926--7940, Dec. 2016.

\bibitem{Shokri2015Millimeter}
H.~Shokri-Ghadikolaei, C.~Fischione, G.~Fodor, P.~Popovski, and M.~Zorzi,
  ``Millimeter wave cellular networks: A {MAC} layer perspective,'' {\em IEEE
  Transactions on Communications}, vol.~63, pp.~3437--3458, Oct. 2015.

\bibitem{akoum2012covrage}
S.~Akoum, O.~El~Ayach, and R.~Heath, ``Coverage and capacity in mmwave cellular
  systems,'' in {\em Asilomar Conference on Signals, Systems and Computers
  (ASILOMAR)}, pp.~688--692, Nov. 2012.

\bibitem{singh2015tractable}
S.~Singh, M.~Kulkarni, A.~Ghosh, and J.~Andrews, ``Tractable model for rate in
  self-backhauled millimeter wave cellular networks,'' {\em IEEE Journal on
  Selected Areas in Communications}, vol.~33, pp.~2196--2211, Oct. 2015.

\bibitem{Jihong2016Tractable}
J.~Park, S.~L. Kim, and J.~Zander, ``Tractable resource management with uplink
  decoupled millimeter-wave overlay in ultra-dense cellular networks,'' {\em
  IEEE Transactions on Wireless Communications}, vol.~15, pp.~4362--4379, Jun.
  2016.

\bibitem{bai2015coverage}
T.~Bai and R.~Heath, ``Coverage and rate analysis for millimeter-wave cellular
  networks,'' {\em IEEE Transactions on Wireless Communications}, vol.~14,
  pp.~1100--1114, Feb. 2015.

\bibitem{alkhateeb2016initial}
A.~Alkhateeb, Y.~H. Nam, M.~S. Rahman, C.~Zhang, and R.~Heath, ``Initial beam
  association in millimeter wave cellular systems: Analysis and design
  insights,'' {\em IEEE Transactions on Wireless Communications}, vol.~PP,
  no.~99, pp.~1--1, 2017.

\bibitem{DiRenzo2015Stochastic}
M.~D. Renzo, ``Stochastic geometry modeling and analysis of multi-tier
  millimeter wave cellular networks,'' {\em IEEE Transactions on Wireless
  Communications}, vol.~14, pp.~5038--5057, Sept. 2015.

\bibitem{guo2014ADG}
A.~Guo and M.~Haenggi, ``Asymptotic deployment gain: A simple approach to
  characterize the sinr distribution in general cellular networks,'' {\em IEEE
  Transactions on Communications}, vol.~63, pp.~962--976, Mar. 2015.

\bibitem{blaszczyszyn2015wireless}
B.~B{\l}aszczyszyn, M.~K. Karray, and H.~P. Keeler, ``Wireless networks appear
  {P}oissonian due to strong shadowing,'' {\em IEEE Transactions on Wireless
  Communications}, vol.~14, pp.~4379--4390, Apr. 2015.

\bibitem{andrews2017modeling}
J.~G. Andrews, T.~Bai, M.~Kulkarni, A.~Alkhateeb, A.~Gupta, and R.~W. Heath,
  ``Modeling and analyzing millimeter wave cellular systems,'' {\em IEEE
  Transactions on Communications}, vol.~65, pp.~403--430, Jan. 2017.

\bibitem{gupta2015feasibility}
A.~K. Gupta, J.~G. Andrews, and R.~W. Heath, ``On the feasibility of sharing
  spectrum licenses in mmwave cellular systems,'' {\em IEEE Transactions on
  Communications}, vol.~64, pp.~3981--3995, Sept. 2016.

\bibitem{bai2014coveragemag}
T.~Bai, A.~Alkhateeb, and R.~Heath, ``Coverage and capacity of millimeter-wave
  cellular networks,'' {\em IEEE Communications Magazine}, vol.~52, pp.~70--77,
  Sept. 2014.

\bibitem{kulkarni2014coverage}
M.~N. Kulkarni, S.~Singh, and J.~G. Andrews, ``Coverage and rate trends in
  dense urban mmwave cellular networks,'' in {\em IEEE Global Communications
  Conference (GLOBECOM)}, pp.~3809--3814, Dec. 2014.

\bibitem{waterhouse2002broadband}
R.~B. Waterhouse, D.~Novak, A.~Nirmalathas, and C.~Lim, ``Broadband printed
  sectorized coverage antennas for millimeter-wave wireless applications,''
  {\em IEEE Transactions on Antennas and Propagation}, vol.~50, pp.~12--16,
  Aug. 2002.

\bibitem{li2016performance}
Y.~Li, J.~G. Andrews, F.~Baccelli, T.~D. Novlan, and J.~C. Zhang, ``Performance
  analysis of millimeter-wave cellular networks with two-stage beamforming
  initial access protocols,'' in {\em 2016 50th Asilomar Conference on Signals,
  Systems and Computers}, pp.~1171--1175, Nov. 2016.

\bibitem{singh2014onassoication}
S.~Singh, F.~Baccelli, and J.~G. Andrews, ``On association cells in random
  heterogeneous networks,'' {\em IEEE Wireless Communications Letters}, vol.~3,
  pp.~70--73, Feb. 2014.

\bibitem{single2013offloading}
S.~Singh, H.~S. Dhillon, and J.~G. Andrews, ``Offloading in heterogeneous
  networks: Modeling, analysis, and design insights,'' {\em IEEE Transactions
  on Wireless Communications}, vol.~12, pp.~2484--2497, May 2013.

\bibitem{li2015statistical}
Y.~Li, F.~Baccelli, H.~Dhillon, and J.~Andrews, ``Statistical modeling and
  probabilistic analysis of cellular networks with determinantal point
  processes,'' {\em IEEE Transactions on Communications}, vol.~63,
  pp.~3405--3422, Sept. 2015.

\end{thebibliography}

\end{document}